\setlist[itemize,1]{label=$\bullet$}
\setlist[itemize,2]{label=$\bullet$}
\setlist[itemize,3]{label=$\bullet$}
\setlist[itemize,4]{label=$\bullet$}
\setlist[itemize,5]{label=$\bullet$}
\setlist[itemize,6]{label=$\bullet$}
\setlist[itemize,7]{label=$\bullet$}
\setlist[itemize,8]{label=$\bullet$}
\setlist[itemize,9]{label=$\bullet$}
\definecolor{linkcolor}{rgb}{0,0,0.5}
\renewcommand*\l@author[2]{}
\renewcommand*\l@title[2]{}
\DeclareSymbolFont{greek}     {OML}{cmm}{m}{it}
\DeclareMathSymbol{\alpha}{\mathord}{greek}{"0B}
\DeclareMathSymbol{\beta}{\mathord}{greek}{"0C}
\DeclareMathSymbol{\gamma}{\mathord}{greek}{"0D}
\DeclareMathSymbol{\delta}{\mathord}{greek}{"0E}
\DeclareMathSymbol{\epsilon}{\mathord}{greek}{"22}
\DeclareMathSymbol{\varepsilon}{\mathord}{greek}{"22}
\DeclareMathSymbol{\zeta}{\mathord}{greek}{"10}
\DeclareMathSymbol{\eta}{\mathord}{greek}{"11}
\DeclareMathSymbol{\theta}{\mathord}{greek}{"12}
\DeclareMathSymbol{\iota}{\mathord}{greek}{"13}
\DeclareMathSymbol{\kappa}{\mathord}{greek}{"14}
\DeclareMathSymbol{\lambda}{\mathord}{greek}{"15}
\DeclareMathSymbol{\mu}{\mathord}{greek}{"16}
\DeclareMathSymbol{\nu}{\mathord}{greek}{"17}
\DeclareMathSymbol{\xi}{\mathord}{greek}{"18}
\DeclareMathSymbol{\pi}{\mathord}{greek}{"19}
\DeclareMathSymbol{\rho}{\mathord}{greek}{"1A}
\DeclareMathSymbol{\sigma}{\mathord}{greek}{"1B}
\DeclareMathSymbol{\tau}{\mathord}{greek}{"1C}
\DeclareMathSymbol{\upsilon}{\mathord}{greek}{"1D}
\DeclareMathSymbol{\phi}{\mathord}{greek}{"27}
\DeclareMathSymbol{\chi}{\mathord}{greek}{"1F}
\DeclareMathSymbol{\psi}{\mathord}{greek}{"20}
\DeclareMathSymbol{\omega}{\mathord}{greek}{"21}
\DeclareMathSymbol{\vartheta}{\mathord}{greek}{"23}
\DeclareMathSymbol{\varpi}{\mathord}{greek}{"24}
\DeclareMathSymbol{\varrho}{\mathord}{greek}{"25}
\DeclareMathSymbol{\varsigma}{\mathord}{greek}{"26}
\DeclareMathSymbol{\varphi}{\mathord}{greek}{"27}
\begin{document}

\overfullrule=5pt
\hyphenation{Brow-serID}
\hyphenation{in-fra-struc-ture}
\hyphenation{brow-ser}
\hyphenation{doc-u-ment}
\hyphenation{Chro-mi-um}
\hyphenation{meth-od}
\hyphenation{sec-ond-ary}
\hyphenation{Java-Script}
\hyphenation{Mo-zil-la}
\hyphenation{post-Mes-sage}

\title{Analyzing the BrowserID SSO System\\ with Primary Identity Providers\\ Using an Expressive Model of the Web}
\author{Daniel Fett \and Ralf Küsters \and Guido Schmitz}
\institute{University of Trier, Germany\\\email{\{fett,kuesters,schmitzg\}@uni-trier.de}}

\maketitle

\ifdraft{
\listoftodos
}{ }

\begin{abstract}
  BrowserID is a complex, real-world Single Sign-On (SSO)
System for web applications recently developed by
Mozilla. It employs new HTML5 features (such as web
messaging and web storage) and cryptographic assertions to
provide decentralized login, with the intent to respect
users' privacy. It can operate in a primary and a
secondary identity provider mode. While in the primary mode
BrowserID runs with arbitrary identity providers, in the
secondary mode there is one identity provider only, namely
Mozilla's default identity provider.

We recently proposed an
expressive general model for the web infrastructure and,
based on this web model, analyzed the security of the
secondary identity provider mode of BrowserID. The analysis
revealed several severe vulnerabilities, which have been
fixed by Mozilla.

In this paper, we complement our prior work by analyzing
the even more complex primary identity provider mode of
BrowserID. We do not only study authentication properties
as before, but also privacy properties. During our analysis
we discovered new and practical attacks that do not apply
to the secondary mode: an identity injection attack, which
violates a central authentication property of SSO systems,
and attacks that break an important privacy promise of
BrowserID and which do not seem to be fixable without a
major redesign of the system. Interestingly, some of our
attacks on privacy make use of a browser side channel that,
to the best of our knowledge, has not gained a lot of
attention so far.

For the authentication bug, we propose a fix and formally
prove in a slight extension of our general web model that
the fixed system satisfies all the authentication
requirements we consider. This constitutes the most complex
formal analysis of a web application based on an expressive
model of the web infrastructure so far.

As another contribution, 
we identify and prove important security properties of
generic web features in the extended web model to
facilitate future analysis efforts of web standards and web
applications.

\end{abstract}

\tableofcontents
\newpage

\section{Introduction}

Single sign-on (SSO) systems have become an important
building block for authentication in the web. Over the last
years, many different SSO systems have been developed, for
example, OpenID, OAuth, and proprietary solutions such as
Facebook Connect. These systems usually allow a user to
identify herself to a so-called relying party (RP), which
provides some service, using an identity that is managed by
an identity provider (IdP), such as Facebook or Google.

Given their role as brokers between IdPs and RPs, the
security of SSO systems is particularly crucial: numerous
attacks have shown that vulnerabilities in SSO systems
compromise the security of many services and users at
once (see, e.g.,
\cite{BansalBhargavanMaffeis-CSF-2012,SantsaiBeznosov-CCS-2012-OAuth,ArmandoEtAl-FMSE-2008,SomorovskyMayerSchwenkKampmannJensen-USENIX-2012,WangChenWang-SP-2012-SSO,SantsaiHaekyBeznosov-CS-2012}).

BrowserID \cite{mozilla/persona} is a relatively new
complex SSO system which allows users to utilize any of
their existing email addresses as an identity. BrowserID,
which is also known by its marketing name \emph{Persona},
has been developed by Mozilla and provides decentralized
and federated login, with the intent to respect users'
privacy: While in other SSO systems (such as OpenID), by
design, IdPs can always see when and where
their users log in, Mozilla's intention behind the design
of BrowserID was that such tracking should not be
possible. Several web applications support BrowserID
authentication. For example, popular content management
systems, such as Drupal and WordPress allow users to log in
using BrowserID. Also Mozilla uses this SSO system on
critical web sites, e.g., their bug tracker Bugzilla and
their developer network MDN. 

The BrowserID implementation is based solely on native web
technologies. It uses many new HTML5 web features, such as
web messaging and web storage. For example, BrowserID uses
the postMessage mechanism for cross-origin inter-frame
communication (i.e., communication within a browser between
different windows) and the web storage concept of modern
browsers to store user data on the client side.

There are two modes for BrowserID: For the best user
experience, email providers (IdPs) can actively support
BrowserID; they are then called \emph{primary IdPs}. For all
other email providers that do not support BrowserID,
the user can register her email address at a default IdP,
namely Mozilla's \nolinkurl{login.persona.org}, the
so-called \emph{secondary IdP}.

In \cite{FettKuestersSchmitz-SP-2014}, we proposed a general and expressive Dolev-Yao style
model for the web infrastructure. This web model is
designed independently of a specific web application and
closely mimics published (de-facto) standards and
specifications for the web, for instance, the HTTP/1.1 and HTML5
standards and associated (proposed) standards (main\-ly RFCs). It is the most
comprehensive web model to date. Among others, HTTP(S)
requests and responses, including several headers, such as
cookie, location, strict transport security (STS), and
origin headers, are modeled. The model of web browsers
captures the concepts of windows, documents, and iframes,
including the complex navigation rules, as well as new
technologies, such as web storage and cross-document
messaging (postMessages). JavaScript is modeled in an
abstract way by so-called scripting processes which can be
sent around and, among others, can create iframes and
initiate XMLHTTPRequests (XHRs). Browsers may be corrupted
dynamically by the adversary.

Based on this general web model, we analyzed the
security of the secondary IdP mode of BrowserID
\cite{FettKuestersSchmitz-SP-2014}. The analysis revealed
several severe vulnerabilities, which have since been fixed
by Mozilla.

\subsubsection{Contributions of this Paper.} The main
contributions of this paper are that we i) analyze
authentication and privacy properties for the primary mode
of BrowserID, where in both cases the analysis revealed new
attacks, ii) identify generic web security properties to
ease future analysis efforts, and iii) slightly extend our
web model. 

As mentioned before, in \cite{FettKuestersSchmitz-SP-2014},
we studied the simpler secondary mode of BrowserID
only. The primary model studied here is much more complex
than the secondary mode (see also the remarks in
Section~\ref{sec:javascript-descr}). It involves more
components (such as an arbitrary set of IdPs, more
iframes), a much more complex communication structure, and
requires weaker trust assumptions (for example, some IdPs,
and hence, the JavaScript they deliver, might be
malicious).  Also, in our previous work, we have not
considered privacy properties, but authentication
properties only.

More specifically, the contributions of this paper can be
summarized as follows.

\paragraph{Extension of the Web Model.} We slightly extend
our web model proposed in
\cite{FettKuestersSchmitz-SP-2014}. We complement the
modeling of the web storage concept of modern browsers by
adding sessionStorage \cite{w3c/webstorage}, which is
(besides the already modeled localStorage) heavily used by
BrowserID in its primary mode.
We also extend the model to include a set of user
identities (e.g., user names or email addresses) in
addition to user secrets.

\paragraph{Authentication Attack and Security Proof for BrowserID.}
The authentication properties we analyze are central to any
SSO system and correspond to those considered in our previous work: i) the attacker should not be able to log in at an RP
as an honest user and ii) the attacker should not be able
to authenticate an honest user/browser to an RP with an ID
not owned by the user (identity injection). While trying to
prove these authentication properties for the primary mode
of BrowserID, we discovered a new attack which violates
property ii). Depending on the service provided by the RP,
this could allow the attacker to track the honest user or
to obtain user secrets. We confirmed the attack on the
actual implementation and reported it to Mozilla, who
acknowledged the attack.  We note that this attack does not
apply to the secondary mode.

We propose a fix and provide a detailed formal proof based
on the (extended) web model which shows that the fixed
system satisfies the mentioned authentication
properties. This constitutes the most complex formal
analysis of a web application based on an expressive model
of the web infrastructure, in fact, as mentioned, the most
comprehensive one to date. We note that other web models
are too limited to be applied to BrowserID (see also
Section~\ref{sec:relatedwork}).

\paragraph{Privacy Attacks on BrowserID.}  As pointed out
before, BrowserID was designed by Mozilla with the explicit
intention to respect users' privacy. Unlike in other SSO
systems, when using BrowserID, IdPs should not learn to
which RP a user logs in. When trying to formally prove this
property, we discovered attacks that show that BrowserID
cannot live up to this claim. Our attacks allow malicious
IdPs to check whether or not a user is logged in at a
specific RP with little effort. Interestingly, one variant
of these attacks exploits a browser side channel which, to
our knowledge, has not received much attention in the literature so
far. Just as for authentication, we have confirmed the
attacks on the actual implementation and reported them to
Mozilla~\cite{mozilla-browserid-primary-privacy-bug-report},
who acknowledged the attacks.  Unfortunately, the attacks
exploit a design flaw of BrowserID that does not seem to be
easily fixable without a major redesign.

\paragraph{Generic Web Security Properties.} Our security
analysis of BrowserID and the case study
 in \cite{FettKuestersSchmitz-SP-2014} show that
certain security properties of the web model need to be
established in most security proofs for web standards and
web applications. As another contribution, we
therefore identify and summarize central security
properties of generic web features in our extension of our
model and formalize them in a general way
such that they can be used in and facilitate future
analysis efforts of web standards and web applications.

\subsubsection{Structure of this Paper.}
In Section~\ref{sec:webmodel}, we present the basic
communication model and the web model, including our
extensions. We deduce general properties of this model,
which are independent of specific web applications, in
Section~\ref{sec:generalproperties}. For our security
analysis, we first, in Section~\ref{sec:browserid}, provide
a description of the BrowserID system, focusing on the
primary mode. We then, in
Section~\ref{sec:analysisbrowserid}, present our attack and
the formal analysis of the authentication properties of
the (fixed) BrowserID system in primary mode. In
Section~\ref{sec:privacyanalysis-pidp}, we present our
attacks on privacy of BrowserID. Related work is discussed
in Section~\ref{sec:relatedwork}. We conclude in
Section~\ref{sec:conclusion}. Full details can be found in
the appendix.

\section{The Web Model}\label{sec:webmodel}

In this section, we present the model of the web
infrastructure as proposed in
\cite{FettKuestersSchmitz-SP-2014}, along with our
extensions (sessionStorage and user identities) mentioned
in the introduction, with full details, most of which taken
from \cite{FettKuestersSchmitz-SP-2014}, provided in
Appendices~\ref{app:communication-model}
to~\ref{sec:deta-descr-brows}. We first present the generic
Dolev-Yao style communication model which the model is
based on.

\subsection{Communication Model}\label{sec:communicationmodel}
The main entities in the communication model are
\emph{atomic processes}, which will be used to model web
browsers, web servers, DNS servers as well as web and
network attackers. Each \ap has a list of addresses
(representing IP addresses) it listens to.  A set of \aps
forms what is called a \emph{system}. The different atomic
processes in such a system can communicate via events,
which consist of a message as well as a receiver and a
sender address.  In every step of a run, one event is
chosen non-deterministically from the current ``pool'' of
events and is delivered to an \ap that listens to the
receiver address of that event; if different atomic
processes can listen to the same address, the atomic
process to which the event is delivered is chosen
non-deterministically among the possible processes. The
(chosen) atomic process can then process the event and
output new events, which are added to the pool of events,
and so on. More specifically, messages, processes, etc.~are
defined as follows.

\subsubsection{Terms, Messages and Events.} 
As usual in Dolev-Yao models (see, e.g.,
\cite{AbadiFournet-POPL-2001}), messages are expressed as
formal terms over a signature. Later messages may, for
instance, represent HTTP(S) requests and responses.

The signature $\Sigma$ for the terms and messages
considered in this work is the union of the following
pairwise disjoint sets of function symbols: (1)
constants $C = \addresses\,\cup\,
\mathbb{S}\cup \{\True,\bot,\notdef\}\cup\, \nonces$
($\addresses$ for (IP) addresses, $\mathbb{S}$ for ASCII
strings, and $\nonces$ for an infinite set of nonces) where
the four sets are pairwise disjoint, (2) function symbols
for public keys, asymmetric/symmetric
encryption/decryption, and digital signatures:
$\mathsf{pub}(\cdot)$, $\enc{\cdot}{\cdot}$,
$\dec{\cdot}{\cdot}$, $\encs{\cdot}{\cdot}$,
$\decs{\cdot}{\cdot}$, $\sig{\cdot}{\cdot}$,
$\checksig{\cdot}{\cdot}$, $\unsig{\cdot}$, (3) $n$-ary
sequences $\an{}, \an{\cdot}, \an{\cdot,\cdot},
\an{\cdot,\cdot,\cdot},$ etc., and (4) projection symbols
$\pi_i(\cdot)$ for all $i \in \mathbb{N}$. \emph{Ground
  terms} over this signature are terms that do not contain
variables. These terms represent messages. By $\messages$
we denote the set of messages. An \emph{event (over $\addresses$ and $\messages$)} is of
the form $(a{:}f{:}m)$, for $a, f\in \addresses$ and $m \in
\messages$, where $a$ is interpreted to be the receiver
address and $f$ is the sender address.

For example, $k\in \nonces$ and $\pub(k)$ are messages,
where $k$ typically models a private key and $\pub(k)$ the
corresponding public key. For strings $\str{a},\str{b}\in \mathbb{S}$
and the nonce $k\in \nonces$, the message
$\enc{\an{\str{a},\str{b}}}{\pub(k)}$ is interpreted to be the message
$\an{\str{a},\str{b}}$ (the sequence of strings $\str{a}$ and $\str{b}$) encrypted
under the public key $\pub(k)$.

The \emph{equational theory} associated with the signature
$\Sigma$ is defined as usual in Dolev-Yao models. It
captures the meaning of the function symbols in
$\Sigma$. For instance, one equation is $\dec{\enc
  x{\pub(y)}}{y}=x$ and another
$\pi_i(\an{x_1,\ldots,x_n})=x_i$ for $1\le i\le n$. We have
that
$\pi_1(\dec{\enc{\an{\str{a},\str{b}}}{\pub(k)}}{k})\equiv
\str{a}$.

\subsubsection{Atomic Processes, Systems and Runs.} Atomic
Dolev-Yao processes, systems, and runs of systems are
defined as follows.

A \emph{(generic) atomic process} is a tuple $p = (I^p,
Z^p, R^p, s^p_0)$ where $I^p$ is a set of addresses (the
set of address the process listens to), $Z^p$ is a set of
states (formally, terms), $s^p_0\in Z^p$ is an initial
state, and $R^p$ is a relation that takes an event and a
state as input and (non-deterministically) returns a new
state and a set of events. This relation models a
non-deterministic computation step of the process, which
upon receiving an event in a given state
non-deterministically moves to a new state and outputs a
set of messages (events).

In the web model, we consider \emph{atomic Dolev-Yao (DY)
  processes} only. For these processes it is required that
the events and states that they output can be computed
(more formally, derived in the usual Dolev-Yao style) from
the current input event and state (see
Appendix~\ref{app:communication-model}). The rest of this
paper will consider DY processes only.

The so-called \emph{attacker process} is an atomic DY
process which records all messages it receives and outputs
all messages it can possibly derive from its recorded
messages.  Hence, an attacker process is the maximally
powerful DY process. It carries out all attacks any DY
process could possibly perform and is parametrized by the
set of sender addresses it may use.

A \emph{system} is a (possibly infinite) set of atomic
processes. Its state (i.e., the states of all atomic
processes in the system) together with a multi-set of waiting
events is called a \emph{configuration}.

A \emph{run} of a system for an initial set $E_0$ of events
is a sequence of configurations, where each configuration
(except for the first one, which consists of $E_0$ and the
initial states of the atomic processes) is obtained by
delivering one of the waiting events of the preceding
configuration to an atomic process $p$ (which listens to the
receiver address of the event), and which in turn
performs a computation step according to its relation
$R^p$.

\subsection{Scripting Processes}

For the web model, we also define scripting processes,
which model client-side scripting technologies, such as
JavaScript.

A \emph{scripting process} (or simply, a \emph{script}) is
defined similarly to a DY process. It is called by the
browser in which it runs. The browser provides it with a
(fresh, infinite) set $N$ of nonces and state information
$s$. The script then outputs a term $s'$, which represents
the new internal state and some command which is
interpreted by the browser (see
Section~\ref{sec:web-browsers} for details). Again, it is
required that a script's output $s'$ is derivable from its
input $(s,N)$.

Similarly to an attacker process, the so-called
\emph{attacker script} $\Rasp$ may output everything that
is derivable from the input.

\subsection{Web System}\label{sec:websystem}

In \cite{FettKuestersSchmitz-SP-2014}, we formalize the web infrastructure and web applications
by what they call a web system. A web system contains,
among others, a (possibly infinite) set of DY processes,
modeling web browsers, web servers, DNS servers, and
attackers (which may corrupt other entities, such as
browsers).

\subsubsection{Web System.}
A \emph{web system is a tuple $(\!\websystem\!,
  \scriptset\!,\mathsf{script}, E_0\!)$} with its
components defined as follows:

The first component, $\websystem$, denotes a system (a set
of DY processes) and is partitioned into the sets
$\mathsf{Hon}$, $\mathsf{Web}$, and $\mathsf{Net}$, where
in $\mathsf{Hon}$ the set of honest DY processes and in
$\mathsf{Web}$ and $\mathsf{Net}$ attacker processes (see
Section~\ref{sec:communicationmodel}) are specified,
\emph{web attacker} and \emph{network attacker processes},
respectively. While a web attacker can listen to and send
messages from its own addresses only, a network attacker
may listen to and spoof all addresses. Hence, it is the
maximally powerful attacker. Attackers may corrupt other
parties. In the analysis of a concrete web system, we
typically have one network attacker only and no web
attackers (as they are subsumed by the network attacker) or
one or more web attackers but then no network attacker.
Honest processes (in $\mathsf{Hon}$) can either be web
servers, web browsers, or DNS servers. In our security
analysis of authentication properties, DNS servers will be
subsumed by the attacker, and hence, we do not need to
model them for the analysis of these properties. Our
attacks on privacy work with honest DNS servers. As the
details of the modeling of these servers is not essential
to understand these attacks, we refer to
\cite{FettKuestersSchmitz-SP-2014} for the model of DNS
servers. The modeling of a \emph{web server} heavily
depends on the specific web application. Our concrete
models for the web servers of the BrowserID system are
provided in Sections~\ref{sec:browserid} and following.
Below, we present the modeling of web browsers, including
our extensions, which is independent of a specific web
application, with full details provided in
Appendices~\ref{app:message-data-formats}
and~\ref{sec:deta-descr-brows}.

The second component, $\scriptset$, is a finite set of
scripts, which include the attacker script $\Rasp\in
\scriptset$. In a concrete model of a web application, such as 
our BrowserID model, the set $\scriptset\setminus\{\Rasp\}$
typically describes the set of honest scripts used in the
considered application. Malicious scripts are modeled by
the ``worst-case'' malicious script, $\Rasp$.

The third component, $\mathsf{script}$, is an injective
mapping from $\scriptset$ to $\mathbb{S}$, i.e., by
$\mathsf{script}$ every $s\in \scriptset$ is assigned its
string representation $\mathsf{script}(s)$.  Finally, $E_0$
is a multi-set of events, containing an infinite number of
events of the form $(a{:}a{:}\trigger)$ for every process
$a$ in the web system. A \emph{run} of the web system
is a run of $\websystem$ initiated by $E_0$.

\subsection{HTTP Messages}\label{sec:http-messages}
HTTP requests are represented as ground terms containing
a nonce, a method (e.g., $\mGet$ or $\mPost$), a domain name,
a path, URL parameters, request headers (such as
$\str{Cookie}$), and a message body. For example, a $\mGet$
request for the URL \url{http://ex.com/show?p=1} can be modeled
as the term 
\[ \mi{r} := \langle \cHttpReq, n_1, \mGet,
\str{ex.com}, \str{/show},
\an{\an{\str{p},1}}, \an{}, \an{} \rangle \]
where headers
and body are empty. A response contains a nonce (the same
as in the request), a status code, headers, and a body. A
response to $r$ would be 
\[\mi{s} := \langle \cHttpResp,
n_1, \str{200}, \an{\an{\str{Set{\mhyphen}Cookie},
    \an{\str{SID},\an{n_2,\bot,\True,\bot}}}}, 
\an{\str{script1}, \mi{init}} \rangle
 \] 
where
$\an{\str{SID},\an{n_2,\bot,\True,\bot}}$ is a cookie with
the name/value pair $\str{SID}=n_2$ and the attributes
$\str{httpOnly}$, $\str{secure}$, $\str{session}$ set or
not set, and $\an{\str{script1},\mi{init}}$ is the body, in
this case an HTML document that is to be delivered to the
browser (modeled by the string representation of a script
and its initial state, see below).

A corresponding HTTPS request for $r$ as above would be
$\ehreqWithVariable{r}{k'}{\pub(k_\text{ex.com})}$, where
$k'$ is a fresh symmetric key (a nonce) generated by the
sender of the request. The responder is supposed to use
this key to encrypt the response, which, hence, is of the
form $\ehrespWithVariable{s}{k'}$.

\subsection{Web Browsers}\label{sec:web-browsers}

An honest browser is thought to be used by one honest
user. The honest user is modeled as part of the
browser. User actions are modeled as non-deterministic
actions of the web browser. For example, the web browser
itself can non-deterministically follow the links provided
by a web page. User data (i.e., passwords and identities)
is stored in the initial state of the browser (see below)
and is given to a web page when needed, similar to the
AutoFill feature in browsers. As detailed below, browsers
can be corrupted, i.e., taken over by web and network
attackers.

A web browser $p$ is modeled as a DY process $(I^p, Z^p,
R^p, s^p_0, N^p)$ where $I^p\subseteq \addresses$ is a
finite set of addresses $p$ may listen to and $N^p\subseteq
\nonces$ is an infinite set of nonces $p$ may use. The set
of states $Z^p$, the initial state $s^p_0$, and the
relation $R^p$ are defined next.

\subsubsection{Browser State: $Z^p$ and
  $s^p_0$.}\label{sec:browserstate}
The set $Z^p$ of states of a browser 
consists of terms of the form
\begin{align*}
  \langle
  &\mi{windows}, \allowbreak
  \mi{ids}, \allowbreak
  \mi{secrets}, \allowbreak
  \mi{cookies}, \allowbreak
  \mi{localStorage}, \allowbreak
  \mi{sessionStorage}, \allowbreak
  \mi{keyMapping}, \allowbreak \\
  &\mi{sts}, \allowbreak
  \mi{DNSaddress}, \allowbreak
  \mi{nonces}, \allowbreak
  \mi{pendingDNS}, \allowbreak
  \mi{pendingRequests},\allowbreak
  \mi{isCorrupted}
\rangle\!.
\end{align*}%

\paragraph{Windows and documents.} The most important part
of the state are windows and documents, both stored in the
subterm $\mathit{windows}$.  A browser may have several
windows open at any time (resembling the tabs and windows
in a real browser), each containing a list of documents
(the history of visited web pages) of which one is
``active'', namely the one currently presented to the user
in that window.  A window may be navigated forward and
backward (modeling navigation buttons), deactivating one
document and activating its successor or
predecessor. Intuitively, a document represents a loaded
HTML page. More formally, a document contains (the string
representation of) a script, which is meant to model both
the static HTML code (e.g., links and forms) as well as
JavaScript code. When called by the browser, a script
outputs a command which is then interpreted by the browser,
such as following a link or issuing an \xhr (see
below). Documents may also contain iframes, which are
represented as windows (\emph{subwindows}) nested inside of
document terms. This creates a tree of windows and
documents.

\paragraph{Secrets and IDs.} This subterm holds the
secrets and the identities of the user of the web
browser. Secrets (such as passwords) are modeled as nonces
and they are indexed by origins (where an origin is a
domain name plus the information whether the connection to
this domain is via HTTP or HTTPS).  Secrets are only
released to documents (scripts) with the corresponding
origin, similarly to the AutoFill mechanism in
browsers. Identities are arbitrary terms that model public
information of the user's identity, such as email
addresses. Identities are released to any origin. As
mentioned in the introduction, identities were not
considered in \cite{FettKuestersSchmitz-SP-2014}.

\paragraph{Cookies, localStorage, and sessionStorage.}
These subterms contain the cookies (indexed by domains),
\ls data (indexed by origins), and sessionStorage data
(indexed by origins and top-level window references) stored
in the browser. As mentioned in the introduction,
sessionStorage was not modeled in
\cite{FettKuestersSchmitz-SP-2014}.

\paragraph{KeyMapping.} This term is the equivalent to a
certificate authority (CA) certificate store in the
browser. Since, for simplicity, the model currently does
not formalize CAs, this term simply encodes a mapping
assigning domains $d\in \dns$ to their respective public
keys $\pub(k_d)$.

\paragraph{STS.} Domains that are listed in this term are
contacted by the web browser over HTTPS only. Connection
attempts over HTTP are transparently rewritten to HTTPS
requests. Servers can employ the
$\str{Strict\mhyphen{}Transport\mhyphen{}Security}$ header
to add their domain to this list.

\paragraph{DNSaddress.} This term defines the address of
the DNS server used by the browser.

\paragraph{Nonces, pendingDNS, and pendingRequests.}
These terms are used for bookkeeping purposes, recording
the nonces that have been used by the browser so far, the
HTTP(S) requests that await successful DNS resolution, and
HTTP(S) requests that await a response, respectively.

\paragraph{IsCorrupted.} This term indicates
whether the browser is corrupted ($\not=\bot$) or not
($=\bot$). A corrupted browser behaves like a web attacker.

\paragraph{Initial state $s_0^p$ of a web browser.}  In
the browser's initial state, $\mi{keyMapping}$, $\mi{DNSAddress}$, $\mi{secrets}$, and $\mi{ids}$ are defined as needed, $\mi{isCorrupted}$ is
set to $\bot$, and all other subterms are $\an{}$.

\begin{figure}[t!]
  \centering
  \begin{minipage}{\textwidth}
    \footnotesize
    { \underline{\textsc{Processing Input Message
          \hlExp{$m$}}}
      \begin{itemize}[noitemsep,nolistsep,label=,leftmargin=0pt]
      \item $m = \fullcorrupt$:
        $\mi{isCorrupted} := \fullcorrupt$
      \item $m = \closecorrupt$:
        $\mi{isCorrupted} := \closecorrupt$
      \item $m = \trigger$:
        non-deterministically choose $\mi{action}$ from
        $\{1,2\}$
        \begin{itemize}[noitemsep,nolistsep,label=,leftmargin=1em]
        \item $\mi{action} = 1$: Call
          script
          of some active document. Outputs 
          new state and $\mi{command}$.
          \begin{itemize}[noitemsep,nolistsep,label=,leftmargin=1em]
          \item $\mi{command} = \tHref$:
            $\rightarrow$ \emph{Initiate request}
          \item $\mi{command} = \tIframe$:
            Create subwindow, $\rightarrow$ \emph{Initiate
              request}
          \item $\mi{command} = \tForm$:
            $\rightarrow$ \emph{Initiate request}
          \item $\mi{command} = \tSetScript$:
            Change script in given document.
          \item $\mi{command} =
                \tSetScriptState$: Change state of script
            in given document.
          \item $\mi{command} =
                \tXMLHTTPRequest$: $\rightarrow$
            \emph{Initiate request}
          \item $\mi{command} = \tBack$ or $\tForward$: Navigate given window.
          \item $\mi{command} = \tClose$:
            Close given window.
          \item $\mi{command} =
                \tPostMessage$:
            Send \pm to specified document.
          \end{itemize}
        \item $\mi{action} = 2$:
          $\rightarrow$ \emph{Initiate request to some URL
            in new window}
        \end{itemize}
    \item $m=$ DNS response: send
      corresponding HTTP request
    \item $m=$ HTTP(S) response:
      (decrypt,) find reference.
      \begin{itemize}[noitemsep,nolistsep,label=,leftmargin=1em]
      \item reference to window: create document
        in window
      \item reference to
          document: add response
          body to document's script input
      \end{itemize}
    \end{itemize}
  }
\end{minipage}%
  \caption{The basic structure of the web browser relation
    $R^p$ with an extract of the most important processing
    steps, in the case that the browser is not already
    corrupted.}\label{fig:browser-structure}
\end{figure}

\subsubsection{Web Browser Relation
  $R^p$.}\label{sec:browserrelation}
This relation, outlined in
Figure~\ref{fig:browser-structure}, specifies how the web
browser processes incoming messages.  The browser may
receive special messages that cause it to become corrupted (first two lines in Figure~\ref{fig:browser-structure}), in
which case it acts like the attacker process. There are two
types of corruption: If the browser gets fully corrupted,
the attacker learns the entire current state of the
browser. If it gets close-corrupted, any open windows, documents and used nonces (in
particular, HTTPS encryption keys) are discarded from the
browser's state before it is handed over to the
attacker. This models that a user closed the browser, but a
malicious user now uses the browser (and all information
left in the browser's state).

The browser can receive a special trigger message
$\trigger$, upon which the browser non-deterministically
chooses one of two actions: i) Select one of the current
documents, trigger its JavaScript, and evaluate the output
of the script. Scripts can change the state of the browser
(e.g., by setting cookies) and can trigger specific actions
(e.g., following a link or creating an iframe), which are
modeled as \emph{commands} issued by the script (see the
list in Figure~\ref{fig:browser-structure}). ii) Follow
some URL, with the intuition that it was entered by the
user.

As mentioned, some of the above actions can cause the
browser to generate new HTTP(S) requests. In this case, the
browser first asks the configured DNS server for the IP
address belonging to the domain name in the HTTP(S)
request. As soon as the DNS response arrives, the browser
sends the HTTP(S) request to the respective IP address.

If the HTTP(S) response arrives, its headers are evaluated
and the body of the request becomes the script of a newly
created document that is then inserted at an appropriate
place in the window/document tree. However, if the HTTP(S)
response is a response to an \xhr (triggered by a
script in a document), the body of the response is added to
the corresponding document and can later be processed by
the script of that document.

\section{General Security Properties}\label{sec:generalproperties}

We have identified central application independent security properties
of web features in the web model and formalized them in a general way
such that they can be used in and facilitate future analysis efforts
of web standards and web applications. In this section, we provide a
brief overview of these properties, with precise formulations and
proofs presented in Appendix~\ref{app:generalproperties}.

The first set of properties concerns encrypted connections (HTTPS): We
show that HTTP requests that were encrypted by an honest browser for
an honest receiver cannot be read or altered by the attacker (or any
other party). This, in particular, implies correct behavior on the
browser's side, i.e., that browsers that are not fully corrupted never
leak a symmetric key used for an HTTPS connection to any other party.
We also show that honest browsers set the host header in their
requests properly, i.e., the header reflects an actual domain name of
the receiver,  and that only the designated receiver can successfully
respond to HTTPS requests.

The second set of properties concerns origins and origin
headers. Using the properties stated above, we show that browsers
cannot be fooled about the origin of an (HTTPS) document in their
state: If the origin of a document in the browser's state is a secure
origin (HTTPS), then the document was actually sent by that
origin. Moreover, for requests which contain an origin header with a
secure origin we prove that such requests were actually initated by a
script that was sent by that origin to the browser. In other words, in
this case, the origin header works as expected.

\section{The BrowserID System}\label{sec:browserid}

BrowserID \cite{mozilla/persona/mdn} is a decentralized
single sign-on (SSO) system developed by Mozilla for user
authentication on web sites. It is a complex full-fledged
web application deployed in practice, with currently
{\raise.17ex\hbox{$\scriptstyle\mathtt{\sim}$}}47k LOC
(excluding some libraries).  It allows web sites to
delegate user authentication to email providers, identifying users by their email addresses.
BrowserID makes use of a broad variety of
browser features, such as \xhrs, \pm, local- and
sessionStorage, cookies, various headers, etc.

We first, in Section~\ref{sec:browseridoverview}, provide a
high-level overview of the BrowserID system. A more
detailed description of the BrowserID implementation is
then given in Section~\ref{sec:javascript-descr}. The
description of the BrowserID system presented in the
following as well as our BrowserID model (see
Section~\ref{sec:modelbrowseridpidp}) is extracted mainly
from the BrowserID source
code~\cite{mozilla/persona/source} and the (very
high-level) official BrowserID
documentation~\cite{mozilla/persona/mdn}.

\subsection{Overview}\label{sec:browseridoverview}

\begin{figure}[t]\centering
  \scriptsize{
  \begin{tikzpicture}

    \matrix [row sep=3.2ex, column sep={12ex}]
    {
     \node[anchor=base,fill=Gainsboro,rounded corners](rp){RP}; & \node[anchor=base,fill=Gainsboro,rounded corners](b){Browser}; & \node[anchor=base,fill=Gainsboro,rounded corners](idp){IdP}; \\
     & \node(b-gen-key){}; & \\
     & \node(b-send-pubkey){}; & \node(idp-recv-pubkey){}; \\
     & & \node(idp-create-uc){}; \\
     & \node(b-recv-cert){}; & \node(idp-send-cert){}; \\  [-1ex]
     \node(phase-1-2-left){}; & & \node(phase-1-2-right){}; \\ [-1ex]
     & \node(b-gen-ia){}; & \\
     \node(rp-recv-cap){}; & \node(b-send-cap){}; & \\ [-1ex]
     \node(phase-2-3-left){}; & & \node(phase-2-3-right){}; \\ [-1ex]
     \node(rp-recv-pubkey){}; & & \node(idp-send-pubkey){}; \\
     \node(rp-vrfy-cap){}; & & \\
     \node(rp-end){}; & \node(b-end){}; & \node(idp-end){}; \\
    };

    \begin{pgfonlayer}{background}
    \draw [color=gray] (rp) -- (rp-end);
    \draw [color=gray] (b) -- (b-end);
    \draw [color=gray] (idp) -- (idp-end);
    \end{pgfonlayer}

    \node(phase-1-2-leftleft)[left of=phase-1-2-left]{};
    \node(phase-1-2-rightright)[right of=phase-1-2-right]{};
    \draw [dashed] (phase-1-2-leftleft) -- (phase-1-2-rightright);
    \node(phase-2-3-leftleft)[left of=phase-2-3-left]{};
    \node(phase-2-3-rightright)[right of=phase-2-3-right]{};
    \draw [dashed] (phase-2-3-leftleft) -- (phase-2-3-rightright);

    \node at (b-gen-key) [draw,fill=white,rounded corners] {\alphprotostep{gen-key-pair} gen.\ key pair};
    \draw [->] (b-send-pubkey) to node[draw,fill=white]{\alphprotostep{req-uc} $\text{pk}_\text{b}$, email} (idp-recv-pubkey);
    \node at (idp-create-uc) [draw,fill=white,rounded corners] {\alphprotostep{create-uc} create UC};
    \draw [->] (idp-send-cert) to node[draw,fill=white]{\alphprotostep{recv-uc} UC} (b-recv-cert);
    \node at (b-gen-ia) [draw,fill=white,rounded corners]{\alphprotostep{gen-ia} gen. IA};
    \draw [->] (b-send-cap) to node[draw,fill=white]{\alphprotostep{send-cap} CAP} (rp-recv-cap);
    \draw [->] (idp-send-pubkey) to node[draw,fill=white]{\alphprotostep{recv-idp-pubkey} $\text{pk}_{\text{IdP}}$} (rp-recv-pubkey);
    \node at (rp-vrfy-cap) [draw,fill=white,rounded corners]{\alphprotostep{verify-cap} verify CAP};

    \node(phase-1-label) at ($(phase-1-2-left)!0.5!(rp)$) [left=1ex] {\bigprotophase{provisioning}};
    \node(phase-2-label) at ($(phase-2-3-left)!0.5!(phase-1-2-left)$) [left=1ex] {\bigprotophase{authentication}};
    \node(phase-3-label) at ($(rp-end)!0.75!(phase-2-3-left)$) [left=1ex] {\bigprotophase{verification}};

  \end{tikzpicture}}\vspace{-1.2em}
  \caption{BrowserID login: basic overview}
  \label{fig:browserid-highlevel}
\end{figure}%

The BrowserID system knows three distinct parties: the
user, who wants to authenticate herself using a browser,
the relying party (RP) to which the user wants to
authenticate (log in) with one of her 
 email addresses (say,
\nolinkurl{user@idp.com}), and the identity/email address
provider, the IdP\@. If the IdP (\nolinkurl{idp.com})
supports BrowserID directly, it is called a \emph{primary
  IdP}. Otherwise, a Mozilla-provided service, the
so-called \emph{secondary IdP}, takes the role of the
IdP\@. As mentioned before, here we concentrate on the
primary IdP mode as the secondary IdP mode was described in
detail in~\cite{FettKuestersSchmitz-SP-2014}. However, we
briefly discuss the differenes between the two modes at the
end of Section~\ref{sec:javascript-descr}.

A primary IdP provides information about its 
setup in a so-called \emph{support document}, which it
provides at a fixed URL derivable from the email domain,
e.g., \url{https://idp.com/.well-known/browserid}.

A user who wants to log in at an RP with an email address
for some IdP has to present two signed documents to the RP:
A \emph{user certificate} (UC) and an \emph{identity
  assertion} (IA). The UC contains the user's email address
and the user's public key. It is signed by the IdP\@.  The
IA contains the origin of the RP and is signed with the
user's private key. Both documents have a limited validity
period. A pair consisting of a UC and a matching IA is
called a \emph{certificate assertion pair} (CAP) or a
\emph{backed identity assertion}. Intuitively, the UC in
the CAP tells the RP that (the IdP certified that) the
owner of the email address is (or at least claims to be)
the owner of the public key. By the IA contained in the CAP
the RP is ensured that the owner of the given public key
(i.e., the one who knows the corresponding private key)
wants to log in. Altogether, given a valid CAP, RP would
consider the user (identified by the email address in the
CAP) to be logged in.

The BrowserID authentication process (with a primary IdP)
consists of three phases (see
Figure~\ref{fig:browserid-highlevel}):
\refbigprotophase{provisioning} UC provisioning,
\refbigprotophase{authentication} CAP creation, and
\refbigprotophase{verification} CAP verification.

In Phase \refbigprotophase{provisioning}, (the browser of)
the user creates a public/pri\-vate key
pair~\refalphprotostep{gen-key-pair}. She then sends her
public key as well as the email address she wants to use to
log in at some RP to the respective
IdP~\refalphprotostep{req-uc}. The IdP now creates the
UC~\refalphprotostep{create-uc}, which is then sent to the
user~\refalphprotostep{recv-uc}. The above requires the
user to be logged in at IdP.

With the user having received the UC, Phase
\refbigprotophase{authentication} can start. The user wants
to authenticate to an RP, so she creates the
IA~\refalphprotostep{gen-ia}. The UC and the IA are
concatenated to a CAP, which is then sent to the
RP~\refalphprotostep{send-cap}.

In Phase \refbigprotophase{verification}, the RP checks the
authenticity of the CAP\@. For this purpose, the RP could
use an external verification service provided by Mozilla or
check the CAP itself as follows: First, the RP fetches the
public key of the IdP \refalphprotostep{recv-idp-pubkey},
which is contained in the support document. Afterwards, the
RP checks the signatures of the UC and the
IA~\refalphprotostep{verify-cap}. If this check is
successful, the RP can, as mentioned before, consider the
user to be logged in with the given email address and send
her some token (e.g., a cookie with a session ID), which we
refer to as an \emph{RP service token}.

\subsection{Implementation Details}\label{sec:javascript-descr}

\begin{figure}[t!]\centering
\begin{tikzpicture}
\scriptsize

\matrix[column sep={6pc,between origins}, row sep=1.6ex] {
\node[anchor=base,fill=Gainsboro,rounded corners](lpo){LPO}; & \node[anchor=base,fill=Gainsboro,rounded corners](idp){IdP}; & \node[draw,anchor=base](rpdoc){RP-Doc}; & \node(ld-top){};                 & \node(pif-top){}; \\
                                   &                                    & \node(rpdoc-ld-open){};                 & \node(ld)[draw,anchor=base]{LD}; & \\
\node(phase-ld-init-1-left){};     &                                    &                                         &                                  & \node(phase-ld-init-1-right){}; \\ [2pt]
\node(lpo-ld-init-1){};            &                                    &                                         & \node(ld-init-1){};              & \\ [1pt]
                                   &                                    & \node(rpdoc-ld-recv-ready){};           & \node(ld-rpdoc-send-ready){};    & \\
                                   &                                    & \node(rpdoc-ld-send-request){};         & \node(ld-rpdoc-recv-request){};  & \\
\node(lpo-ld-ctx-1){};             &                                    &                                         & \node(ld-lpo-ctx-1){};           & \\
                                   &                                    &                                         & \node(ld-user-email){};          & \\
\node(lpo-ld-addr-info-1-top){};   &                                    &                                         & \node(ld-lpo-addr-info-1-top){}; & \\ [1pt]
\node(lpo-idp-wk-1){};             & \node(idp-lpo-wk-1){};             &                                         &                                  & \\
\node(lpo-ld-addr-info-1-btm){};   &                                    &                                         & \node(ld-lpo-addr-info-1-btm){}; & \\
\node(phase-prov-left){};          &                                    &                                         &                                  & \node(phase-prov-right){}; \\
                                   &                                    &                                         & \node(ld-pif-open){};            & \node(pif)[draw,anchor=base]{PIF}; \\
                                   & \node(idp-pif-init-1){};           &                                         &                                  & \node(pif-init-1){}; \\ [1pt]
                                   &                                    &                                         & \node(ld-pif-pms){};             & \node(pif-ld-pms){}; \\
                                   &                                    &                                         & \node(ld-pif-close){};           & \node(pif-end)[draw,anchor=base]{/PIF}; \\
\node(phase-auth-left){};          &                                    &                                         &                                  & \node(phase-auth-right){}; \\
                                   & \node(idp-ld-auth){};              &                                         & \node(ld-auth){}; & \\
\node(phase-ld-init-2-left){};     &                                    &                                         &                                  & \node(phase-ld-init-2-right){}; \\
\node(lpo-ld-init-2-top){};        & \node(idp-ld-init-2-top){};        & \node(rpdoc-ld-init-2-top){};           & \node(ld-ld-init-2-top){};       & \node(pif-ld-init-2-top){}; \\
\node(lpo-ld-init-2-btm){};        & \node(idp-ld-init-2-btm){};        & \node(rpdoc-ld-init-2-btm){};           & \node(ld-ld-init-2-btm){};       & \node(pif-ld-init-2-top){}; \\
\node(phase-prov-cont-left){};     &                                    &                                         &                                  & \node(phase-prov-cont-right){}; \\
\node(lpo-prov-cont-top){};        & \node(idp-prov-cont-top){};        & \node(rpdoc-prov-cont-top){};           & \node(ld-prov-cont-top){};       & \node(pif-prov-cont-top){}; \\
                                   &                                    &                                         & \node(ld-prov-cont-key-pair){};  & \\ [3ex]
                                   &                                    &                                         & \node(ld-prov-cont-pkb){};       & \node(pif-prov-cont-pkb){};\\
                                   & \node(idp-prov-cont-req-uc){};     &                                         &                                  & \node(pif-prov-cont-req-uc){};\\
                                   & \node(idp-prov-cont-cert-uc){};    &                                         &                                  & \\
                                   & \node(idp-prov-cont-send-uc){};    &                                         &                                  & \node(pif-prov-cont-send-uc){};\\
                                   &                                    &                                         & \node(ld-prov-cont-recv-uc){};   & \node(pif-prov-cont-recv-uc){};\\
\node(lpo-prov-cont-btm){};        & \node(idp-prov-cont-btm){};        & \node(rpdoc-prov-cont-btm){};           & \node(ld-prov-cont-btm){};       & \node(pif-prov-cont-btm){}; \\
\node(phase-auth-lpo-left){};      &                                    &                                         &                                  & \node(phase-auth-lpo-right){}; \\
                                   &                                    &                                         & \node(ld-gen-cap-lpo){};         & \\ [3ex]
\node(lpo-ld-auth){};              &                                    &                                         & \node(ld-lpo-auth){};            & \\ [-2pt]
\node(phase-cap-left){};           &                                    &                                         &                                  & \node(phase-cap-right){}; \\ [2pt]
\node(lpo-ld-list-emails){};       &                                    &                                         & \node(ld-lpo-list-emails){};     & \\ [2pt]
\node(lpo-ld-addr-info-3){};       &                                    &                                         & \node(ld-lpo-addr-info-3){};     & \\
                                   &                                    &                                         & \node(ld-gen-cap){};             & \\ [3ex]
                                   &                                    & \node(rpdoc-ld-cap){};                  & \node(ld-rpdoc-cap){};           & \\
                                   &                                    & \node(rpdoc-ld-close){};                & \node[draw,anchor=base](ld-end){/LD}; & \\
\node(lpo-end){};                  & \node(idp-end){};                  & \node(rpdoc-end){};                     &                                  & \node(pif-col-end){}; \\
};

\node(phase-1-label) at ($(phase-prov-left)!0.5!(phase-ld-init-1-left)$) [left=1ex] {\protophase{ld-start-1}};
\node(phase-2-label) at ($(phase-auth-left)!0.5!(phase-prov-left)$) [left=1ex] {\protophase{ld-prov-1}};
\node(phase-3-label) at ($(phase-ld-init-2-left)!0.5!(phase-auth-left)$) [left=1ex] {\protophase{ld-auth}};
\node(phase-4-label) at ($(phase-prov-cont-left)!0.5!(phase-ld-init-2-left)$) [left=1ex] {\protophase{ld-start-2}};
\node(phase-5-label) at ($(phase-auth-lpo-left)!0.5!(phase-prov-cont-left)$) [left=1ex] {\protophase{ld-prov-2}};
\node(phase-6-label) at ($(phase-cap-left)!0.5!(phase-auth-lpo-left)$) [left=1ex] {\protophase{ld-lpo-auth}};
\node(phase-7-label) at ($(lpo-end)!0.5!(phase-cap-left)$) [left=1ex] {\protophase{ld-cap}};

\tikzstyle{xhrArrow} = [color=blue,decoration={markings, mark=at position 1 with {\arrow[color=blue]{triangle 45}}}, preaction = {decorate}]

\draw [->,snake=snake,segment amplitude=0.2ex] (rpdoc-ld-open.40) to node [above=-2pt] {\protostep{ld-open} open} (ld);

\draw [->] (ld-init-1.160) to node [above=-2pt]{\protostep{ld-init-1} GET LD} (lpo-ld-init-1.20);
\draw [->] (lpo-ld-init-1.340) -- (ld-init-1.200);

\draw [->,color=red,dashed] (ld-rpdoc-send-ready) to node [above=-2pt]{\protostep{ld-rpdoc-ready-1} ready} (rpdoc-ld-recv-ready);

\draw [->,color=red,dashed] (rpdoc-ld-send-request) to node [above=-2pt]{\protostep{rpdoc-ld-request-1} request} (ld-rpdoc-recv-request);

\draw [->,color=blue,>=latex] (ld-lpo-ctx-1.160) to node [above=-2pt]{\protostep{ld-ctx-1} GET session\_context} (lpo-ld-ctx-1.20);
\draw [->,color=blue,>=latex] (lpo-ld-ctx-1.340) -- (ld-lpo-ctx-1.200);

\node (ld-user-email-drawn) at (ld-user-email) [draw,rounded corners,fill=Gainsboro]{\protostep{ld-user-email} email address };

\draw [->,color=blue,>=latex] (ld-lpo-addr-info-1-top) to node [above=-2pt]{\protostep{ld-addrinfo-1} GET address\_info} (lpo-ld-addr-info-1-top);

\draw [->] (lpo-idp-wk-1.20) to node [above=-2pt]{\protostep{lpo-idp-wk-1} GET wk} (idp-lpo-wk-1.160);
\draw [->] (idp-lpo-wk-1.200) -- (lpo-idp-wk-1.340);

\draw [->,color=blue,>=latex] (lpo-ld-addr-info-1-btm) to node [above=-2pt]{\protostep{ld-addrinfo-1-resp}} (ld-lpo-addr-info-1-btm);

\draw [->,snake=snake,segment amplitude=0.2ex] (ld-pif-open.40) to node [above=-2pt] {\protostep{ld-pif-open-1} create} (pif);

\draw [->] (pif-init-1.160) to node [above=-2pt]{\protostep{pif-init-1} GET PIF} (idp-pif-init-1.20);
\draw [->] (idp-pif-init-1.340) -- (pif-init-1.200);

\draw [implies-implies,color=red,dashed,double] (pif-ld-pms) to node [above=-2pt]{\protostep{pif-ld-pms-1} PMs} (ld-pif-pms);

\draw [->,snake=snake,segment amplitude=0.2ex] (ld-pif-close.40) to node [above=-2pt] {\protostep{ld-pif-close-1} close} (pif-end);

\node (ld-auth-drawn) at (ld-auth) [draw]{\protostep{idp-ld-auth} auth IdP};
\draw [implies-implies,double] (ld-auth-drawn) -- (idp-ld-auth);

\node[right of=phase-4-label,node distance=5em] {repeat \resizebox{!}{0.8\baselineskip}{\refprotophase{ld-start-1}}};

\node[right of=phase-5-label,node distance=3em,rotate=90] {repeat \resizebox{!}{0.8\baselineskip}{\refprotophase{ld-prov-1}}};

\node at (pif-prov-cont-top) [draw,fill=Gainsboro] {PIF};

\node at (ld-prov-cont-key-pair) [draw,rounded corners,fill=Gainsboro]{\protostep{gen-key-pair} gen. key pair};

\draw [->,color=red,dashed] (ld-prov-cont-pkb) to node [above=-2pt]{\protostep{pubkey-ld-pif} $\text{pk}_\text{b}$, email} (pif-prov-cont-pkb);

\draw [->,color=blue,>=latex] (pif-prov-cont-req-uc) to node [above=-2pt]{\protostep{req-uc} $\text{pk}_\text{b}$, email} (idp-prov-cont-req-uc);

\node (certify-uc) at (idp-prov-cont-cert-uc) [draw,rounded corners,fill=white]{\protostep{certify-uc} create UC};

\draw [->,color=blue,>=latex] (idp-prov-cont-send-uc) to node [above=-2pt]{\protostep{send-uc} UC} (pif-prov-cont-send-uc);

\draw [->,color=red,dashed] (pif-prov-cont-recv-uc) to node [above=-2pt]{\protostep{recv-uc} UC} (ld-prov-cont-recv-uc);

\node at (pif-prov-cont-btm) [draw,fill=Gainsboro] {/PIF};

\node at (ld-gen-cap-lpo) [draw,rounded corners,fill=Gainsboro]{\protostep{ld-gen-cap-lpo} gen. $\text{IA}_\text{LPO}$};

\draw [->,color=blue,>=latex] (ld-lpo-auth.160) to node [above=-2pt]{\protostep{ld-lpo-auth} POST auth\_with\_assertion ($\text{CAP}_\text{LPO}$)} (lpo-ld-auth.20);
\draw [->,color=blue,>=latex] (lpo-ld-auth.340) -- (ld-lpo-auth.200);

\draw [->,color=blue,>=latex] (ld-lpo-list-emails.160) to node [above=-2pt]{\protostep{ld-lpo-list-emails} GET list\_emails} (lpo-ld-list-emails.20);
\draw [->,color=blue,>=latex] (lpo-ld-list-emails.340) -- (ld-lpo-list-emails.200);

\draw [->,color=blue,>=latex] (ld-lpo-addr-info-3.160) to node [above=-2pt]{\protostep{ld-addrinfo-3} GET address\_info} (lpo-ld-addr-info-3.20);
\draw [->,color=blue,>=latex] (lpo-ld-addr-info-3.340) -- (ld-lpo-addr-info-3.200);

\node at (ld-gen-cap) [draw,rounded corners,fill=Gainsboro]{\protostep{ld-gen-cap} gen. $\text{IA}_\text{RP}$};

\draw [->,color=red,dashed] (ld-rpdoc-cap) to node [above=-2pt]{\protostep{ld-rpdoc-cap} response ($\text{CAP}_\text{RP}$)} (rpdoc-ld-cap);

\draw [->,snake=snake,segment amplitude=0.2ex] (rpdoc-ld-close.40) to node [above=-2pt] {\protostep{ld-close} close} (ld-end);

\begin{pgfonlayer}{background}
 \node (rpdoc-a) [above of=rpdoc, node distance=2ex]{};
 \node (rpdoc-al) [left of=rpdoc-a, node distance=6ex]{};
 \node (pif-col-end-b) [below of=pif-col-end, node distance=2ex]{};
 \node (pif-col-end-br) [right of=pif-col-end-b, node distance=4ex]{};
 \filldraw [color=Gainsboro,rounded corners] (rpdoc-al) rectangle (pif-col-end-br);

 \draw [color=gray] (lpo.270) -- (lpo-ld-init-2-top);
 \draw [color=gray,dotted,thick] (lpo-ld-init-2-top) -- (lpo-ld-init-2-btm);
 \draw [color=gray] (lpo-ld-init-2-btm) -- (lpo-prov-cont-top);
 \draw [color=gray,dotted,thick] (lpo-prov-cont-top) -- (lpo-prov-cont-btm);
 \draw [color=gray] (lpo-prov-cont-btm) -- (lpo-end);
 \draw [color=gray] (idp.270) -- (idp-ld-init-2-top);
 \draw [color=gray,dotted,thick] (idp-ld-init-2-top) -- (idp-ld-init-2-btm);
 \draw [color=gray] (idp-ld-init-2-btm) -- (idp-prov-cont-top);
 \draw [color=gray,dotted,thick] (idp-prov-cont-top) -- (idp-prov-cont-btm);
 \draw [color=gray] (idp-prov-cont-btm) -- (idp-end);
 \draw [color=gray] (rpdoc.270) -- (rpdoc-ld-init-2-top);
 \draw [color=gray,dotted,thick] (rpdoc-ld-init-2-top) -- (rpdoc-ld-init-2-btm);
 \draw [color=gray] (rpdoc-ld-init-2-btm) -- (rpdoc-prov-cont-top);
 \draw [color=gray,dotted,thick] (rpdoc-prov-cont-top) -- (rpdoc-prov-cont-btm);
 \draw [color=gray] (rpdoc-prov-cont-btm) -- (rpdoc-end);
 \draw [color=gray] (ld.270) -- (ld-auth-drawn);
 \draw [color=gray] (ld-auth-drawn) -- (ld-ld-init-2-top);
 \draw [color=gray,dotted,thick] (ld-ld-init-2-top) -- (ld-ld-init-2-btm);
 \draw [color=gray] (ld-ld-init-2-btm) -- (ld-prov-cont-top);
 \draw [color=gray,dotted,thick] (ld-prov-cont-top) -- (ld-prov-cont-btm);
 \draw [color=gray] (ld-prov-cont-btm) -- (ld-end);
 \draw [color=gray] (pif.270) -- (pif-end);
 \draw [color=gray,dotted,thick] (pif-prov-cont-top) -- (pif-prov-cont-btm);
\end{pgfonlayer}

\draw [dashed] (phase-ld-init-1-left.180) -- (phase-ld-init-1-right.0);
\draw [dashed] (phase-prov-left.180) -- (phase-prov-right.0);
\draw [dashed] (phase-auth-left.180) -- (phase-auth-right.0);
\draw [dashed] (phase-ld-init-2-left.180) -- (phase-ld-init-2-right.0);
\draw [dashed] (phase-prov-cont-left.180) -- (phase-prov-cont-right.0);
\draw [dashed] (phase-auth-lpo-left.180) -- (phase-auth-lpo-right.0);
\draw [dashed] (phase-cap-left.180) -- (phase-cap-right.0);

\node at ($(ld-top)!0.7!(pif-top)$) {Browser};

\end{tikzpicture}%

\raisebox{0.5ex}{\tikz{\draw [->] (0,0) -- (0.4,0);}}
   HTTPS messages, \raisebox{0.5ex}{\tikz{\draw
       [->,color=blue,>=latex] (0,0) -- (0.4,0);}} \xhrs (over HTTPS),
   \raisebox{0.5ex}{\tikz{\draw [->,color=red,dashed] (0,0) --
       (0.4,0);}} \pms, \raisebox{0.5ex}{\tikz{\draw
       [->,snake=snake,segment length=2ex,segment amplitude=0.2ex] (0,0) -- (0.4,0);}}
   browser commands%
 \caption[foo]{Simplified BrowserID implementation
   overview. CIF omitted for brevity.}
\label{fig:browserid-lowlevel-ld}

\end{figure}

We now provide a more detailed description of the BrowserID
implementation. Since the system is very complex, with many
HTTPS requests, \xhrs, and postMessages sent between
different entities (servers as well as windows and iframes
within the browser), we here describe mainly the phases of
the login process without explaining every single message
exchange done in the implementation. A more detailed
step-by-step description can be found in
Appendix~\ref{app:browserid-lowlevel}. Note that
BrowserID's specification of IdPs fixes the interface to
BrowserID only, but otherwise does not further detail the
specification of IdPs. Therefore, in what follows, we
consider a typical IdP, namely the example implementation
provided by Mozilla \cite{mozilla/persona/source}.

In addition to the parties mentioned so far, the actual
BrowserID implementation uses another party, Mozilla's
\nolinkurl{login.persona.org} (LPO). Among others, LPO
provides HTML and JavaScript files that, for security and
privacy reasons, cannot be delivered by either IdP or
RP. An overview of the implementation is given in
Figure~\ref{fig:browserid-lowlevel-ld}. For brevity of
presentation, several messages and components, such as the
CIF (see below), are omitted in the figure (see
Figure~\ref{fig:browserid-lowlevel-pidp-detailed-1} on
Pages~\pageref{fig:browserid-lowlevel-pidp-detailed-1} and \pageref{fig:browserid-lowlevel-pidp-detailed-2} for a detailed version of Figure~\ref{fig:browserid-lowlevel-ld}).

\subsubsection{Windows and iframes in the Browser.} By
\emph{RP-Doc} we
denote the window containing the document loaded from some
RP, at which the user wants to log in with an email address
hosted by some IdP\@. RP-Doc typically includes JavaScript
from LPO and contains a button ``Login with
BrowserID''. The LPO JavaScript running in RP-Doc opens an
auxiliary window called the \emph{login dialog} (LD). Its
content is provided by LPO and it handles the interaction
with the user. During the login process, a temporary
invisible iframe called the \emph{provisioning iframe}
(PIF) can be created in the LD\@. The PIF is loaded from
IdP\@. It is used by LD to communicate (cross-origin) with
the IdP via postMessages: As the BrowserID implementation
mainly runs under the origin of LPO, it cannot directly
communicate with the IdP, thus it uses the PIF as a proxy.
Temporarily, the LD may navigate itself to a web page at
IdP to allow for direct user interaction with the IdP. We
then call this window the \emph{authentication dialog}
(AD).

\subsubsection{Login Process.} To describe the login
process, for the sake of presentation we assume for now
that the user uses a ``fresh'' browser, i.e., the user has
not been logged in before. As mentioned, the process starts
by the user visiting a web site of some RP\@.  After the
user has clicked on the login button in RP-Doc, the LD is
opened and the interactive login flow is started. We can
divide this login flow into seven phases: In
Phase~\refprotophase{ld-start-1}, the LD is initialized and
the user is prompted to provide her email address. Also, LD
fetches the support document (see
Section~\ref{sec:browseridoverview}) of the IdP via
LPO\@. In Phase~\refprotophase{ld-prov-1}, LD creates the
PIF from the \emph{provisioning URL} provided in the
support document. As (by our assumption) the user is not
logged in yet, the PIF notifies LD that the user is not
authenticated to the IdP\@. In
Phase~\refprotophase{ld-auth}, LD navigates itself away to
the \emph{authentication URL} which is also provided in the
support document and links to the IdP\@. Usually, this
document will show a login form in which the user enters
her password to authenticate to the IdP\@.  After the user
has been authenticated to IdP (which typically implies that
the IdP sets a session cookie in the browser), the window
is navigated back to LPO\@. 

Now, the login flow continues in Phase
\refprotophase{ld-start-2}, which basically repeats Phase
\refprotophase{ld-start-1}. However, the user is not
prompted for her email address (it has previously been
saved in the localStorage under the origin of LPO along
with a nonce, where the nonce is stored in the
sessionStorage). In Phase~\refprotophase{ld-prov-2}, which
essentially repeats Phase~\refprotophase{ld-prov-1}, the PIF
detects that the user is now authenticated to the IdP and the
provisioning phase is started
(\refbigprotophase{provisioning} in
Figure~\ref{fig:browserid-highlevel}): The user's keys are
created by LD and stored in the localStorage under the
origin of LPO\@. The PIF forwards the certification request
to the IdP, which then creates the UC and sends it back to the
PIF\@. The PIF in turn forwards it to the LD, which stores
it in the localStorage under the origin of LPO.

In Phases~\refprotophase{ld-lpo-auth} and
\refprotophase{ld-cap}, mainly the IA is generated by LD
for the origin of RP-Doc and sent (together with the UC) to
RP-Doc (\refbigprotophase{authentication} in
Figure~\ref{fig:browserid-highlevel}). In the localStorage,
LD stores that the user's email address is logged in at
RP\@. Moreover, to log the user in at LPO, LD generates an IA for the origin of
LPO and sends the UC and IA to LPO.

\subsubsection{Automatic CAP Creation.} In addition to the
interactive login presented above, BrowserID also contains
an automatic, non-interactive way for RPs to obtain a
freshly generated CAP: During initialization within RP-Doc,
an invisible iframe called the \emph{communication iframe}
(CIF) is created inside RP-Doc. The CIF's JavaScript is
loaded from LPO and behaves similar to LD, but without user
interaction. The CIF automatically issues a fresh CAP and
sends it to RP-Doc under specific conditions: among others,
the email address must be marked as logged in at RP in the
\ls. If necessary, a new key pair is created and a
corresponding new UC is requested at the IdP. For this
purpose, a PIF is created inside the CIF.

\subsubsection{Differences to the Secondary IdP Mode.} In
the secondary IdP mode there are three parties involved
only: RP, Browser, and LPO, where LPO also takes the role
of an IdP; LPO is the only IdP that is present, rather than
an arbitrary set of (external) IdPs.  Consequently, in the
secondary IdP mode the PIF and the AD do not
exist. Moreover, in the primary mode, the behavior of the
CIF and the LD is more complex than in the secondary
mode. For example, in the primary mode, just like the LD,
the CIF might contain a PIF (iframe in iframe) and interact
with it via postMessages. Altogether, the secondary IdP
case requires much less communication between
parties/components and trust assumptions are simpler: in
the secondary IdP mode LPO (which is the only IdP in this
mode) has to be trusted, in the primary IdP mode some
external IdPs might be malicious (and hence, also the
scripts they deliver for the PIF and the AD).  To
illustrate the difference between the secondary and the
primary IdP mode, in the appendix both modes are
illustrated in more detail, see
Figure~\ref{fig:browserid-lowlevel-sidp} on
Page~\pageref{fig:browserid-lowlevel-sidp} for the
secondary IdP mode and
Figure~\ref{fig:browserid-lowlevel-pidp-detailed-1} on
Pages~\pageref{fig:browserid-lowlevel-pidp-detailed-1}
and~\pageref{fig:browserid-lowlevel-pidp-detailed-2} for
the primary IdP mode.

%
%
%
%
%

%

%
%

%

\section{Analysis of BrowserID: Authentication
  Properties}\label{sec:analysisbrowserid}

In this section, we present the analysis of the BrowserID system with
primary IdPs and with respect to authentication properties. As already
mentioned, in \cite{FettKuestersSchmitz-SP-2014}, we analyzed the
simpler case with a secondary IdP. We first, in
Section~\ref{sec:modelbrowseridpidp}, describe our model of BrowserID
with primary IdPs, with two central authentication properties one
would expect any SSO system to satisfy formalized in
Section~\ref{sec:securitypropsBrowserID}.  Due to the many differences
between the secondary and primary mode as described above, the model
had to be written from scratch in most parts. As mentioned in the
introduction, during the analysis of BrowserID it turned out that one
of the security properties is not satisfied and that in fact there is
an attack on BrowserID. We confirmed that this attack, which was
acknowledged by Mozilla, works on the actual implementation of
BrowserID. In Section~\ref{sec:attackbrowserid}, the attack is
presented along with a fix. (Our BrowserID model presented in
Appendix~\ref{sec:analysisbrowserid-pidp} contains this fix.) In
Section~\ref{sec:securefixedBrowserID}, we prove that the fixed
BrowserID system with primary IdPs satisfies both authentication
properties.

\subsection{Modeling of BrowserID with Primary IdPs}\label{sec:modelbrowseridpidp}

We model the BrowserID system with primary IdPs as a web
system (in the sense of Section~\ref{sec:webmodel}). Note
that while in Section~\ref{sec:browserid} we give only a
brief overview of the BrowserID system, our modeling and
analysis considers the complete system with primary IdPs,
where we have extracted the model from the BrowserID source
code~\cite{mozilla/persona/source}.

We call a web system $\bidwebsystem=(\bidsystem,
\scriptset, \mathsf{script}, E_0)$ a \emph{BrowserID web
  system} if it is of the form described in
Appendix~\ref{sec:analysisbrowserid-pidp} and briefly
outlined here.

The system $\bidsystem=\mathsf{Hon}\cup \mathsf{Web} \cup
\mathsf{Net}$ consists of the (network) attacker process
$\fAP{attacker}$, the web server for $\fAP{LPO}$, a finite
set $\fAP{B}$ of web browsers, a finite set $\fAP{RP}$ of
web servers for the relying parties, and a finite set
$\fAP{IDP}$ of web servers for the identity providers, with
$\mathsf{Hon} := \fAP{B} \cup \fAP{RP} \cup \fAP{IDP} \cup
\{\fAP{LPO}\}$, $\mathsf{Web} := \emptyset$, and
$\mathsf{Net} := \{\fAP{attacker}\}$. DNS servers are
assumed to be dishonest, and hence, are subsumed by
$\fAP{attacker}$. IdPs and RPs can become corrupted
(similar to browsers, by a special message); LPO is assumed
to be honest.

The set $\addresses$ of IP addresses (see
Section~\ref{sec:communicationmodel}) contains one address
for each party in $\bidsystem$.  The set
$\dns\subseteq \mathbb{S}$  contains one or more domains
for each party in $\bidsystem$, except for browsers.

The definition of the processes in $\websystem$ follows the
description in Section~\ref{sec:javascript-descr}. 
For RP, we
explicitly follow the security considerations in
\cite{mozilla/persona/mdn} (Cross-site Request Forgery
protection, e.g., by checking origin headers and HTTPS only
with STS enabled). When RP receives a valid CAP (see
below), RP responds with a fresh \emph{RP service token for
  ID $i$} where $i$ is the ID (email address) for which the
CAP was issued. Intuitively, a client having such a token
can use the service of the RP.

Each browser $b\in \fAP{B}$ owns a set of email addresses
(identities) of the form $\left<name,d\right>$ with
$name\in \mathbb{S}$ and $d\in \dns$ (belonging to an IdP) and associated passwords
(i.e., nonces). 

A UC $\mi{uc}$ for a user $u$ with email address
$\an{\mi{name}, d}$ and public key (verification key)
$\pub(k_u)$, where $d \in \mapDomain(y)$ is a domain of the
IdP $y$ that issued the UC and $k_u$ is the private
(signing) key of $u$, is a term of the form
$\mi{uc}=\sig{\an{\an{\mi{name}, d},
    \pub(k_u)}}{\mapSignKey(y)}$, with $\mapSignKey(y)$
being the signing key of $y$. An IA $ia$ for an origin
$\mi{o}$ is a message of the form $ia=\sig{o}{k_u}$. A CAP
is of the form $\an{\mi{uc},\mi{ia}}$. Note that time
stamps are omitted both from the UC and the IA, modeling
that UC and IA never expire. In reality, as explained
in Section~\ref{sec:browserid}, they are valid for a
certain period of time. So our modeling is a safe
overapproximation.

The set $\scriptset$ of $\bidwebsystem$ contains six
scripts, with their string representations defined by
$\mathsf{script}$: the honest scripts running in RP-Doc,
CIF, LD, AD, and PIF, respectively, and the malicious
script $\Rasp$. The scripts for CIF and LD (issued by
$\fAP{LPO}$) are defined in a straightforward way following
the implementation outlined in
Section~\ref{sec:browserid}. The script for RP-Doc (issued
by RP) also includes the script that is (in reality) loaded
from LPO. In particular, this script creates the CIF and
the LD (sub)windows, whose contents (scripts) are
loaded from LPO. The scripts for the AD and PIF are modeled
following the example implementation provided by Mozilla
\cite{mozilla/persona/source}. Full formal specifications
of all the above mentioned scripts are provided in
Appendix~\ref{sec:analysisbrowserid-pidp}.

\subsection{Authentication Properties of the BrowserID System}\label{sec:securitypropsBrowserID}

While the documentation of BrowserID does not contain explicit
security goals, here we state two fundamental authentication
properties every SSO system should satisfy. These properties are
adapted from \cite{FettKuestersSchmitz-SP-2014}.

Informally, these properties can be stated as follows: \textbf{(A)}~\emph{The attacker should not be able to use a service of RP as an
  honest user.} In other words, the attacker should not get hold of
(be able to derive from his current knowledge) an RP service token for
an ID of an honest user (browser), even if the browser was closed and
then later used by a malicious user (i.e., after a
$\closecorrupt$). \textbf{(B)}~\emph{The attacker should not be able
  to authenticate an honest browser to an RP with an ID that is not
  owned by the browser (identity injection).} We refer the reader to
Appendix~\ref{app:form-secur-prop} for the formal definition of these
properties.

We call a BrowserID web system $\bidwebsystem$ \emph{secure
  (w.r.t.~authentication)} if the above conditions are satisfied in
all runs of the system.

\subsection{Identity Injection Attack on BrowserID with Primary IdPs}\label{sec:attackbrowserid} 

While trying to prove the above mentioned authentication
properties of BrowserID with primary IdPs in our model, we
discovered a serious attack, which is sketched below and
does not apply to the case with secondary IdPs. We
confirmed the attack on the actual implementation and
reported it to
Mozilla~\cite{mozilla-browserid-primary-identity-injection-bug-report},
who acknowledged it.

During the provisioning phase \refprotophase{ld-prov-2}
(see Figure~\ref{fig:browserid-lowlevel-ld}), the IdP
issues a UC for the user's identity and public key provided
in \refprotostep{pubkey-ld-pif}. This UC is sent to
the LD by the PIF in \refprotostep{recv-uc}. 

If the IdP is malicious, it can issue a UC with different
data. In particular, it could replace the email address by
a different one, but keep the original public key. This
(malicious) UC is then later included in the CAP by LD. The
CAP will still be valid, because the public key is
unchanged. Now, as the RP determines the user's identity by
the UC contained in the CAP, RP issues a service token for
the spoofed email address. As a result, the honest user
will use RP's service (and typically will be logged in to
RP) under an ID that belongs to the attacker, which, for
example, could allow the attacker to track actions of the
honest user or obtain user secrets.  This violates
Condition~\textbf{(B)}.

To fix this problem, upon receipt of the UC in
\refprotostep{recv-uc}, LD should check whether it contains
the correct email address and public key, i.e., the one
requested by LD in \refprotostep{pubkey-ld-pif}. The same
is true for the CIF, which behaves similarly to the LD. The
formal model of BrowserID presented in
Appendix~\ref{sec:analysisbrowserid-pidp} contains these
fixes.

\subsection{Security of the Fixed System}\label{sec:securefixedBrowserID}

For the fixed BrowserID system with primary IdPs, we have
proven the following theorem, which says that a fixed
BrowserID web system (i.e., the system where the above
described fix is applied) satisfies the security properties
\textbf{(A)} and \textbf{(B)}.

\begin{theorem}\label{thm:secur-fixed-syst}
  Let $\bidwebsystem$ be a fixed BrowserID web
  system. Then, $\bidwebsystem$ is secure (w.r.t.~authentication).
\end{theorem}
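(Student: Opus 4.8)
The plan is to prove the two authentication properties \textbf{(A)} and \textbf{(B)} separately, by contradiction: assume there is a run of a fixed BrowserID web system $\bidwebsystem$ in which the property is violated, and derive a contradiction. The proof is a reachability/invariant argument in the style of the secondary-mode analysis of \cite{FettKuestersSchmitz-SP-2014}. First I would identify the ``choke points'' through which a winning attacker must pass. For property \textbf{(A)}, the attacker needs an RP service token issued for an ID $i=\an{\mi{name},d}$ of an honest browser $b$. By the modeling of RP (which checks CAPs, uses HTTPS with STS, and checks origin headers) such a token is only ever sent in response to a request carrying a \emph{valid} CAP $\an{\mi{uc},\mi{ia}}$ for $i$, i.e.\ a UC $\sig{\an{i,\pub(k)}}{\mapSignKey(y)}$ signed by the IdP $y$ of the domain $d$, together with an IA $\sig{o_{\mathrm{RP}}}{k}$ signed by the matching private key $k$. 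For property \textbf{(B)}, an honest browser is made to send such a CAP to RP for an ID $i'$ it does not own; here the relevant choke point is the LD/CIF script, which after the fix only assembles a CAP from a UC whose email address and public key it has itself just requested.

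The core of the argument is a collection of invariants maintained along every run. The principal ones I would state and prove by induction on the length of the run are: (i) \emph{key secrecy}: the private key $k_u$ a browser generates in step \refprotostep{gen-key-pair} is never known to the attacker and is stored only in the localStorage of $b$ under the LPO origin and never leaked — this leans on the general HTTPS properties of Section~\ref{sec:generalproperties} and on the fact that the LD/CIF scripts only ever send the \emph{public} key $\pub(k_u)$ out (in \refprotostep{pubkey-ld-pif}/\refprotostep{req-uc}); (ii) \emph{honest-UC integrity}: whenever an honest IdP $y$ signs a UC for an honest browser's email domain, the email address in that UC is in fact owned by the browser that requested it (because honest IdPs certify only the identity of the authenticated session user); (iii) \emph{CAP provenance}: a valid CAP for an honest browser's ID $i$ that appears anywhere in a run must have been assembled by an honest LD or CIF instance running for that browser, because producing the IA requires the browser's secret key $k_u$, which by (i) only that browser's scripts possess, and by the AutoFill-style secret-release rule only scripts of the LPO origin receive it. From (iii), together with the fix checked in \refprotostep{recv-uc} (the LD/CIF verifies that the received UC contains exactly the email address and public key it requested), one gets that the email address in any such valid CAP equals the one the honest user selected in step \refprotostep{ld-user-email} — an address the user owns. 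Property \textbf{(B)} follows immediately; property \textbf{(A)} follows because to \emph{derive} (rather than merely forward inside the browser) such a CAP the attacker would need $k_u$, contradicting (i), and the $\closecorrupt$ case is handled by noting that the secret-index structure of the browser state keeps $k_u$ out of the discarded ``open windows and nonces'' part but the attacker still cannot produce a \emph{fresh} IA signature without having seen one, and no honest script ever emits an IA for the RP origin to anyone but RP-Doc.

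Concretely I would organize the write-up as: (1) fix a run and the attacker-winning event; (2) prove the HTTPS/origin corollaries are applicable (they are, since LPO is honest and RP enforces HTTPS+STS); (3) prove the key-secrecy invariant (i); (4) prove the localStorage/sessionStorage bookkeeping invariants needed to know which email address and key the LD/CIF instance is working with — this is where sessionStorage (the new model feature) and the ``email saved in localStorage under LPO with a nonce in sessionStorage'' mechanism from Phase \refprotophase{ld-start-2} must be tracked carefully; (5) prove honest-UC integrity (ii); (6) combine into CAP provenance (iii) and the fix-check to pin down the email address; (7) conclude \textbf{(A)} and \textbf{(B)}.

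The hard part will be step (4) together with the sheer case explosion of the LD/CIF/PIF/AD interaction: the primary mode has many more postMessage exchanges, cross-origin iframes (PIF inside LD, PIF inside CIF), and — crucially — a \emph{malicious} IdP can supply the PIF and AD scripts, so one cannot assume anything about the PIF's behavior and must route all trust through the post-fix checks performed by the honest LD/CIF. Establishing that the honest LD/CIF script genuinely knows ``which email address'' and ``which public key'' it is certifying, in the presence of an adversarial PIF feeding it messages and an adversary controlling DNS, requires carefully tracking the localStorage/sessionStorage contents and the window/document tree through all seven phases, and arguing that no adversarial message can make an honest LD/CIF instance skip or misapply the \refprotostep{recv-uc} check. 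Relative to the secondary-mode proof, this is where essentially all the new difficulty — and the bulk of the appendix proof — lives.
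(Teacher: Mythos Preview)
Your proposal is essentially the paper's proof: both argue \textbf{(A)} and \textbf{(B)} separately by contradiction, trace the valid CAP back through the honest RP's checks, establish that $k_u$ is generated by and confined to the honest browser's LPO-origin scripts (the paper's Lemmas~5 and~6), show that the UC must have been signed by the honest IdP only in a session authenticated by $b$'s secret (the paper's Lemma~4), and for \textbf{(B)} use the fix-check in the LD/CIF together with an invariant on $s'.\str{email}$ (the paper's Lemma~7) to pin the identity to one the browser owns. Two small slips to fix when you write it up: in the formal model $k_u$ lives in the LD/CIF \emph{scriptstate}, not in localStorage, and it is not delivered via the AutoFill/secrets mechanism (that mechanism is for the IdP password, which is what drives Lemma~4); and under $\closecorrupt$ the windows---and hence the scriptstate holding $k_u$---are \emph{discarded}, not retained, which is precisely why the attacker never learns $k_u$.
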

We prove Conditions \textbf{(A)} and \textbf{(B)}
separately. For both conditions, we assume that they are
not satisfied and lead this to a contradiction. In our
proofs, we make use of the general security properties of
the web model presented in
Section~\ref{sec:generalproperties}, which helped a lot in
making the proof for the primary IdP model more modular and
concise.  The complete proof with all details is provided
in Appendix~\ref{app:proofbrowserid-pidp}.

\section{Privacy of BrowserID}\label{sec:privacyanalysis-pidp}

In this section, we study the privacy guarantees of the BrowserID
system with primary IdPs. Regarding privacy, Mozilla states that
``...the BrowserID protocol never leaks tracking information back to
the Identity Provider.''~\cite{mozilla/persona/faq} and ``Unlike other
sign-in systems, BrowserID does not leak information back to any
server [\dots] about which sites a user
visits.''~\cite{mozilla/persona/introducing-browserid}.\footnote{Clearly,
  in the current state of BrowserID a malicious LPO server could gather
  information about users' log in history. However, an integration of
  the code currently delivered by LPO into the browser, as envisioned,
  would avoid this issue. Currently, Mozilla's LPO needs to be
  trusted.}  While this is not a formal definition of the level of
privacy that BrowserID is supposed to provide, these and other
statements\footnote{see, e.g.,
  \url{https://developer.mozilla.org/en-US/Persona/Why_Persona} and \url{http://identity.mozilla.com/post/7669886219}.}
make it certainly clear that, unlike for other SSO systems, IdPs
should not be able to learn to which RPs their users log in.

In the process of formalizing this intuition in our model of BrowserID
and trying to prove this property, we found severe attacks against the
privacy of BrowserID which made clear that BrowserID does not provide
even a rather weak privacy property in the presence of a malicious
IdP.  Intuitively, the property says that a malicious IdP (which acts
as a web attacker) should not be able to tell whether a user logs in
at an honest RP $r$ or some other honest RP $r'$. In other words, a
run in which the user logs in at $r$ at some point should be
indistinguishable (from the point of view of the IdP) from the run in
which the user logs in at $r'$ instead.  Indistinguishability means
that the two sequences of messages received by the web attacker in the
two runs are statically equivalent in the usual sense of Dolev-Yao
models (see \cite{AbadiFournet-POPL-2001}), i.e., a Dolev-Yao attacker
cannot distinguish between the two sequences. Details of the privacy
definition are not important here since our attacks clearly show that
privacy is broken for any reasonable definition of
privacy. Unfortunately, our attacks are not caused by a simple
implementation error, but rather a fundamental design flaw in the
BrowserID protocol. Fixes for this flaw are conceivable, but not
without major changes to the design of BrowserID as discussed in
Section~\ref{sec:priv-privacy-fix}. Such a redesign of BrowserID and a
proof of privacy of the redesigned system are therefore out of the
scope of this paper, which focuses on the existing and deployed
version of BrowserID.

\begin{figure}[t!]
  \centering
\begin{tikzpicture}[scale=1]
\usetikzlibrary{arrows}

\draw[thick] (6,0) rectangle (0,4) node[anchor=north west] {\nolinkurl{idp.com/index}};

\draw[thick] (5.5,0.25) rectangle (0.25,3.25) node[anchor=north west] {\nolinkurl{relyingparty.com/login}};

\draw[thick] (5,0.5) rectangle (0.5,2.5) node[anchor=north west] {\nolinkurl{login.persona.org/cif}};

\draw[thick] (4.5,0.75) rectangle (0.75,1.75)  node[anchor=north west] {\nolinkurl{idp.com/pif}};

\draw [->, thick, color=blue, bend angle=15, bend right]  (4,1.25) to node[pos=0.35](arrowmid){} (5,3.6);

\draw [thick, color=red] (0.25,3.4) to (-0.4,3.4);

\draw [thick, color=red] (1,1.15) to (-0.4,1.15);

\draw [thick, color=red] (arrowmid) to (6.4,3.4);

\node(text-1) at (-3.15,4) [anchor=north west]{
  \begin{minipage}{2.5cm}
    \scriptsize
    \attackstep{privacy-attack-1}\hspace{1ex}User visits her identity provider (could be in a PIF itself, i.e., during login at some other RP). 
  \end{minipage}
};

\node(text-2) at (-3.15,1.75) [anchor=north west]{
  \begin{minipage}{2.5cm}
    \scriptsize
    \attackstep{privacy-attack-2}\hspace{1ex}PIF exists only when BrowserID automatically logs the user in at $r$ (because the user was logged in before).
  \end{minipage}
};

\node(text-3) at (6.45,4) [anchor=north west]{
  \begin{minipage}{2.3cm}
    \scriptsize \attackstep{privacy-attack-3}\hspace{1ex}When the user
    is logged in at $r$, the identity provider gets a notification via
    postMessage when the PIF iframe is loaded.
  \end{minipage}
};

\end{tikzpicture}
\caption{The three main steps of the privacy attack.
  Using a specially crafted PIF document, a malicious IdP can notify
  itself via postMessage when the user is logged in at some RP $r$.}

\label{fig:privacy-attack}
\end{figure}

\subsection{Privacy Attacks on BrowserID}
\label{sec:priv-attack-brows}

For our attacks to work, it suffices that the IdP is a web
attacker. They work even if all DNS servers, RPs, and LPO are honest,
and all parties use encrypted connections. In what follows, we present
several variants of attacks on privacy.

\paragraph{PostMessage-Based  Attack.} The adversary is a malicious IdP that is interested to learn whether a user is logged in at
RP $r$. Figure~\ref{fig:privacy-attack} illustrates the main steps:

\noindent
\textbf{Step~\refattackstep{privacy-attack-1}.} First, the victim
visits her IdP. In BrowserID, email providers serve as IdPs, and
therefore it is not unlikely that a user visits this web site (e.g.,
for checking email or to use other services). As the IdP usually has
some cookie set at the user's browser, it learns the identity of the
victim. The IdP now creates a hidden iframe containing the login page
of $r$. 

\noindent
\textbf{Step~\refattackstep{privacy-attack-2}.} The login page
of $r$ (now loaded as an iframe within IdP's web site) includes and
runs the BrowserID script. As defined in the BrowserID protocol, the
script creates the communication iframe (see ``Automatic CAP
Creation'' in Section~\ref{sec:javascript-descr}), which in turn
checks whether the email address is marked as logged in at $r$ in the
localStorage of the user's browser. Only then it will try
to create a new CAP, for which it needs a PIF (the
same as in Phase~\refprotophase{ld-prov-1} in
Figure~\ref{fig:browserid-lowlevel-ld}).

\noindent
\textbf{Step~\refattackstep{privacy-attack-3}.} The PIF is loaded from
the IdP. Note that from this action alone, the IdP does not learn
where the user wants to log in. However, instead of the original
(honest) PIF document, the IdP can send a modified one that sends a
postMessage to the parent of the parent of the parent of its own
window, which in this setting is the IdP document that was opened by
the user in Step~\refattackstep{privacy-attack-1}. When the IdP
receives this message in the document from
Step~\refattackstep{privacy-attack-1}, it knows that the PIF was
loaded, and therefore, that the user is currently logged in at $r$.

Note that the IdP can repeatedly apply the above as long as the user
stays on the IdP's web site. During this period, the IdP can see
whether or not the user is logged in at the targeted RP. Clearly, the
IdP can simultaneously run the attack for different RPs in order to
track the user's login status for all such RPs. In particular, the IdP
can distinguish whether a user is logged in at RP $r$ or $r'$, which
violates the privacy property sketched above. In our formal model, the
malicious IdP would run the attacker script $\Rasp$ in
\nolinkurl{idp.com/index} and in \nolinkurl{idp.com/pif} (see
Figure~\ref{fig:privacy-attack}) in order to carry out the attack.

\paragraph{Variant 1: Waiting for UC requests.} The IdP first acts as
in Step~\refattackstep{privacy-attack-1}. Now, it could passively wait
for incoming requests for the PIF document or UC requests on its
server, which tell the IdP that a provisioning flow (probably
initiated by Step~\refattackstep{privacy-attack-1}) was started. This
variant cannot be executed in parallel and is less reliable in
practice, though.

\paragraph{Variant 2: PIF as Attack Source.} 
Step~\refattackstep{privacy-attack-1} can also be launched from within
a PIF itself (i.e., the PIF also takes the role of
\nolinkurl{idp.com/index} above). This way, while the user logs in at
some $r_1$, the IdP could check whether the user is logged in at
$r_2$, for any $r_2$.

\paragraph{Variant 3: Scanning the Window Structure (I).} Instead of using
a postMessage to alert the IdP's outer document about the existence of
the inner PIF document, the outer document could as well repeatedly
scan the window tree of the iframe containing $r$'s web site: While
the IdP sees almost no information about $r$'s document in the
iframe (as it is not same origin), it can see the list of subwindows
(i.e., the CIF, and possibly other iframes). For these frames, again,
it would see the subwindows, especially the PIF, which it could
identify uniquely by checking whether it is same origin with the IdPs
outer window.

\paragraph{Variant 4: Scanning the Window Structure (II).} In Variant
2, using a same-origin check, the malicious IdP can uniquely identify
the PIF in the window structure. This same-origin check could be
skipped and it could only be checked whether a PIF is generated, based
on the window structure alone. While this is less reliable, this
attack could be launched by \emph{any} third party web attacker (not
only the IdP to which the user's email address belongs) to check
whether the victim is logged in at $r$ or not.

\smallskip

We verified (all variants of) the attacks in our model as well as in a
real-world BrowserID setup. Implementing proofs-of-concept required
only a few lines of (trivial) JavaScript. In most attack
variants, we directly or indirectly use the structure of the windows
inside the web browser as a side channel. To our knowledge, this is
the first description of this side channel for breaking privacy in
browsers. The attacks have been reported to and confirmed by Mozilla~\cite{mozilla-browserid-primary-privacy-bug-report}.

\subsection{Fixing the Privacy of BrowserID}
\label{sec:priv-privacy-fix}

Fixing the privacy of BrowserID seems to require a substantial
redesign of the system. Regarding the presented attacks, BrowserID's
main weakness is the window structure. The most obvious mitigation,
modifying the CIF such that it always creates the PIF (even if the
user has not logged in before), does not work: To open the PIF, the
CIF looks up (in the localStorage) the user's identity at the current
RP to derive the address of the PIF. If the
user has not logged in before, this information is not available.

Another approach would be to use cross-origin \xhrs to replace the
features of the PIF. This solution would require a major revision in
the inner workings of BrowserID and would not protect against
Variant~1.

\section{Related Work}\label{sec:relatedwork}

The formal treatment of the security of the web
infrastructure and web applications based on this infrastructure is a
young discipline.  Of the few works in this area
even less are based on a general model that incorporates
essential mechanisms of the web.

Early works in formal web security analysis (see, e.g.,
\cite{kerschbaum-SP-2007-XSRF-prevention,Jackson-TACAS-2002-Alloy,ArmandoEtAl-FMSE-2008,SantsaiHaekyBeznosov-CS-2012,ChariJutlaRoy-IACR-2011})
are based on very limited models developed specifically for
the application under scrutiny. The first work to consider
a general model of the web, written in the finite-state
model checker Alloy, is the work by Akhawe et
al.\cite{AkhawBarthLamMitchellSong-CSF-2010}. Inspired by
this work, Bansal et
al.\cite{BansalBhargavanetal-POST-2013-WebSpi,BansalBhargavanMaffeis-CSF-2012}
built a more expressive model, called WebSpi, in ProVerif
\cite{Blanchet-CSFW-2001}, a tool for symbolic
cryptographic protocol analysis. These models have
successfully been applied to web standards and
applications. Recently, Kumar \cite{Kumar-RAID-2014}
presented a high-level Alloy model and applied it to SAML
single sign-on. However, compared to our model in \cite{FettKuestersSchmitz-SP-2014} and its extensions
considered here, on the one hand, all above mentioned
models are formulated in the specification languages of
specific analysis tools, and hence, are tailored towards
automation (while we perform manual
analysis). On the other hand, the models considered in
these works are much less expressive and precise. For
example, these models do not incorporate a precise handling
of windows, documents, or iframes; cross-document messaging
(postMessages) or session storage are not included at all.
In fact, several general web features and technologies that
have been crucial for the analysis of BrowserID are not
supported by these models, and hence, these models cannot
be applied to BrowserID. Moreover, the complexity of
BrowserID exceeds that of the systems analyzed in these
other works in terms of the use of web technologies and the
complexity of the protocols. For example, BrowserID in
primary mode is a protocol consisting of 48 different
(network and inter-frame) messages compared to typically
about 10--15 in the protocols analyzed in other models.

The BrowserID system in the primary mode has been analyzed
before using the AuthScan tool developed by Bai et
al.~\cite{baiLeietal-NDSS-2013-authscan}. Their work
focusses on the automated extraction of a model from a
protocol implementation. This tool-based analysis did not
reveal the identity injection attack, though; privacy
properties have not been studied there. Dietz and Wallach
demonstrated a technique to secure BrowserID when specific
flaws in TLS are considered~\cite{DietzWallach-NDSS-2014}.

\section{Conclusion}\label{sec:conclusion}

In this paper, we slightly extended our existing
web model, resulting in the most comprehensive model of the
web so far. It contains many security-relevant features and
is designed to closely mimic standards and specifications
for the web. As such, it constitutes a solid basis for the
analysis of a broad range of web standards and
applications.

Based on this model, we presented a detailed analysis
of the BrowserID SSO system in the  primary IdP
mode.  During the security proof of the fundamental
authentication requirements {\bf (A)} and {\bf (B)}, we
found a flaw in BrowserID that does not apply to its
secondary mode and leads to an identity injection attack,
and hence, violates property {\bf (B)}.  We confirmed the
attack on the actual BrowserID implementation and reported
it to Mozilla, who acknowledged it. We proposed a fix and
formally proved that the fixed system fulfills both {\bf
  (A)} and {\bf (B)}. Among the so far very few efforts on
formally analyzing web applications and standards in
expressive web models, our analysis constitutes the most
complex formal analysis of a web application to date. It
illustrates that (manual) security analysis of complex
real-world web applications in a detailed web model, while
laborious, is feasible and yields meaningful and
practically relevant results.

During an attempt to formally analyze the privacy promise
of the BrowserID system, we again found practical
attacks. These attacks have been reported to and confirmed
by Mozilla and, unfortunately, show that BrowserID would
have to undergo a substantial redesign in order to fulfill
its privacy promise. Interestingly, for our attacks we use
a side channel that exploits information about the
structure of windows in a browser. To the best of our
knowledge, such side channel attacks have not gained much
attention so far in the literature.

Finally, we have identified and proven important security
properties of general application independent web features
in order to facilitate future analysis efforts of web
standards and web applications in the web model.

{\footnotesize

\bibliographystyle{abbrv}
}

\appendix

\section{Communication Model}\label{app:communication-model}

Extending Section~\ref{sec:communicationmodel}, we here present
details and definitions on the basic concepts of the communication
model. For readability, some parts from
Section~\ref{sec:communicationmodel} are repeated.

\subsection{Terms, Messages and Events} 
The signature $\Sigma$ for the terms and
messages considered in this work is the union of the following
pairwise disjoint sets of function symbols:
\begin{itemize}
\item constants $C = \addresses\,\cup\, \mathbb{S}\cup
  \{\True,\bot,\notdef\}$ where the three sets are pairwise disjoint,
  $\mathbb{S}$ is interpreted to be the set of ASCII strings
  (including the empty string $\varepsilon$), and $\addresses$ is
  interpreted to be a set of (IP) addresses,\footnote{For brevity of presentation, in Section~\ref{sec:communicationmodel} the set $C$ contained also the set of nonces $\nonces$. Here nonces are considered separately (see Definition~\ref{def:terms}).}
\item function symbols for public keys, (a)symmetric
  en\-cryp\-tion/de\-cryp\-tion, and signatures: $\mathsf{pub}(\cdot)$,
  $\enc{\cdot}{\cdot}$, $\dec{\cdot}{\cdot}$, $\encs{\cdot}{\cdot}$,
  $\decs{\cdot}{\cdot}$, $\sig{\cdot}{\cdot}$,
  $\checksig{\cdot}{\cdot}$, and $\unsig{\cdot}$,
\item $n$-ary sequences $\an{}, \an{\cdot}, \an{\cdot,\cdot},
  \an{\cdot,\cdot,\cdot},$ etc., and
\item projection symbols $\pi_i(\cdot)$ for all $i \in \mathbb{N}$.
\end{itemize}

\begin{definition}\label{def:terms}
  Let $X=\{x_0,x_1,\dots\}$ be a set of variables and $\nonces$ be an
  infinite set of constants (\emph{nonces}) such that $\Sigma$, $X$,
  and $\nonces$ are pairwise disjoint. For $N\subseteq\nonces$, we
  define the set $\gterms_N(X)$ of \emph{terms} over $\Sigma\cup N\cup
  X$ inductively as usual: (1) If $t\in N\cup X$, then $t$ is a
  term. (2) If $f\in \Sigma$ is an $n$-ary function symbol in $\Sigma$
  for some $n\ge 0$ and $t_1,\ldots,t_n$ are terms, then
  $f(t_1,\ldots,t_n)$ is a term.
\end{definition}

By $\gterms_N=\gterms_N(\emptyset)$, we denote the set of all terms
over $\Sigma\cup N$ without variables, called \emph{ground terms}. The
set $\messages$ of messages (over $\nonces$) is defined to be the set
of ground terms $\gterms_{\nonces}$. 

\begin{example}
  For example, $k\in \nonces$ and $\pub(k)$ are messages, where $k$
  typically models a private key and $\pub(k)$ the corresponding
  public key. For constants $a$, $b$, $c$ and the nonce $k\in
  \nonces$, the message $\enc{\an{a,b,c}}{\pub(k)}$ is interpreted to
  be the message $\an{a,b,c}$ (the sequence of constants $a$, $b$,
  $c$) encrypted by the public key $\pub(k)$.
\end{example}

For strings (elements in $\mathbb{S}$), we use a
specific font. For example, $\cHttpReq$ and $\cHttpResp$
are strings. We denote by $\dns\subseteq \mathbb{S}$ the
set of domains, e.g., $\str{example.com}\in \dns$.  We
denote by $\methods\subseteq \mathbb{S}$ the set of methods
used in HTTP requests, e.g., $\mGet$, $\mPost\in \methods$.

The equational theory associated with the signature
$\Sigma$ is given in Figure~\ref{fig:equational-theory}.

\begin{figure}
\begin{align}
\dec{\enc x{\pub(y)}}{y} &= x\\
\decs{\encs x{y}}{y} &= x\\
\unsig{\sig{x}{y}} &= x\\
\checksig{\sig{x}{y}}{\pub(y)} &= \True\\
\pi_i(\an{x_1,\dots,x_n}) &= x_i \text{\;\;if\ } 1 \leq i \leq n \\
\proj{j}{\an{x_1,\dots,x_n}} &= \notdef \text{\;\;if\ } j
\not\in \{1,\dots,n\}\\
\proj{j}{t} &= \notdef \text{\;\;if $t$ is not a sequence}
\end{align}
\caption{Equational theory for $\Sigma$.}\label{fig:equational-theory}
\end{figure}

By $\equiv$ we denote the congruence relation on $\terms(X)$ induced
by this theory. For example, we have that
$\pi_1(\dec{\enc{\an{\str{a},\str{b}}}{\pub(k)}}{k})\equiv \str{a}$.

\begin{definition}
  An \emph{event (over $\addresses$ and $\messages$)} is of the form
  $(a{:}f{:}m)$, for $a, f\in \addresses$ and $m \in \messages$, where
  $a$ is interpreted to be the receiver address and $f$ is the sender
  address.  We denote by $\events$ the set of all events.
\end{definition}

\subsection{Atomic Processes, Systems and Runs} We here
define atomic processes, systems, and runs of systems.  

An atomic process takes its current state and an
event as input, and then (non-deterministi\-cally) outputs a new state
and a set of events.
\begin{definition}\label{def:atomic-process-and-process}
  A \emph{(generic) \ap} is a tuple $p = (I^p, Z^p, R^p,
  s^p_0)$ where $I^p \subseteq \addresses$, $Z^p$ is a set
  of states, $R^p\subseteq (\events \times Z^p) \times
  (2^\events \times Z^p)$, and $s^p_0\in Z^p$ is the
  initial state of $p$.  We write $(e,z)R(E,z')$ instead of
  $((e,z),(E,z'))\in R$.

  A \emph{system} $\process$ is a (possibly infinite) set
  of \aps.
\end{definition}

\begin{definition}
  A \emph{configuration of a system $\process$} is a tuple $(S, E)$
  where $S$ maps every atomic process $p\in \process$ to its current
  state $S(p)\in Z^p$ and $E$ is a (possibly infinite) multi-set of
  events waiting to be delivered.
\end{definition}

\begin{definition}
  A \emph{processing step of the system $\process$} is of the form
  \[(S,E) \xrightarrow[p \rightarrow E_{\text{out}}]{e \rightarrow p}
  (S', E')\] such that (1) there exists an event $e = (a{:}f{:}m) \in E$,
  $E_\text{out} \subseteq E'$, and a process $p \in \process$ with $(e,
  S(p))R^p(E_\text{out}, S'(p))$ and $a \in I^p$, (2) $S'(p') = S(p')$
  for all $p' \neq p$, and (3) $E' = (E\setminus \{e\}) \cup
  E_\text{out}$ (multi-set operations).  We may omit the superscript
  and/or subscript of the arrow.
\end{definition}

\begin{definition}
  Let $\process$ be a system and $E_0$ be a multi-set of events. A
  \emph{run $\rho$ of a system $\process$ initiated by $E_0$} is a
  finite sequence of configurations $(S_0, E_0),\dots,(S_n, E_n)$ or
  an infinite sequence of configurations $(S_0, E_0),\dots$ such that
  $S_0(p) = s_0^p$ for all $p \in \process$ and $(S_i, E_i)
  \xrightarrow{} (S_{i+1}, E_{i+1})$ for all $0 \leq i < n$ (finite
  run) or for all $i \geq 0$ (infinite run).
\end{definition}
\subsection{Atomic Dolev-Yao Processes}  We next define
atomic Dolev-Yao processes, for which we require that the
messages and states that they output can be computed (more
formally, derived) from the current input event and
state. For this purpose, we first define what it means to
derive a message from given messages.

\begin{definition}
  Let $N\subseteq \nonces$, $\tau \in \gterms_N(\{x_1,\ldots,x_n\})$,
  and $t_1,\ldots,t_n\in \gterms_N$. By
  $\tau[t_1\!/\!x_1,\ldots,t_n\!/\!x_n]$ we denote the (ground) term obtained
  from $\tau$ by replacing all occurrences of $x_i$ in $\tau$ by
  $t_i$, for all $i\in \{1,\ldots,n\}$.
\end{definition}

\begin{definition}
  Let $M\subseteq \messages$ be a set of messages. We say that \emph{a
    message $m$ can be derived from $M$ with nonces $N$} if there
  exist $n\ge 0$, $m_1,\ldots,m_n\in M$, and $\tau\in
  \gterms_N(\{x_1,\ldots,x_n\})$ such that $m\equiv
  \tau[m_1/x_1,\ldots,m_n/x_n]$. We denote by $d_N(M)$ the set of all
  messages that can be derived from $M$ with nonces $N$.
\end{definition}
For example, $a\in d_{\{k\}}(\{\enc{\an{a,b,c}}{\pub(k)}\})$.

\begin{definition} \label{def:adyp} An \emph{atomic Dolev-Yao process
    (or simply, a DY process)} is a tuple $p = (I^p, Z^p,$ $R^p,
  s^p_0, N^p)$ such that $(I^p, Z^p, R^p, s^p_0)$ is an atomic process
  and (1) $N^p\subseteq\nonces$ is an (initial) set of nonces, (2)
  $Z^p \subseteq \gterms_{\nonces}$ (and hence, $s^p_0\in
  \gterms_{\nonces}$), and (3) for all $a, a', f, f'\in \addresses$,
  $m, m', s, s'\in \gterms_{\nonces}$, set of events $E$ with
  $((a{:}f{:}m), s)R(E, s')$ and $(a'{:}f'{:}m') \in E$ it holds true
  that $m',s' \in d_N(\{m,s\})$. (Note that $a',f'\in d_N(\{m,s\})$.)
\end{definition}

\begin{definition}\label{def:atomicattacker}
  An \emph{(atomic) attacker process for a set of sender
    addresses $A\subseteq \addresses$} is an atomic DY
  process $p = (I, Z, R, s_0, N)$ such that for all $a,
  f\in \addresses$, $m\in \gterms_{\nonces}$, and $s\in
  Z$ we have that $((a{:}f{:}m), s)R(E,s')$ iff $s'=\an{\an{a,
      f, m}, s}$ and $E=\{(a'{:}f'{:}m')\mid a'\in \addresses$,
  $f'\in A$, $m'\in d_N(\{m,s\})\}$.
\end{definition}

\subsection{Scripting Processes}
We define scripting processes, which model client-side scripting
technologies, such as JavaScript. Scripting processes are defined
similarly to DY processes.

\begin{definition} \label{def:sp} A \emph{scripting
    process} (or simply, a \emph{script}) is a relation
  $R\subseteq (\terms \times 2^{\nonces})\times \terms$
  such that for all $s, s' \in \terms$ and $N\subseteq
  \nonces$ with $(s,N)\,R\,s'$ it follows that $s'\in
  d_N(s)$.
\end{definition}
A script is called by the browser which provides it with a
(fresh, infinite) set $N$ of nonces and state information
$s$. The script then outputs a term $s'$, which represents
the new internal state and some command which
is interpreted by the browser.

Similarly to an attacker process, we define the
\emph{attacker script} $\Rasp$. This script outputs
everything that is derivable from the input, i.e.,
$\Rasp=\{((s,N),s')\mid s\in \terms, N\subseteq \nonces,
s'\in d_N(s)\}$.

\section{Message and Data
  Formats}\label{app:message-data-formats}

We now provide some more details about data and message
formats that are needed for the formal treatment of the web
model and the analysis of BrowserID presented in the rest
of the appendix.

\subsection{Notations}\label{app:notation}

\begin{definition}[Sequence Notations]
  For a sequence $t = \an{t_1,\dots,t_n}$ and a set $s$ we
  use $t \subsetPairing s$ to say that $t_1,\dots,t_n \in
  s$.  We define $\left. x \inPairing t\right. \iff \exists
  i: \left. t_i = x\right.$.
  We write $t \plusPairing y$ to denote the sequence
  $\an{t_1,\dots,t_n,y}$. For a sequence $t=
  \an{t_1,\ldots,t_n}$ we define $|t| = n$. If $t$ is not a
  sequence, we set $|t| = \notdef$. For a finite set $M$
  with $M = \{m_1, \dots,m_n\}$ we use $\an{M}$ to denote
  the term of the form $\an{m_1,\dots,m_n}$. (The order of
  the elements does not matter; one is chosen arbitrarily.)
\end{definition}

\begin{definition}\label{def:dictionaries}
  A \emph{dictionary over $X$ and $Y$} is a term of the
  form \[\an{\an{k_1, v_1}, \dots, \an{k_n,v_n}}\] where
  $k_1, \dots,k_n \in X$, $v_1,\dots,v_n \in Y$, and the
  keys $k_1, \dots,k_n$ are unique, i.e., $\forall i\neq j:
  k_i \neq k_j$. We call every term $\an{k_i,v_i}$, $i\in
  \{1,\ldots,n\}$, an \emph{element} of the dictionary with
  key $k_i$ and value $v_i$.  We often write $\left[k_1:
    v_1, \dots, k_i:v_i,\dots,k_n:v_n\right]$ instead of
  $\an{\an{k_1, v_1}, \dots, \an{k_n,v_n}}$. We denote the
  set of all dictionaries over $X$ and $Y$ by $\left[X
    \times Y\right]$.
\end{definition}
We note that the empty dictionary is equivalent to the
empty sequence, i.e.,  $[] = \an{}$.  Figure
\ref{fig:dictionaries} shows the short notation for
dictionary operations that will be used when describing the
browser atomic process. For a dictionary $z = \left[k_1:
  v_1, k_2: v_2,\dots, k_n:v_n\right]$ we write $k \in z$ to
say that there exists $i$ such that $k=k_i$. We write
$z[k_j] := v_j$ to extract elements. If $k \not\in z$, we
set $z[k] := \an{}$.

\begin{figure}[htb!]\centering
  \begin{align}
    \left[k_1: v_1, \dots, k_i:v_i,\dots,k_n:v_n\right][k_i] = v_i%
  \end{align}\vspace{-2.5em}
  \begin{align}
    \nonumber \left[k_1: v_1, \dots, k_{i-1}:v_{i-1},k_i: v_i, k_{i+1}:v_{i+1}\dots,k_n: v_n\right]-k_i =\\
         \left[k_1: v_1, \dots, k_{i-1}:v_{i-1},k_{i+1}:v_{i+1}\dots,k_n: v_n\right]
  \end{align}
  \caption{Dictionary operators with $1\le i\le n$.}\label{fig:dictionaries}
\end{figure}

Given a term $t = \an{t_1,\dots,t_n}$, we can refer to any
subterm using a sequence of integers. The subterm is
determined by repeated application of the projection
$\pi_i$ for the integers $i$ in the sequence. We call such
a sequence a \emph{pointer}:

\begin{definition}\label{def:pointer}
  A \emph{pointer} is a sequence of non-negative
  integers. We write $\tau.\ptr{p}$ for the application of
  the pointer $\ptr{p}$ to the term $\tau$. This operator
  is applied from left to right. For pointers consisting of
  a single integer, we may omit the sequence braces for
  brevity.
\end{definition}

\begin{example}
  For the term $\tau = \an{a,b,\an{c,d,\an{e,f}}}$ and the
  pointer $\ptr{p} = \an{3,1}$, the subterm of $\tau$ at
  the position $\ptr{p}$ is $c =
  \proj{1}{\proj{3}{\tau}}$. Also, $\tau.3.\an{3,1} =
  \tau.3.\ptr{p} = \tau.3.3.1 = e$.
\end{example}

To improve readability, we try to avoid writing, e.g.,
$\compn{o}{2}$ or $\proj{2}{o}$ in this document. Instead,
we will use the names of the components of a sequence that
is of a defined form as pointers that point to the
corresponding subterms. E.g., if an \emph{Origin} term is
defined as $\an{\mi{host}, \mi{protocol}}$ and $o$ is an
Origin term, then we can write $\comp{o}{protocol}$ instead
of $\proj{2}{o}$ or $\compn{o}{2}$. See also
Example~\ref{ex:url-pointers}.

In the pseudocode, we will write, for example, 

\medskip

\begin{algorithmic}
  \LetST{$x,y$}{$\an{\str{Constant},x,y} \equiv
    t$}{doSomethingElse}
\end{algorithmic} \setlength{\parindent}{1em}

\medskip

\noindent for some variables $x,y$, a string
$\str{Constant}$, and some term $t$ to express that $x :=
\proj{2}{t}$, and $y := \proj{3}{t}$ if $\str{Constant}
\equiv \proj{1}{t}$ and if $|\an{\str{Constant},x,y}| =
|t|$,  and that otherwise
$x$ and $y$ are not set and doSomethingElse is executed.

\subsection{URLs}\label{sec:urls}

\begin{definition}
  A \emph{URL} is a term of the form $\an{\tUrl, \mi{protocol},
    \mi{host}, \mi{path}, \mi{params}}$ with $\mi{protocol}$
  $\in \{\http, \https\}$ (for \textbf{p}lain (HTTP) and
  \textbf{s}ecure (HTTPS)), $\mi{host} \in \dns$,
  $\mi{path} \in \mathbb{S}$ and $\mi{params} \in
  \dict{\mathbb{S}}{\terms}$. The set of all valid URLs
  is $\urls$.
\end{definition}

\begin{example} \label{ex:url-pointers}
  For the URL $u = \an{\tUrl, a, b, c, d}$, $\comp{u}{protocol} =
  a$. If, in the algorithm described later, we say $\comp{u}{path} :=
  e$ then $u = \an{\tUrl, a, b, c, e}$ afterwards. 
\end{example}

\subsection{Origins}\label{sec:origins}
\begin{definition} An \emph{origin} is a term of the form
  $\an{\mi{host}, \mi{protocol}}$ with $\mi{host} \in
  \dns$ and $\mi{protocol} \in \{\http, \https\}$. We write
  $\origins$ for the set of all origins.  See
  Example~\ref{ex:window} for an example of an
  origin.
\end{definition}

\subsection{Cookies}\label{sec:cookies}

\begin{definition} A \emph{cookie} is a term of the form
  $\an{\mi{name}, \mi{content}}$ where $\mi{name} \in
  \terms$, and $\mi{content}$ is a term of the form
  $\an{\mi{value}, \mi{secure}, \mi{session},
    \mi{httpOnly}}$ where $\mi{value} \in
  \terms$,  $\mi{secure}$, $\mi{session}$,
  $\mi{httpOnly} \in \{\True, \bot\}$.  We write $\cookies$
  for the set of all cookies.
\end{definition}

If the $\mi{secure}$ attribute of a cookie is set, the
browser will not transfer this cookie over unencrypted HTTP
connections. If the $\mi{session}$ flag is set, this cookie
will be deleted as soon as the browser is closed. The
$\mi{httpOnly}$ attribute controls whether JavaScript has
access to this cookie.

Note that cookies of the form described here are only
contained in HTTP(S) requests. In responses, only the
components $\mi{name}$ and $\mi{value}$ are transferred as
a pairing of the form $\an{\mi{name}, \mi{value}}$.

\subsection{HTTP Messages}\label{sec:http-messages-full}
\begin{definition}
  An \emph{HTTP request} is a term of the form shown in
  (\ref{eq:default-http-request}). An \emph{HTTP response}
  is a term of the form shown in
  (\ref{eq:default-http-response}).
  \begin{align}
    \label{eq:default-http-request}
    & \hreq{ nonce=\mi{nonce}, method=\mi{method},
      xhost=\mi{host}, xpath=\mi{path},
      parameters=\mi{parameters}, headers=\mi{headers},
      xbody=\mi{body}
    } \\
    \label{eq:default-http-response}
    & \hresp{ nonce=\mi{nonce}, status=\mi{status},
      headers=\mi{headers}, xbody=\mi{body} }
  \end{align}
  The components are defined as follows:
  \begin{itemize}
  \item $\mi{nonce} \in \nonces$ serves to map each
    response to the corresponding request 
  \item $\mi{method} \in \methods$ is one of the HTTP
    methods.
  \item $\mi{host} \in \dns$ is the host name in the HOST
    header of HTTP/1.1.
  \item $\mi{path} \in \mathbb{S}$ is a string indicating
    the requested resource at the server side
  \item $\mi{status} \in \mathbb{S}$ is the HTTP status
    code (i.e., a number between 100 and 505, as defined by
    the HTTP standard)
  \item $\mi{parameters} \in
    \dict{\mathbb{S}}{\terms}$ contains URL parameters
  \item $\mi{headers} \in \dict{\mathbb{S}}{\terms}$,
    containing request/response headers. The dictionary
    elements are terms of one of the following forms: 
    \begin{itemize}
    \item $\an{\str{Origin}, o}$ where $o$ is an origin
    \item $\an{\str{Set{\mhyphen}Cookie}, c}$ where $c$ is
      a sequence of cookies
    \item $\an{\str{Cookie}, c}$ where $c \in
      \dict{\mathbb{S}}{\terms}$ (note that in this header,
      only names and values of cookies are transferred)
    \item $\an{\str{Location}, l}$ where $l \in \urls$
    \item $\an{\str{Strict{\mhyphen}Transport{\mhyphen}Security},\True}$
    \end{itemize}
  \item $\mi{body} \in \terms$ in requests and responses. 
  \end{itemize}
  We write $\httprequests$/$\httpresponses$ for the set of
  all HTTP requests or responses, respectively.
\end{definition}

\begin{example}[HTTP Request and Response]
  \begin{align}
    \label{eq:ex-request}
    \nonumber \mi{r} := & \langle
                   \cHttpReq,
                   n_1,
                   \mPost,
                   \str{example.com},
                   \str{/show},
                   \an{\an{\str{index,1}}},\\ & \quad
                   [\str{Origin}: \an{\str{example.com, \https}}],
                   \an{\str{foo}, \str{bar}}
                \rangle \\
    \label{eq:ex-response} \mi{s} := & \hresp{ nonce=n_1,
      status=200,
      headers=\an{\an{\str{Set{\mhyphen}Cookie},\an{\an{\str{SID},\an{n_2,\bot,\bot,\True}}}}},
      xbody=\an{\str{somescript},x}}
  \end{align}
  \noindent
  An HTTP $\mGet$ request for the URL
  \url{http://example.com/show?index=1} is shown in
  (\ref{eq:ex-request}), with an Origin header and a body
  that contains $\an{\str{foo},\str{bar}}$. A possible
  response is shown in (\ref{eq:ex-response}), which
  contains an httpOnly cookie with name $\str{SID}$ and
  value $n_2$ as well as the string representation
  $\str{somescript}$ of the scripting process
  $\mathsf{script}^{-1}(\str{somescript})$ (which should be
  an element of $\scriptset$) and its initial state
  $x$.
\end{example}

\subsubsection{Encrypted HTTP
  Messages.} \label{sec:http-messages-encrypted-full}
For HTTPS, requests are encrypted using the public key of
the server.  Such a request contains an (ephemeral)
symmetric key chosen by the client that issued the
request. The server is supported to encrypt the response
using the symmetric key.

\begin{definition} An \emph{encrypted HTTP request} is of
  the form $\enc{\an{m, k'}}{k}$, where $k$, $k' \in
  \nonces$ and $m \in \httprequests$. The corresponding
  \emph{encrypted HTTP response} would be of the form
  $\encs{m'}{k'}$, where $m' \in \httpresponses$. We call
  the sets of all encrypted HTTP requests and responses
  $\httpsrequests$ or $\httpsresponses$, respectively.
\end{definition}

\begin{example}
  \begin{align}
    \label{eq:ex-enc-request} \ehreqWithVariable{r}{k'}{\pub(k_\text{example.com})} \\
    \label{eq:ex-enc-response} \ehrespWithVariable{s}{k'}
  \end{align} The term (\ref{eq:ex-enc-request}) shows an
  encrypted request (with $r$ as in
  (\ref{eq:ex-request})). It is encrypted using the public
  key $\pub(k_\text{example.com})$.  The term
  (\ref{eq:ex-enc-response}) is a response (with $s$ as in
  (\ref{eq:ex-response})). It is encrypted symmetrically
  using the (symmetric) key $k'$ that was sent in the
  request (\ref{eq:ex-enc-request}).
\end{example}

\subsection{DNS Messages}\label{sec:dns-messages}
\begin{definition} A \emph{DNS request} is a term of the form
$\an{\cDNSresolve, \mi{domain}, \mi{n}}$ where $\mi{domain}$ $\in
\dns$, $\mi{n} \in \nonces$. We call the set of all DNS requests
$\dnsrequests$.
\end{definition}

\begin{definition} A \emph{DNS response} is a term of the form
$\an{\cDNSresolved, \mi{result}, \mi{n}}$ with $\mi{result} \in
\addresses$, $\mi{n} \in \nonces$. We call the set of all DNS
responses $\dnsresponses$.
\end{definition}

DNS servers are supposed to include the nonce they received
in a DNS request in the DNS response that they send back so
that the party which issued the request can match it with
the request.

\section{Detailed Description of the Browser Model}
\label{sec:deta-descr-brows}
Following the informal description of the browser model in
Section~\ref{sec:web-browsers}, we now present a formal
model. We start by introducing some notation and
terminology. 

\subsection{Notation and Terminology (Web Browser State)}

Before we can define the state of a web browser, we first
have to define windows and documents. Concrete window and
document terms are shown in Example~\ref{ex:window}.

\begin{definition} A \emph{window} is a term of the form $w
  = \an{\mi{nonce}, \mi{documents}, \mi{opener}}$ with
  $\mi{nonce} \in \nonces$, $\mi{documents} \subsetPairing
  \documents$ (defined below), $\mi{opener} \in \nonces
  \cup \{\bot\}$ where $\comp{d}{active} = \True$ for
  exactly one $d \inPairing \mi{documents}$ if
  $\mi{documents}$ is not empty (we then call $d$ the
  \emph{active document of $w$}).  We write $\windows$ for
  the set of all windows.  We write
  $\comp{w}{activedocument}$ to denote the active document
  inside window $w$ if it exists and $\an{}$ else.
\end{definition}
We will refer to the window nonce as \emph{(window)
  reference}.

The documents contained in a window term
to the left of the active document are the previously
viewed documents (available to the user via the ``back''
button) and the documents in the window term to the right
of the currently active document are documents available
via the ``forward'' button, as will be clear from the
description of web browser model (see
Section~\ref{sec:descr-web-brows}).

A window $a$ may have opened a top-level window
$b$ (i.e., a window term which is not a subterm of a
document term). In this case, the \emph{opener} part of the
term $b$ is the nonce of $a$, i.e., $\comp{b}{opener} =
\comp{a}{nonce}$.

\begin{definition} A \emph{document} $d$ is a term of the
  form 
  \begin{align*}
  \an{\mi{nonce}, \mi{origin}, \mi{script},
  \mi{scriptstate},\mi{scriptinput}, \mi{subwindows},
  \mi{active}}  
  \end{align*}
 where $\mi{nonce} \in \nonces$,
  $\mi{origin} \in \origins$, $\mi{script} \in \terms$,
  $\mi{scriptstate} \in \terms$, $\mi{scriptinput} \in \terms$,
  $\mi{subwindows} \subsetPairing \windows$, $\mi{active}
  \in \{\True, \bot\}$.  A \emph{limited document} is a
  term of the form $\an{\mi{nonce}, \mi{subwindows}}$ with
  $\mi{nonce}$, $\mi{subwindows}$ as above.  A window $w
  \inPairing \mi{subwindows}$ is called a \emph{subwindow}
  (of $d$).  We write $\documents$ for the set of all
  documents.
\end{definition}
We will refer to the document nonce as \emph{(document)
  reference}.  

\begin{example}\label{ex:window}
  The following is an example of a window term with reference
  $n_1$, two documents, and an opener ($n_4$):
  \begin{align*}
    \an{n_1, &\an{\an{n_2,\! \an{\str{example.com}, \http},
          \str{script1}, \an{}, \an{}, \an{}, \bot},
        \\ &\,\,\an{n_3,\! \an{\str{example.com}, \https},
          \str{script2}, \an{}, \an{}, \an{}, \True}}, n_4}
  \end{align*}
  The first document has the reference $n_2$. It was loaded
  from the origin $\an{\str{example.com}, \http}$, which
  translates into \url{http://example.com}. Its scripting
  process has the string representation $\str{script1}$,
  the last state and the input history of this process are
  empty. The document does not have subwindows and is
  inactive ($\bot$). The second document has the reference
  $n_3$, its origin corresponds to
  \url{https://example.com}, the scripting process is
  represented by $\str{script2}$, and the document is
  active ($\top$).  All other components are empty.
\end{example}

We can now define the set of states of web browsers. Note
that we use the dictionary notation that we introduced in
Definition~\ref{def:dictionaries}.

\begin{definition} Let $\mi{OR} := \left\{\an{o,r}
    \middle|\, o \in \origins,\, r \in \nonces\right\}$. The
  \emph{set of states $Z^p$ of a web browser atomic process}
  $p$ consists of the terms of the form
  \begin{align*} \langle\mi{windows}, \mi{ids},
    \mi{secrets}, \mi{cookies}, \mi{localStorage},
    \mi{sessionStorage}, \mi{keyMapping}, \\\mi{sts},
    \mi{DNSaddress}, \mi{nonces}, \mi{pendingDNS},
    \mi{pendingRequests}, \mi{isCorrupted}\rangle
  \end{align*} where
  \begin{itemize}
  \item $\mi{windows} \subsetPairing \windows$,
  \item $\mi{ids} \subsetPairing \terms$,
  \item $\mi{secrets} \in \dict{\origins}{\nonces}$,
  \item $\mi{cookies}$ is a dictionary over $\dns$ and
    dictionaries of $\cookies$,
  \item $\mi{localStorage} \in \dict{\origins}{\terms}$,
  \item $\mi{sessionStorage} \in \dict{\mi{OR}}{\terms}$,
  \item $\mi{keyMapping} \in \dict{\dns}{\terms}$,
  \item $\mi{sts} \subsetPairing \dns$,
  \item $\mi{DNSaddress} \in \addresses$,
  \item $\mi{nonces} \subsetPairing \nonces$,
  \item $\mi{pendingDNS} \in \dict{\nonces}{\terms}$,
  \item $\mi{pendingRequests} \in$ $\terms$,
  \item and $\mi{isCorrupted} \in \{\bot, \fullcorrupt,$ $
    \closecorrupt\}$.
  \end{itemize} 
\end{definition}

\begin{definition} For two window terms $w$ and $w'$ we
  write $w \windowChildOf w'$ if \\
  \[w \inPairing \comp{\comp{w'}{activedocument}}{subwindows}\text{\ .}\]
We write
  $\windowChildOfX$ for the transitive closure.
\end{definition}

In the following description of the web browser relation
$R^p$ we will use the helper functions
$\mathsf{Subwindows}$, $\mathsf{Docs}$, $\mathsf{Clean}$,
$\mathsf{CookieMerge}$ and $\mathsf{AddCookie}$. 

Given a browser state $s$, $\mathsf{Subwindows}(s)$ denotes
the set of all pointers\footnote{Recall the definition of a
  pointer in Definition~\ref{def:pointer}.} to windows in
the window list $\comp{s}{windows}$, their active
documents, and (recursively) the subwindows of these
documents. We exclude subwindows of inactive documents and
their subwindows. With $\mathsf{Docs}(s)$ we denote the set
of pointers to all active documents in the set of windows
referenced by $\mathsf{Subwindows}(s)$.
\begin{definition} 
  For a browser state $s$ we denote by
  $\mathsf{Subwindows}(s)$ the minimal set of
  pointers that satisfies the
  following conditions: (1) For all windows $w \inPairing
  \comp{s}{windows}$ there is a $\ptr{p} \in
  \mathsf{Subwindows}(s)$ such that $\compn{s}{\ptr{p}} =
  w$. (2) For all $\ptr{p} \in \mathsf{Subwindows}(s)$, the
  active document $d$ of the window $\compn{s}{\ptr{p}}$
  and every subwindow $w$ of $d$ there is a pointer
  $\ptr{p'} \in \mathsf{Subwindows}(s)$ such that
  $\compn{s}{\ptr{p'}} = w$.

  Given a browser state $s$, the set $\mathsf{Docs}(s)$ of
  pointers to active documents is the minimal set such that
  for every $\ptr{p} \in \mathsf{Subwindows}(s)$, there is
  a pointer $\ptr{p'} \in \mathsf{Docs}(s)$ with
  $\compn{s}{\ptr{p'}} =
  \comp{\compn{s}{\ptr{p}}}{activedocument}$.
\end{definition}

The function $\mathsf{Clean}$ will be used to determine
which information about windows and documents the script
running in the document $d$ has access to.
\begin{definition} Let $s$ be a browser state and $d$ a
  document.  By $\mathsf{Clean}(s, d)$ we denote the term
  that equals $\comp{s}{windows}$ but with all inactive
  documents removed (including their subwindows etc.) and
  all subterms that represent non-same-origin documents
  w.r.t.~$d$ replaced by a limited document $d'$ with the
  same nonce and the same subwindow list. Note that
  non-same-origin documents on all levels are replaced by
  their corresponding limited document.
\end{definition}

The function $\mathsf{CookieMerge}$ merges two sequences of
cookies together: When used in the browser,
$\mi{oldcookies}$ is the sequence of existing cookies for
some origin, $\mi{newcookies}$ is a sequence of new cookies
that was output by some script. The sequences are merged
into a set of cookies using an algorithm that is based on
the \emph{Storage Mechanism} algorithm described in
RFC6265.
\begin{definition} \label{def:cookiemerge} For a sequence
  of cookies (with pairwise different names)
  $\mi{oldcookies}$ and a sequence of cookies
  $\mi{newcookies}$, the set
  $\mathsf{CookieMerge}(\mi{oldcookies}, \mi{newcookies})$
  is defined by the following algorithm: From
  $\mi{newcookies}$ remove all cookies $c$ that have
  $c.\str{content}.\str{httpOnly} \equiv \True$. For any
  $c$, $c' \inPairing \mi{newcookies}$, $\comp{c}{name}
  \equiv \comp{c'}{name}$, remove the cookie that appears
  left of the other in $\mi{newcookies}$. Let $m$ be the
  set of cookies that have a name that either appears in
  $\mi{oldcookies}$ or in $\mi{newcookies}$, but not in
  both. For all pairs of cookies $(c_\text{old},
  c_\text{new})$ with $c_\text{old} \inPairing
  \mi{oldcookies}$, $c_\text{new} \inPairing
  \mi{newcookies}$, $\comp{c_\text{old}}{name} \equiv
  \comp{c_\text{new}}{name}$, add $c_\text{new}$ to $m$ if
  $\comp{\comp{c_\text{old}}{content}}{httpOnly} \equiv
  \bot$ and add $c_\text{old}$ to $m$ otherwise. The result
  of $\mathsf{CookieMerge}(\mi{oldcookies},
  \mi{newcookies})$ is $m$.
\end{definition}

The function $\mathsf{AddCookie}$ adds a cookie $c$
received in an HTTP response to the sequence of cookies
contained in the sequence $\mi{oldcookies}$. It is again
based on the algorithm described in RFC6265 but simplified
for the use in the browser model.
\begin{definition} \label{def:addcookie} For a sequence of cookies (with different
  names) $\mi{oldcookies}$ and a cookie $c$, the sequence
  $\mathsf{AddCookie}(\mi{oldcookies}, c)$ is defined by the
  following algorithm: Let $m := \mi{oldcookies}$. Remove
  any $c'$ from $m$ that has $\comp{c}{name} \equiv
  \comp{c'}{name}$. Append $c$ to $m$ and return $m$.
\end{definition}

The function $\mathsf{NavigableWindows}$ returns a set of
windows that a document is allowed to navigate. We closely
follow \cite{html5}, Section~5.1.4 for this definition.
\begin{definition} The set $\mathsf{NavigableWindows}(\ptr{w}, s')$
  is the set $\ptr{W} \subseteq
  \mathsf{Subwindows}(s')$ of pointers to windows that the
  active document in $\ptr{w}$ is allowed to navigate. The
  set $\ptr{W}$ is defined to be the minimal set such that
  for every $\ptr{w'}
  \in \mathsf{Subwindows}(s')$ the following is true: 
\begin{itemize}
\item If
  $\comp{\comp{\compn{s'}{\ptr{w}'}}{activedocument}}{origin}
  \equiv
  \comp{\comp{\compn{s'}{\ptr{w}}}{activedocument}}{origin}$
  (i.e., the active documents in $\ptr{w}$ and $\ptr{w'}$ are
  same-origin), then $\ptr{w'} \in \ptr{W}$, and
\item If ${\compn{s'}{\ptr{w}} \childof
    \compn{s'}{\ptr{w'}}}$ $\wedge$ $\nexists\, \ptr{w}''
  \in \mathsf{Subwindows}(s')$ with $\compn{s'}{\ptr{w}'}
  \childof \compn{s'}{\ptr{w}''}$ ($\ptr{w'}$ is a
  top-level window and $\ptr{w}$ is an ancestor window of
  $\ptr{w'}$), then $\ptr{w'} \in \ptr{W}$, and
\item If $\exists\, \ptr{p} \in \mathsf{Subwindows}(s')$
  such that $\compn{s'}{\ptr{w}'} \windowChildOfX
  \compn{s'}{\ptr{p}}$ \\$\wedge$
  $\comp{\comp{\compn{s'}{\ptr{p}}}{activedocument}}{origin}
  =
  \comp{\comp{\compn{s'}{\ptr{w}}}{activedocument}}{origin}$
  ($\ptr{w'}$ is not a top-level window but there is an
  ancestor window $\ptr{p}$ of $\ptr{w'}$ with an active
  document that has the same origin as the active document
  in $\ptr{w}$), then $\ptr{w'} \in \ptr{W}$, and
\item If $\exists\, \ptr{p} \in \mathsf{Subwindows}(s')$ such
  that $\comp{\compn{s'}{\ptr{w'}}}{opener} =
  \comp{\compn{s'}{\ptr{p}}}{nonce}$ $\wedge$ $\ptr{p} \in
  \ptr{W}$ ($\ptr{w'}$ is a top-level window---it has an
  opener---and $\ptr{w}$ is allowed to navigate the opener
  window of $\ptr{w'}$, $\ptr{p}$), then $\ptr{w'} \in
  \ptr{W}$. 
\end{itemize}
\end{definition}

\subsection{Description of the Web Browser Atomic
  Process}\label{sec:descr-web-brows}
We will now describe the relation $R^p$ of a standard HTTP
browser $p$.  For a tuple $r =
\left(\left(\left(a{:}f{:}m\right), s\right), \left(M,
    s'\right)\right)$ we define $r$ to belong to $R^p$
if\/f the non-deterministic algorithm presented in
Section~\ref{app:mainalgorithmwebbrowserprocess}, when
given $\left(\left(a{:}f{:}m\right), s\right)$ as input,
terminates with \textbf{stop}~$M$,~$s'$, i.e., with output
$M$ and $s'$. Recall that $\left(a{:}f{:}m\right)$ is an
(input) event and $s$ is a (browser) state, $M$ is a set of
(output) events, and $s'$ is a new (browser) state.

The notation $\textbf{let}\ n \leftarrow N$ is used to
describe that $n$ is chosen non-de\-ter\-mi\-nis\-tic\-ally from the
set $N$.  We write $\textbf{for each}\ s \in M\ \textbf{do}$
to denote that the following commands (until \textbf{end
  for}) are repeated for every element in $M$, where the
variable $s$ is the current element. The order in which the
elements are processed is chosen non-deterministically.

We first define some functions which will be used in the
main algorithm presented in
Section~\ref{app:mainalgorithmwebbrowserprocess}.

\subsubsection{Functions.} \label{app:proceduresbrowser} In
the description of the following functions we use $a$,
$f$, $m$, $s$ and $N^p$ as read-only global input
variables. Also, the functions use the set $N^p$ as a
read-only set. All other variables are local variables or
arguments.

$\mathsf{TAKENONCE}$ returns a nonce from the set of unused
nonces and modifies the browser state such that the nonce
is added to the sequence of used nonces. Note that this
function returns two values, the nonce $n$ and the modified
state $s'$.
\captionof{algorithm}{\label{alg:takenonce}
  Non-deterministically choose a fresh nonce.}
\begin{algorithmic}[1]
  \Function{$\mathsf{TAKENONCE}$}{$s'$}
    \LetND{$n$}{$\left\{x \middle| x \in N^p \wedge x \not\inPairing \comp{s'}{nonces} \right\}$} %
    \Append{$n$}{$\comp{s'}{nonces}$}
    \State \Return $n, s'$
  \EndFunction
\end{algorithmic} \setlength{\parindent}{1em}

The following function, $\mathsf{GETNAVIGABLEWINDOW}$, is
called by the browser to determine the window that is
\emph{actually} navigated when a script in the window
$s'.\ptr{w}$ provides a window reference for navigation
(e.g., for opening a link). When it is given a window
reference (nonce) $\mi{window}$,
$\mathsf{GETNAVIGABLEWINDOW}$ returns a pointer to a
selected window term in $s'$:
\begin{itemize}
\item If $\mi{window}$ is the string $\wBlank$, a new
  window is created and a pointer to that window is
  returned.
\item If $\mi{window}$ is a nonce (reference) and there is
  a window term with a reference of that value in the
  windows in $s'$, a pointer $\ptr{w'}$ to that window term
  is returned, as long as the window is navigable by the
  current window's document (as defined by
  $\mathsf{NavigableWindows}$ above).
\end{itemize}
In all other cases, $\ptr{w}$ is returned instead (the
script navigates its own window).
\captionof{algorithm}{\label{alg:getnavigablewindow}
  Determine window for navigation.}
\begin{algorithmic}[1]
  \Function{$\mathsf{GETNAVIGABLEWINDOW}$}{$\ptr{w}$, $\mi{window}$, $s'$}
    \If{$\mi{window} \equiv \wBlank$} \Comment{Open a new window when $\wBlank$ is used}
      \Let {$n$, $s'$}{\textsf{TAKENONCE}$(s')$}
      \Let{$w'$}{$\an{n, \an{}, \comp{\compn{s'}{\ptr{w}}}{nonce} }$}
      \Append{$w'$}{$\comp{s'}{windows}$}  \breakalgohook{2}\textbf{and} let
      $\ptr{w}'$ be a pointer to this new element in $s'$
      \State \Return{$(\ptr{w}', s')$}
    \EndIf
    \LetNDST{$\ptr{w}'$}{$\mathsf{NavigableWindows}(\ptr{w},
      s')$}{$\comp{\compn{s'}{\ptr{w}'}}{nonce} \equiv
      \mi{window}$\breakalgohook{1}}{\textbf{return} $(\ptr{w}, s')$}
    \State \Return{$(\ptr{w'}, s')$}
  \EndFunction
\end{algorithmic} \setlength{\parindent}{1em}

The following function takes a window reference as input
and returns a pointer to a window as above, but it checks
only that the active documents in both windows are
same-origin. It creates no new windows.
\captionof{algorithm}{\label{alg:getwindow} Determine same-origin window.}
\begin{algorithmic}[1]
  \Function{$\mathsf{GETWINDOW}$}{$\ptr{w}$, $\mi{window}$, $s'$}
    \LetNDST{$\ptr{w}'$}{$\mathsf{Subwindows}(s')$}{$\comp{\compn{s'}{\ptr{w}'}}{nonce} \equiv \mi{window}$\breakalgohook{1}}{\textbf{return} $(\ptr{w}, s')$}
    \If{
      $\comp{\comp{\compn{s'}{\ptr{w}'}}{activedocument}}{origin}
      \equiv
      \comp{\comp{\compn{s'}{\ptr{w}}}{activedocument}}{origin}$
    }
      \State \Return{$(\ptr{w}', s')$}
    \EndIf
    \State \Return{$(\ptr{w}, s')$}
  \EndFunction
\end{algorithmic} \setlength{\parindent}{1em}

The next function is used to stop any pending
requests for a specific window. From the pending requests
and pending DNS requests it removes any requests with the
given window reference $n$.
\captionof{algorithm}{\label{alg:cancelnav} Cancel pending requests for given window.}
\begin{algorithmic}[1]
  \Function{$\mathsf{CANCELNAV}$}{$n$, $s'$}
    \State \textbf{remove all} $\an{n, \mi{req}, \mi{key}, \mi{f}}$ \textbf{ from } $\comp{s'}{pendingRequests}$ \textbf{for any} $\mi{req}$, $\mi{key}$, $\mi{f}$
    \State \textbf{remove all} $\an{x, \an{n, \mi{message}, \mi{protocol}}}$ \textbf{ from } $\comp{s'}{pendingDNS}$\breakalgohook{1} \textbf{for any} $\mi{x}$, $\mi{message}$, $\mi{protocol}$
    \State \Return{$s'$}
  \EndFunction
\end{algorithmic} \setlength{\parindent}{1em}

The following function takes an HTTP request
$\mi{message}$ as input, adds cookie and origin headers to
the message, creates a DNS request for the hostname given
in the request and stores the request in
$\comp{s'}{pendingDNS}$ until the DNS resolution
finishes. For normal HTTP requests, $\mi{reference}$ is a
window reference. For \xhrs, $\mi{reference}$ is a value of
the form $\an{\mi{document}, \mi{nonce}}$ where
$\mi{document}$ is a document reference and $\mi{nonce}$ is
some nonce that was chosen by the script that initiated the
request. $\mi{protocol}$ is either $\http$ or
$\https$. $\mi{origin}$ is the origin header value that is
to be added to the HTTP request.

\captionof{algorithm}{\label{alg:send} Prepare headers, do DNS resolution, save message. }
\begin{algorithmic}[1]
  \Function{$\mathsf{SEND}$}{$\mi{reference}$, $\mi{message}$, $\mi{protocol}$, $\mi{origin}$, $s'$}
    \If{$\comp{\mi{message}}{host} \inPairing \comp{s'}{sts}$}
      \Let{$\mi{protocol}$}{$\https$}
    \EndIf
    \Let{ $\mi{cookies}$}{$\langle\{\an{\comp{c}{name}, \comp{\comp{c}{content}}{value}} | c\inPairing \comp{s'}{cookies}\left[\comp{\mi{message}}{host}\right]$} \label{line:assemble-cookies-for-request} \breakalgohook{1} $\wedge \left(\comp{\comp{c}{content}}{secure} \implies \left(\mi{protocol} = \https\right)\right) \}\rangle$ \label{line:cookie-rules-http}
    \Let{$\comp{\mi{message}}{headers}[\str{Cookie}]$}{$\mi{cookies}$}
    \If{$\mi{origin} \not\equiv \bot$}
      \Let{$\comp{\mi{message}}{headers}[\str{Origin}]$}{$\mi{origin}$}
    \EndIf
    \Let{$n, s'$}{\textsf{TAKENONCE}$(s')$} 
    \Let{$\comp{s'}{pendingDNS}[n]$}{$\an{\mi{reference},
        \mi{message}, \mi{protocol}}$} \label{line:add-to-pendingdns}
    \State \textbf{stop} $\{(\comp{s'}{DNSaddress} : a :
    \an{\cDNSresolve, \mi{host}, n})\}$, $s'$
  \EndFunction
\end{algorithmic} \setlength{\parindent}{1em}
\noindent

The following two functions have informally been described
in Section~\ref{sec:browserrelation}.

The function $\mathsf{RUNSCRIPT}$ performs a script
execution step of the script in the document
$\compn{s'}{\ptr{d}}$ (which is part of the window
$\compn{s'}{\ptr{w}}$). A new script and document state is
chosen according to the relation defined by the script and
the new script and document state is saved. Afterwards, the
$\mi{command}$ that the script issued is interpreted.  Note
that \textbf{for each} (Line~\ref{line:for-each}) works in
a non-deterministic order.

\captionof{algorithm}{\label{alg:runscript} Execute a script.}
\begin{algorithmic}[1]
  \Function{$\mathsf{RUNSCRIPT}$}{$\ptr{w}$, $\ptr{d}$, $s'$}
    \Let {$n$, $s'$}{\textsf{TAKENONCE}$(s')$}
    \Let{$\mi{tree}$}{$\mathsf{Clean}(s', \compn{s'}{\ptr{d}})$} \label{line:clean-tree}

    \Let{$\mi{cookies}$}{$\langle\{\an{\comp{c}{name}, \comp{\comp{c}{content}}{value}} | c \inPairing \comp{s'}{cookies}\left[  \comp{\comp{\compn{s'}{\ptr{d}}}{origin}}{host}  \right]$
     \breakalgohook{1} $\wedge\,\comp{\comp{c}{content}}{httpOnly} = \bot$ \breakalgohook{1} $\wedge\,\left(\comp{\comp{c}{content}}{secure} \implies \left(\comp{\comp{\compn{s'}{\ptr{d}}}{origin}}{protocol} \equiv \https\right)\right) \}\rangle$} \label{line:assemble-cookies-for-script}
    \LetND{$\mi{tlw}$}{$\comp{s'}{windows}$ \textbf{such that} $\mi{tlw}$ is the top-level window containing $\ptr{d}$} 
    \Let{$\mi{sessionStorage}$}{$\comp{s'}{sessionStorage}\left[\an{\comp{\compn{s'}{\ptr{d}}}{origin}, \comp{\mi{tlw}}{nonce}}\right]$}
    \Let{$\mi{localStorage}$}{$\comp{s'}{localStorage}\left[\comp{\compn{s'}{\ptr{d}}}{origin}\right]$}
    \Let{$\mi{secret}$}{$\comp{s'}{secrets}\left[\comp{\compn{s'}{\ptr{d}}}{origin}\right]$} \label{line:browser-secrets}
    \State \textbf{let} $\mi{nonces}$ be an infinite subset of $\left\{x \middle| x \in N^p \wedge x \not\inPairing \comp{s'}{nonces} \right\}$
    \LetND{$R$}{$\mathsf{script}^{-1}(\comp{\compn{s'}{\ptr{d}}}{script})$}
    \Let{$\mi{in}$}{$\langle\mi{tree}$, $\comp{\compn{s'}{\ptr{d}}}{nonce}, \comp{\compn{s'}{\ptr{d}}}{scriptstate}$, $\comp{\compn{s'}{\ptr{d}}}{scriptinput}$, $\mi{cookies},$ \breakalgohook{1}  $\mi{localStorage}$, $\mi{sessionStorage}$, $\comp{s'}{ids}$, $\mi{secret}\rangle$}
    \LetND{$\mi{state}'$}{$\terms$, \breakalgohook{1}
      $\mi{cookies}' \gets \mathsf{Cookies}$, \breakalgohook{1}
      $\mi{localStorage}' \gets \terms$,\breakalgohook{1} $\mi{command}
      \gets \terms$, \breakalgohook{1} $\mi{out} := \an{\mi{state}', \mi{cookies}', \mi{localStorage}',$ $\mi{sessionStorage}', \mi{command}}$
      \breakalgohook{1} \textbf{such that} $((\mi{in}, \mi{nonces}), \mi{out}) \in R$}  \label{line:trigger-script}.
    \For{\textbf{each} $n \in d_{\nonces}(\an{\mi{in}, \mi{out}}) \cap N^p$} \label{line:for-each}
      \Append{$n$}{$\comp{s'}{nonces}$}
    \EndFor
    \Let{$\comp{s'}{cookies}\left[\comp{\comp{\compn{s'}{\ptr{d}}}{origin}}{host}\right]$\breakalgohook{1}}{$\langle\mathsf{CookieMerge}(\comp{s'}{cookies}\left[\comp{\comp{\compn{s'}{\ptr{d}}}{origin}}{host}\right]$, $\mi{cookies}')\rangle$} \label{line:cookiemerge}
    \Let{$\comp{s'}{localStorage}\left[\comp{\compn{s'}{\ptr{d}}}{origin}\right]$}{$\mi{localStorage}'$}
    \Let{$\comp{s'}{sessionStorage}\left[\an{\comp{\compn{s'}{\ptr{d}}}{origin}, \comp{\mi{tlw}}{nonce}}\right]$}{$\mi{sessionStorage}'$}
    \Let{$\comp{\compn{s'}{\ptr{d}}}{scriptstate}$}{$state'$}
    \Switch{$\mi{command}$}
      \Case{$\an{}$}
        \State \textbf{stop} $\{\}$, $s'$
      \EndCase
      \Case{$\an{\tHref, \mi{url},
          \mi{hrefwindow}}$}\footnote{See the definition of
        URLs in Appendix~\ref{sec:urls}.}
      \Let{$\ptr{w}'$,
        $s'$}{$\mathsf{GETNAVIGABLEWINDOW}$($\ptr{w}$,
        $\mi{hrefwindow}$, $s'$)} \Let{$\mi{req}$}{$\hreq{ nonce=n, 
          method=\mGet, host=\comp{\mi{url}}{host},
          path=\comp{\mi{url}}{path},
          headers=\an{},
          parameters=\comp{\mi{url}}{params}, body=\an{}
        }$}
      \Let{$s'$}{$\mathsf{CANCELNAV}(\comp{\compn{s'}{\ptr{w}'}}{nonce}, s')$}
      \State \textsf{SEND}($\comp{\compn{s'}{\ptr{w}'}}{nonce}$, $\mi{req}$, $\comp{\mi{url}}{protocol}$, $\bot$, $s'$) \label{line:send-href}
      \EndCase
      \Case{$\an{\tIframe, \mi{url}, \mi{window}}$}
        \Let{$\ptr{w}'$, $s'$}{$\mathsf{GETWINDOW}(\ptr{w}, \mi{window}, s')$}
        \Let{$\mi{req}$}{$\hreq{
            nonce=n,
            method=\mGet,
            host=\comp{\mi{url}}{host},
            path=\comp{\mi{url}}{path},
            headers=\an{},
            parameters=\comp{\mi{url}}{params},
            body=\an{}
          }$}
        \Let {$n$, $s'$}{\textsf{TAKENONCE}$(s')$}
        \Let{$w'$}{$\an{n, \an{}, \bot}$}
        \Let{$\comp{\comp{\compn{s'}{\ptr{w}'}}{activedocument}}{subwindows}$\breakalgohook{3}}{ $\comp{\comp{\compn{s'}{\ptr{w}'}}{activedocument}}{subwindows} \plusPairing w'$}
        \State \textsf{SEND}(n, $\mi{req}$, $\comp{\mi{url}}{protocol}$, $\bot$, $s'$) \label{line:send-iframe}
      \EndCase
      \Case{$\an{\tForm, \mi{url}, \mi{method}, \mi{data}, \mi{hrefwindow}}$}
        \If{$\mi{method} \not\in \{\mGet, \mPost\}$} \footnote{The working draft for HTML5 allowed for DELETE and PUT methods in HTML5 forms. However, these have since been removed. See \url{http://www.w3.org/TR/2010/WD-html5-diff-20101019/\#changes-2010-06-24}.}
          \State \textbf{stop} $\{\}$, $s'$
        \EndIf
        \Let{$\ptr{w}'$, $s'$}{$\mathsf{GETNAVIGABLEWINDOW}$($\ptr{w}$, $\mi{hrefwindow}$, $s'$)}
        \If{$\mi{method} = \mGet$}
          \Let{$\mi{body}$}{$\an{}$}
          \Let{$\mi{params}$}{$\mi{data}$}
          \Let{$\mi{origin}$}{$\bot$}
        \Else
          \Let{$\mi{body}$}{$\mi{data}$}
          \Let{$\mi{params}$}{$\comp{\mi{url}}{params}$}
          \Let{$\mi{origin}$}{$\comp{\compn{s'}{\ptr{d}}}{origin}$}
        \EndIf
        \Let{$\mi{req}$}{$\hreq{
            nonce=n,
            method=\mi{method},
            host=\comp{\mi{url}}{host},
            path=\comp{\mi{url}}{path},
            headers=\an{},
            parameters=\mi{params},
            xbody=\mi{body}
          }$}
        \Let{$s'$}{$\mathsf{CANCELNAV}(\comp{\compn{s'}{\ptr{w}'}}{nonce}, s')$}
        \State \textsf{SEND}($\comp{\compn{s'}{\ptr{w}'}}{nonce}$, $\mi{req}$, $\comp{\mi{url}}{protocol}$, $\mi{origin}$, $s'$) \label{line:send-form}
      \EndCase
      \Case{$\an{\tSetScript, \mi{window}, \mi{script}}$}
        \Let{$\ptr{w}'$, $s'$}{$\mathsf{GETWINDOW}(\ptr{w}, \mi{window}, s')$}
        \Let{$\comp{\comp{\compn{s'}{\ptr{w}'}}{activedocument}}{script}$}{$\mi{script}$}
        \State \textbf{stop} $\{\}$, $s'$
      \EndCase
      \Case{$\an{\tSetScriptState, \mi{window}, \mi{scriptstate}}$}
        \Let{$\ptr{w}'$, $s'$}{$\mathsf{GETWINDOW}(\ptr{w}, \mi{window}, s')$}
        \Let{$\comp{\comp{\compn{s'}{\ptr{w}'}}{activedocument}}{scriptstate}$}{$\mi{scriptstate}$}
        \State \textbf{stop} $\{\}$, $s'$
      \EndCase
      \Case{$\an{\tXMLHTTPRequest, \mi{url}, \mi{method}, \mi{data}, \mi{xhrreference}}$}
        \If{$\mi{method} \in \{\mConnect, \mTrace, \mTrack\}$} 
          \State \textbf{stop} $\{\}$, $s'$
        \EndIf
        \If{$\comp{\mi{url}}{host} \not\equiv \comp{\comp{\compn{s'}{\ptr{d}}}{origin}}{host}$ \breakalgohook{3} $\vee$ $\comp{\mi{url}}{protocol} \not\equiv \comp{\comp{\compn{s'}{\ptr{d}}}{origin}}{protocol}$} 
          \State \textbf{stop} $\{\}$, $s'$
        \EndIf
        \If{$\mi{method} \in \{\mGet, \mHead\}$}
          \Let{$\mi{data}$}{$\an{}$}
          \Let{$\mi{origin}$}{$\bot$}
        \Else
          \Let{$\mi{origin}$}{$\comp{\compn{s'}{\ptr{d}}}{origin}$}
        \EndIf
        \Let{$\mi{req}$}{$\hreq{
            nonce=n,
            method=\mi{method},
            host=\comp{\mi{url}}{host},
            path=\comp{\mi{url}}{path},
            headers={},
            parameters=\comp{\mi{url}}{params},
            xbody=\mi{data}
          }$}
        \State \textsf{SEND}($\an{\comp{\compn{s'}{\ptr{d}}}{nonce}, \mi{xhrreference}}$, $\mi{req}$, $\comp{\mi{url}}{protocol}$, $\mi{origin}$, $s'$)\label{line:send-xhr}
      \EndCase
      \Case{$\an{\tBack, \mi{window}}$} \footnote{Note that navigating a window using the back/forward buttons does not trigger a reload of the affected documents. While real world browser may chose to refresh a document in this case, we assume that the complete state of a previously viewed document is restored.}
        \Let{$\ptr{w}'$, $s'$}{$\mathsf{GETNAVIGABLEWINDOW}$($\ptr{w}$, $\mi{window}$, $s'$)}
        \If{$\exists\, \ptr{j} \in \mathbb{N}, \ptr{j} >
          1$ \textbf{such that} $\comp{\compn{\comp{\compn{s'}{\ptr{w'}}}{documents}}{\ptr{j}}}{active} \equiv \True$}
          \Let{$\comp{\compn{\comp{\compn{s'}{\ptr{w'}}}{documents}}{\ptr{j}}}{active}$}{$\bot$}
          \Let{$\comp{\compn{\comp{\compn{s'}{\ptr{w'}}}{documents}}{(\ptr{j}-1)}}{active}$}{$\True$}
          \Let{$s'$}{$\mathsf{CANCELNAV}(\comp{\compn{s'}{\ptr{w}'}}{nonce}, s')$}
        \EndIf
        \State \textbf{stop} $\{\}$, $s'$
      \EndCase
      \Case{$\an{\tForward, \mi{window}}$}
        \Let{$\ptr{w}'$, $s'$}{$\mathsf{GETNAVIGABLEWINDOW}$($\ptr{w}$, $\mi{window}$, $s'$)}
        \If{$\exists\, \ptr{j} \in \mathbb{N} $ \textbf{such that} $\comp{\compn{\comp{\compn{s'}{\ptr{w'}}}{documents}}{\ptr{j}}}{active} \equiv \True$ \breakalgohook{3} $\wedge$  $\compn{\comp{\compn{s'}{\ptr{w'}}}{documents}}{(\ptr{j}+1)} \in \mathsf{Documents}$}
          \Let{$\comp{\compn{\comp{\compn{s'}{\ptr{w'}}}{documents}}{\ptr{j}}}{active}$}{$\bot$}
          \Let{$\comp{\compn{\comp{\compn{s'}{\ptr{w'}}}{documents}}{(\ptr{j}+1)}}{active}$}{$\True$}
          \Let{$s'$}{$\mathsf{CANCELNAV}(\comp{\compn{s'}{\ptr{w}'}}{nonce}, s')$}
        \EndIf
        \State \textbf{stop} $\{\}$, $s'$
      \EndCase
      \Case{$\an{\tClose, \mi{window}}$}
        \Let{$\ptr{w}'$, $s'$}{$\mathsf{GETNAVIGABLEWINDOW}$($\ptr{w}$, $\mi{window}$, $s'$)}
        \State \textbf{remove} $\compn{s'}{\ptr{w'}}$ from the sequence containing it 
        \State \textbf{stop} $\{\}$, $s'$
      \EndCase

      \Case{$\an{\tPostMessage, \mi{window}, \mi{message}, \mi{origin}}$}
        \LetND{$\ptr{w}'$}{$\mathsf{Subwindows}(s')$ \textbf{such that} $\comp{\compn{s'}{\ptr{w}'}}{nonce} \equiv \mi{window}$}
        \If{$\exists \ptr{j} \in \mathbb{N}$ \textbf{such that} $\comp{\compn{\comp{\compn{s'}{\ptr{w'}}}{documents}}{\ptr{j}}}{active} \equiv \True$ \breakalgohook{3} $\wedge  (\mi{origin} \not\equiv \bot \implies \comp{\compn{\comp{\compn{s'}{\ptr{w'}}}{documents}}{\ptr{j}}}{origin} \equiv \mi{origin})$}    \label{line:append-pm-to-scriptinput-condition}
        \Let{$\comp{\compn{\comp{\compn{s'}{\ptr{w'}}}{documents}}{\ptr{j}}}{scriptinput}$\breakalgohook{4}}{ $\comp{\compn{\comp{\compn{s'}{\ptr{w'}}}{documents}}{\ptr{j}}}{scriptinput}$ \breakalgohook{4} $\plusPairing$
         $\an{\tPostMessage, \comp{\compn{s'}{\ptr{w}}}{nonce}, \comp{\compn{s'}{\ptr{d}}}{origin}, \mi{message}}$} \label{line:append-pm-to-scriptinput}
        \EndIf
      \EndCase
    \EndSwitch
  \EndFunction
\end{algorithmic} \setlength{\parindent}{1em}

The function $\mathsf{PROCESSRESPONSE}$ is responsible for
processing an HTTP response ($\mi{response}$) that was
received as the response to a request ($\mi{request}$) that
was sent earlier. In $\mi{reference}$, either a window or a
document reference is given (see explanation for
Algorithm~\ref{alg:send} above). Again, $\mi{protocol}$ is
either $\http$ or $\https$.

The function first saves any cookies that were contained in
the response to the browser state, then checks whether a
redirection is requested (Location header). If that is not
the case, the function creates a new document (for normal
requests) or delivers the contents of the response to the
respective receiver (for \xhr responses).
\captionof{algorithm}{\label{alg:processresponse} Process an HTTP response.}
\begin{algorithmic}[1]
\Function{$\mathsf{PROCESSRESPONSE}$}{$\mi{response}$, $\mi{reference}$, $\mi{request}$, $\mi{protocol}$, $s'$}
  \Let{$n, s'$}{\textsf{TAKENONCE}$(s')$} 
  \If{$\mathtt{Set{\mhyphen}Cookie} \in
    \comp{\mi{response}}{headers}$}
    \For{\textbf{each} $c \inPairing \comp{\mi{response}}{headers}\left[\mathtt{Set{\mhyphen}Cookie}\right]$, $c \in \mathsf{Cookies}$}
      \Let{$\comp{s'}{cookies}\left[\comp{\comp{\mi{request}}{url}}{host}\right]$\breakalgohook{3}}{$\mathsf{AddCookie}(\comp{s'}{cookies}\left[\comp{\comp{\mi{request}}{url}}{host}\right], c)$} \label{line:set-cookie}
    \EndFor
  \EndIf  
  \If{$\mathtt{Strict{\mhyphen}Transport{\mhyphen}Security} \in \comp{\mi{response}}{headers}$ $\wedge$ $\mi{protocol} \equiv \https$}
    \Append{$\comp{\mi{request}}{host}$}{$\comp{s'}{sts}$}
  \EndIf
  \If{$\mathtt{Location} \in \comp{\mi{response}}{headers} \wedge \comp{\mi{response}}{status} \in \{303, 307\}$} \label{line:location-header} \footnote{The RFC for HTTPbis (currently in draft status), which obsoletes RFC 2616, does not specify whether a POST/DELETE/etc. request that was answered with a status code of 301 or 302 should be rewritten to a GET request or not (``for historic reasons'' that are detailed in Section~7.4.). 
As the specification is clear for the status codes 303 and 307 (and most browsers actually follow the specification in this regard), we focus on modeling these.}
    \Let{$\mi{url}$}{$\comp{\mi{response}}{headers}\left[\mathtt{Location}\right]$}
    \Let{$\mi{method}'$}{$\comp{\mi{request}}{method}$} \footnote{While the standard demands that users confirm redirections of non-safe-methods (e.g., POST), we assume that users generally confirm these redirections.}
    \Let{$\mi{body}'$}{$\comp{\mi{request}}{body}$} \footnote{If, for example, a GET request is redirected and the original request contained a body, this body is preserved, as HTTP allows for payloads in messages with all HTTP methods, except for the TRACE method (a detail which we omit). 
Browsers will usually not send body payloads for methods that do not specify semantics for such data in the first place.}
    \If{$\str{Origin} \in \comp{request}{headers}$}
      \Let{$\mi{origin}$}{$\an{\comp{request}{headers}[\str{Origin}], \an{\comp{request}{host}, \mi{protocol}}}$}
    \Else
      \Let{$\mi{origin}$}{$\bot$}
    \EndIf
    \If{$\comp{\mi{response}}{status} \equiv 303 \wedge \comp{\mi{request}}{method} \not\in \{\mGet, \mHead\}$}
      \Let {$\mi{method}'$}{$\mGet$}
      \Let{$\mi{body}'$}{$\an{}$}
    \EndIf
    \If{$\nexists\, \ptr{w} \in \mathsf{Subwindows}(s')$ \textbf{such that} $\comp{\compn{s'}{\ptr{w}}}{nonce} \equiv \mi{reference}$} \breakalgo{3}\Comment{Do not redirect XHRs.}
      \State \textbf{stop} $\{\}$, $s$
    \EndIf
    \Let{$\mi{req}$}{$\hreq{
            nonce=n,
            method=\mi{method'},
            host=\comp{\mi{url}}{host},
            path=\comp{\mi{url}}{path},
            headers=\an{},
            parameters=\comp{\mi{url}}{params},
            xbody=\mi{body}'
          }$}
    \State \textsf{SEND}($\mi{reference}$, $\mi{req}$, $\comp{\mi{url}}{protocol}$, $\mi{origin}$, $s'$)\label{line:send-redirect}
  \EndIf

  \If{$\exists\, \ptr{w} \in \mathsf{Subwindows}(s')$ \textbf{such that} $\comp{\compn{s'}{\ptr{w}}}{nonce} \equiv \mi{reference}$} \breakalgo{3}\Comment{normal response}
    \Let{$\mi{script}$}{$\proj{1}{\comp{\mi{response}}{body}}$}
    \Let{$\mi{scriptstate}$}{$\proj{2}{\comp{\mi{response}}{body}}$}
    \Let{$d$}{$\an{n, \an{\comp{\mi{request}}{host}, \comp{\mi{request}}{protocol}}, \mi{script}, \mi{scriptstate}, \an{}, \an{}, \True}$} \label{line:take-script} \label{line:set-origin-of-document}
    \If{$\comp{\compn{s'}{\ptr{w}}}{documents} \equiv \an{}$}
      \Let{$\comp{\compn{s'}{\ptr{w}}}{documents}$}{$\an{d}$}
    \Else
      \LetND{$\ptr{i}$}{$\mathbb{N}$ \textbf{such that} $\comp{\compn{\comp{\compn{s'}{\ptr{w}}}{documents}}{\ptr{i}}}{active} \equiv \True$}
      \Let{$\comp{\compn{\comp{\compn{s'}{\ptr{w}}}{documents}}{\ptr{i}}}{active}$}{$\bot$}
      \State \textbf{remove} $\compn{\comp{\compn{s'}{\ptr{w}}}{documents}}{(\ptr{i}+1)}$ and all following documents \breakalgohook{3} from $\comp{\compn{s'}{\ptr{w}}}{documents}$
      \Append{$d$}{$\comp{\compn{s'}{\ptr{w}}}{documents}$}
    \EndIf
    \State \textbf{stop} $\{\}$, $s'$
  \ElsIf{$\exists\, \ptr{w} \in \mathsf{Subwindows}(s')$, $\ptr{d}$ \textbf{such that} $\comp{\compn{s'}{\ptr{d}}}{nonce} \equiv \proj{1}{\mi{reference}} $ \breakalgohook{1}  $\wedge$  $\compn{s'}{\ptr{d}} = \comp{\compn{s'}{\ptr{w}}}{activedocument}$} \label{line:process-xhr-response} \Comment{process XHR response}
    \Append{\breakalgo{3}$\an{\http, \comp{\mi{response}}{body}, \proj{2}{\mi{reference}}}$}{$\comp{\compn{s'}{\ptr{d}}}{scriptinput}$}
  \EndIf
\EndFunction
\end{algorithmic} \setlength{\parindent}{1em}

\subsubsection{Main Algorithm.}\label{app:mainalgorithmwebbrowserprocess}
This is the main algorithm of the browser relation. It was
already presented informally in
Section~\ref{sec:web-browsers} and follows the structure
presented there. It receives the message $m$ as input, as
well as $a$, $f$ and $s$ as above.

\captionof{algorithm}{\label{alg:browsermain} Main Algorithm}
\begin{algorithmic}[1]
\Statex[-1] \textbf{Input:} $(a{:}f{:}m),s$
  \Let{$s'$}{$s$}

  \If{$\comp{s}{isCorrupted} \equiv \fullcorrupt$} 
    \Let{$\comp{s'}{pendingRequests}$}{$\an{m, \comp{s}{pendingRequests}}$} \breakalgo{1} \Comment{Collect incoming messages}
    \LetND{$m'$}{$d_{N^p}(s')$}
    \LetND{$a'$}{$\addresses$}
    \State \textbf{stop} $\{(a'{:}a{:}m')\}$, $s'$
  \ElsIf{$\comp{s}{isCorrupted} \equiv \closecorrupt$}
    \Let{$\comp{s'}{pendingRequests}$}{$\an{m, \comp{s}{pendingRequests}}$} \breakalgo{1}\Comment{Collect incoming messages}
    \Let{$N^\text{clean}$}{$N^p \setminus \{n | n \inPairing \comp{s}{nonces}\}$} \label{line:key-not-used-anymore}
    \LetND{$m'$}{$d_{N^\text{clean}}(s')$}
    \LetND{$a'$}{$\addresses$}
    \Let{$\comp{s'}{nonces}$}{$\comp{s}{nonces}$}
    \State \textbf{stop} $\{(a'{:}a{:}m')\}$, $s'$
  \EndIf
  \Let {$n$, $s'$}{\textsf{TAKENONCE}$(s')$}
  \If{$m \equiv \trigger$} \Comment{A special trigger message. }
    \LetND{$\mi{switch}$}{$\{1,2\}$}
    \If{$\mi{switch} \equiv 1$} \Comment{Run some script.}
      \LetNDST{$\ptr{w}$}{$\mathsf{Subwindows}(s')$}{$\comp{\compn{s'}{\ptr{w}}}{documents} \neq \an{}$\breakalgohook{2}}{\textbf{stop} $\{\}$, $s'$}
      \Let{$\ptr{d}$}{$\ptr{w} \plusPairing \str{activedocument}$}
      \State \textsf{RUNSCRIPT}($\ptr{w}$, $\ptr{d}$, $s'$)
    \ElsIf{$\mi{switch} \equiv 2$} \Comment{Create some new request.}
      \Let{$w'$}{$\an{n, \an{}, \bot}$}
      \Append{$w'$}{$\comp{s'}{windows}$}
      \LetND{$\mi{protocol}$}{$\{\http, \https\}$}
      \LetND{$\mi{host}$}{$\dns$}
      \LetND{$\mi{path}$}{$\mathbb{S}$}
      \LetND{$\mi{parameters}$}{$\dict{\mathbb{S}}{\mathbb{S}}$}
      \Let {$n'$, $s'$}{\textsf{TAKENONCE}$(s')$}
      \Let{$\mi{req}$}{$\hreq{
          nonce=n',
          method=\mGet,
          host=\mi{host},
          path=\mi{path},
          headers=\an{},
          parameters=\mi{parameters},
          body=\an{}
        }$}
      \State \textsf{SEND}($n$, $\mi{req}$, $\mi{protocol}$, $\bot$, $s'$)\label{line:send-random}
    \EndIf
  \ElsIf{$m \equiv \fullcorrupt$} \Comment{Request to corrupt browser}
    \Let{$\comp{s'}{isCorrupted}$}{$\fullcorrupt$}
    \State \textbf{stop} $\{\}$, $s'$
  \ElsIf{$m \equiv \closecorrupt$} \Comment{Close the browser}
    \Let{$\comp{s'}{secrets}$}{$\an{}$}  
    \Let{$\comp{s'}{windows}$}{$\an{}$}
    \Let{$\comp{s'}{pendingDNS}$}{$\an{}$}
    \Let{$\comp{s'}{pendingRequests}$}{$\an{}$}
    \Let{$\comp{s'}{sessionStorage}$}{$\an{}$}
    \State \textbf{let} $\comp{s'}{cookies} \subsetPairing \cookies$ \textbf{such that} \breakalgohook{1} $(c \inPairing \comp{s'}{cookies}) {\iff} (c \inPairing \comp{s}{cookies} \wedge \comp{\comp{c}{content}}{session} \equiv \bot$)
    \Let{$\comp{s'}{isCorrupted}$}{$\closecorrupt$}
    \State \textbf{stop} $\{\}$, $s'$
  \ElsIf{$\exists\, \an{\mi{reference}, \mi{request}, \mi{key}, f}$
      $\inPairing \comp{s'}{pendingRequests}$ \breakalgohook{0}
      \textbf{such that} $\proj{1}{\decs{m}{\mi{key}}} \equiv \cHttpResp$ }
    \Comment{Encrypted HTTP response}
    \Let{$m'$}{$\decs{m}{\mi{key}}$}
    \If{$\comp{m'}{nonce} \not\equiv \comp{\mi{request}}{nonce}$}
      \State \textbf{stop} $\{\}$, $s$
    \EndIf
    \State \textbf{remove} $\an{\mi{reference}, \mi{request}, \mi{key}, f}$ \textbf{from} $\comp{s'}{pendingRequests}$
    \State \textsf{PROCESSRESPONSE}($m'$, $\mi{reference}$, $\mi{request}$, $\https$, $s'$)
  \ElsIf{$\proj{1}{m} \equiv \cHttpResp$ $\wedge$ $\exists\, \an{\mi{reference}, \mi{request}, \bot, f}$ $\inPairing \comp{s'}{pendingRequests}$ \breakalgohook{0}\textbf{such that} $\comp{m'}{nonce} \equiv \comp{\mi{request}}{key}$ }
    \State \textbf{remove} $\an{\mi{reference}, \mi{request}, \bot, f}$ \textbf{from} $\comp{s'}{pendingRequests}$
    \State \textsf{PROCESSRESPONSE}($m$, $\mi{reference}$, $\mi{request}$, $\http$, $s'$)
  \ElsIf{$m \in \dnsresponses$} \Comment{Successful DNS response}
      \If{$\comp{m}{nonce} \not\in \comp{s}{pendingDNS}$}
        \State \textbf{stop} $\{\}$, $s$
      \EndIf
      \Let{$\an{\mi{reference}, \mi{message}, \mi{protocol}}$}{$\comp{s}{pendingDNS}[\comp{m}{nonce}]$}
      \If{$\mi{protocol} \equiv \https$}
        \Let{$k, s'$}{\textsf{TAKENONCE}($s'$)} \label{line:takenonce-k}
        \AppendBreak{2}{$\langle\mi{reference}$, $\mi{message}$, $\mi{k}$, $\comp{m}{result}\rangle$}{$\comp{s'}{pendingRequests}$} \label{line:add-to-pendingrequests-https}
        \Let{$\mi{message}$}{$\enc{\an{\mi{message},\mi{k}}}{\comp{s'}{keyMapping}\left[\comp{\mi{message}}{host}\right]}$} \label{line:select-enc-key}
      \Else
        \AppendBreak{2}{$\langle\mi{reference}$, $\mi{message}$, $\bot$, $\comp{m}{result}\rangle$}{$\comp{s'}{pendingRequests}$} \label{line:add-to-pendingrequests}
      \EndIf
      \Let{$\comp{s'}{pendingDNS}$}{$\comp{s'}{pendingDNS} - \comp{m}{nonce}$}
      \State \textbf{stop} $\{(\comp{m}{result}{:}a{:}\mi{message})\}$, $s'$
  \Else
    \State \textbf{stop} $\{\}$, $s$
  \EndIf
\end{algorithmic} \setlength{\parindent}{1em}

\section{General Security Properties of the Web Model}\label{app:generalproperties}

We now formally state and prove the general application independent
security properties of the web which in Section
\ref{sec:generalproperties} have been sketched only.

Let $\mathpzc{Web} = (\bidsystem, \scriptset, \mathsf{script}, E_0)$
be a web system. In the following, we write $s_x = (S_x,E_x)$ for the
states of a web system.

\begin{definition}\label{def:emitting-pidp}
  In what follows, given an atomic process $p$ and a message $m$, we
  say that \emph{$p$ emits $m$} in a run $\rho=s_0,s_1,\ldots$ if
  there is a processing step  of the form
\[ s_{u-1} \xrightarrow[p \rightarrow E]{} s_{u}\] for some
$u \in \mathbb{N}$, a set of events $E$ and some addresses
$x$, $y$ with $(x{:}y{:}m) \in E$.
\end{definition}

\begin{definition}\label{def:contains}
  We say that a term $t$ \emph{is derivably contained in (a term) $t'$
    for (a set of DY processes) $P$ (in a processing step $s_i
    \rightarrow s_{i+1}$ of a run $\rho=s_0,s_1,\ldots$)} if $t$ is
  derivable from $t'$ with the knowledge available to $P$, i.e.,
  \begin{align*}
     t \in d_{\eta}(\{t'\}\cup \varsigma) \text{ with }  \eta := \bigcup_{p\in P}N^p \text{ and } \varsigma := \bigcup_{p\in P, j\leq i}S_j(p)\ .
  \end{align*}

\end{definition}

\begin{definition}\label{def:leak}
  We say that \emph{a set of processes $P$ leaks a term $t$ (in a
    processing step $s_i \rightarrow s_{i+1}$) to a set of processes
    $P'$} if there exists a message $m$ that is emitted (in $s_i
  \rightarrow s_{i+1}$) by some $p \in P$ and $t$ is derivably
  contained in $m$ for $P'$ in the processing step $s_i \rightarrow
  s_{i+1}$. If we omit $P'$, we define $P' := \bidsystem \setminus
  P$. If $P$ is a set with a single element, we omit the set notation.
\end{definition}

\begin{definition}\label{def:creating-pidp}
  We say that an DY process $p$ \emph{created} a message $m$ (at
  some point) in a run if $m$ is derivably contained in a message
  emitted by $p$ in some processing step and if there is no earlier
  processing step where $m$ is derivably contained in a message
  emitted by some DY process $p'$.
\end{definition}

\begin{definition}\label{def:accepting-pidp}
  We say that a browser $b$ \emph{accepted} a message (as a response
  to some request) if the browser decrypted the message (if it was an
  HTTPS message) and called the function $\mathsf{PROCESSRESPONSE}$,
  passing the message and the request (see
  Algorithm~\ref{alg:processresponse}).
\end{definition}

\begin{definition}\label{def:knowing-pidp}
  We say that an atomic DY process \emph{$p$ knows a term $t$} in some
  state $s=(S,E)$ of a run if it can derive the term from its
  knowledge, i.e., $t \in d_{N^p}(S(p))$.
\end{definition}

\begin{definition}\label{def:initiating-pidp}
  We say that a \emph{script initiated a request $r$} if a browser
  triggered the script (in Line~\ref{line:trigger-script} of
  Algorithm~\ref{alg:runscript}) and the first component of the
  $\mi{command}$ output of the script relation is either $\tHref$,
  $\tIframe$, $\tForm$, or $\tXMLHTTPRequest$ such that the browser
  issues the request $r$ in the same step as a result.
\end{definition}

For a run $\rho = s_0, s_1,\dots$ of any $\mathpzc{Web}$, we state the
following lemmas:

\begin{lemma}\label{lemma:k-does-not-leak-from-honest-browser-general} %
  If in the processing step $s_i \rightarrow s_{i+1}$ of a run $\rho$ of $\mathpzc{Web}$ an honest browser $b$  (I)
  emits an HTTPS request of the form

  \[ m = \ehreqWithVariable{\mi{req}}{k}{\pub(k')} \]
  (where $\mi{req}$ is an HTTP request, $k$ is a nonce (symmetric
  key), and $k'$ is the private key of some other DY process $u$), and (II) in the
  initial state $s_0$ the private key $k'$ is only known to $u$, and
  (III) $u$ never leaks $k'$, then all of the following
  statements are true:
  \begin{enumerate}[label=(\arabic*)]
  \item There is no state of $\mathpzc{Web}$ where any party except
    for $u$ knows $k'$, thus no one except for $u$ can
    decrypt $\mi{req}$.
    \label{prop:attacker-cannot-decrypt-general}
  \item If there is a processing step $s_j \rightarrow s_{j+1}$ where
    the browser $b$ leaks $k$ to $\bidsystem \setminus \{u, b\}$ there
    is a processing step $s_h \rightarrow s_{h+1}$ with $h < j$
    where $u$ leaks the symmetric key $k$ to $\bidsystem \setminus
    \{u,b\}$ or the browser is fully corrupted in
    $s_j$. \label{prop:k-doesnt-leak-general}
  \item The value of the host header in $\mi{req}$ is the domain that
    is assigned the public key $\pub(k')$ in the browsers' keymapping
    $s_0.\str{keymapping}$ (in its initial
    state). \label{prop:host-header-matches-general}
  \item If $b$ accepts a response (say, $m'$) to $m$ in a processing step $s_j
    \rightarrow s_{j+1}$ and $b$ is honest in $s_j$ and $u$ did not
    leak the symmetric key $k$ to $\bidsystem \setminus \{u,b\}$ prior
    to $s_j$, then $u$ created the HTTPS response $m'$ to the HTTPS
    request $m$, i.e., the nonce of the HTTP request $\mi{req}$ is not known to
    any atomic process $p$, except for the atomic process $b$ and
    $u$.\label{prop:only-owner-answers-general}
  \end{enumerate}
\end{lemma}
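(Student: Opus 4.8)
The plan is to establish the four items in the order \ref{prop:attacker-cannot-decrypt-general}, \ref{prop:host-header-matches-general}, \ref{prop:k-doesnt-leak-general}, \ref{prop:only-owner-answers-general}, since each relies on the previous ones.

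\emph{Secrecy of $k'$ (item \ref{prop:attacker-cannot-decrypt-general}).} I would argue by contradiction: suppose there is a first processing step $s_i\to s_{i+1}$ in which some process $p\neq u$ comes to know $k'$. Only the process handling the delivered event changes its state, and by the DY property its new state and all emitted messages are derivable from the received message $m''$ together with its previous state and nonces. Since $k'$ is a nonce, $p$ did not know $k'$ in $s_i$ (minimality) and $k'\notin N^p$, so $k'$ must be derivably contained in $m''$; but $m''$ was emitted in an earlier step by some process $q$, which therefore leaked $k'$ (Definition~\ref{def:leak}), and by minimality $q$ had to already know $k'$, i.e.\ $q=u$ — contradicting assumption (III). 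Hence only $u$ ever knows $k'$ (and $b$ itself only holds $\pub(k')$, read from its key mapping), so no one but $u$ can decrypt $\mi{req}$.

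\emph{Host header (item \ref{prop:host-header-matches-general}).} This is by inspection of the browser relation. An honest browser emits an encrypted HTTP request only on the DNS-response branch of Algorithm~\ref{alg:browsermain}, where the outgoing term has the shape $\enc{\an{\mi{message},k}}{\mi{keyMapping}[\comp{\mi{message}}{host}]}$ with $k$ a freshly taken nonce. Thus $m$ having the form in hypothesis (I) forces $\mi{keyMapping}[\comp{\mi{req}}{host}]\equiv\pub(k')$ in $b$'s state in $s_i$, and since no line of the browser relation ever writes to the $\mi{keyMapping}$ subterm, this already holds in $s_0$. I would record here two facts for later use: the symmetric key $k$ in $m$ is exactly that fresh nonce, so in $s_i$ only $b$ knows $k$; and the nonce $\comp{\mi{req}}{nonce}$ was freshly chosen by $b$ (in Algorithm~\ref{alg:runscript} or the second branch of the trigger case of Algorithm~\ref{alg:browsermain}) and, before $m$ is emitted, is known only to $b$.

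\emph{The key $k$ does not leak (item \ref{prop:k-doesnt-leak-general}) and only $u$ answers (item \ref{prop:only-owner-answers-general}).} First I would show an honest browser never leaks $k$ to $\bidsystem\setminus\{u,b\}$: it stores $k$ only in $\mi{pendingRequests}$ and inside the ciphertext $m$, and every message it emits is a DNS request, an HTTP(S) request, or the redirect request built in Algorithm~\ref{alg:processresponse} — none carries $k$ except $m$, which by item \ref{prop:attacker-cannot-decrypt-general} is opaque to everyone but $u$; a $\closecorrupt$ed browser has $\mi{pendingRequests}$ emptied and $k$ excluded from the knowledge it passes on, so it cannot leak $k$ either. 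Since corruption is permanent, ``$b$ not fully corrupted in $s_j$'' propagates to all earlier states, so if $b$ leaks $k$ at $s_j$ while not fully corrupted then $k$ must have reached another process before, and by the same analysis the only other holder of $k$ is $u$; that gives item \ref{prop:k-doesnt-leak-general}. For item \ref{prop:only-owner-answers-general}: by Definition~\ref{def:accepting-pidp} and the encrypted-response branch of Algorithm~\ref{alg:browsermain}, $b$ accepting $m'$ as a response to $m$ means it received $\encs{m'}{k}$; producing that term requires knowing $k$, and under the hypotheses ($b$ honest in $s_j$, $u$ did not leak $k$ before $s_j$) items \ref{prop:attacker-cannot-decrypt-general}--\ref{prop:k-doesnt-leak-general} imply only $b$ and $u$ ever know $k$, while an honest browser never emits an HTTP response; hence $u$ created $m'$. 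The final clause follows because $\comp{\mi{req}}{nonce}$ leaves $b$ only inside $m$, which is opaque to all but $u$, so no process other than $b$ and $u$ ever learns it.

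The main obstacle I anticipate is the exhaustive case analysis underpinning items \ref{prop:k-doesnt-leak-general} and \ref{prop:only-owner-answers-general}: one must go through every call site of $\mathsf{SEND}$ and every branch of the browser relation — in particular redirects inside $\mathsf{PROCESSRESPONSE}$, \xhr handling, and both corruption modes — to be certain that a symmetric key or request nonce generated for one HTTPS connection never migrates into any other outgoing message of an honest browser. The $k'$-secrecy induction and the immutability of $\mi{keyMapping}$ are routine by comparison.
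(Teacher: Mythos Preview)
Your proposal is correct and follows essentially the same approach as the paper's proof: item~\ref{prop:attacker-cannot-decrypt-general} via secrecy of $k'$ from the non-leakage hypothesis, item~\ref{prop:host-header-matches-general} by inspecting Line~\ref{line:select-enc-key} of Algorithm~\ref{alg:browsermain} and the immutability of $\mi{keyMapping}$, item~\ref{prop:k-doesnt-leak-general} by noting $k$ is freshly chosen, stored only in $\mi{pendingRequests}$ (never read by scripts), and discarded on close-corruption, and item~\ref{prop:only-owner-answers-general} by combining the secrecy of $k$ with the fact that honest browsers never emit responses. The only cosmetic difference is that you treat item~\ref{prop:host-header-matches-general} before item~\ref{prop:k-doesnt-leak-general} and spell out the $k'$-secrecy induction more explicitly than the paper does.
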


\begin{proof} 

  \textbf{\ref{prop:attacker-cannot-decrypt-general}} follows
  immediately from the condition. If $k'$ is initially only known to
  $u$ and $u$ never leaks $k'$, i.e., even with the knowledge of all
  nonces (except for those of $u$), $k'$ can never be derived from any network
  output of $u$, $k'$ cannot be known to any other party. Thus, nobody
  except for $u$ can derive $\mi{req}$ from $m$.

  \textbf{\ref{prop:k-doesnt-leak-general}} 
  We assume that $b$ leaks $k$ to $\bidsystem \setminus \{u,b\}$ in
  the processing step $s_j \rightarrow s_{j+1}$ without $u$ prior
  leaking the key $k$ to anyone except for $u$ and $b$ and that the
  browser is not fully corrupted in $s_j$, and lead this to a contradiction.

  The browser is honest in $s_i$. From the definition of the browser
  $b$, we see that the key $k$ is always chosen from a fresh set of
  nonces (Line~\ref{line:takenonce-k} of
  Algorithm~\ref{app:mainalgorithmwebbrowserprocess}) that are not
  used anywhere else. Further, the key is stored in the browser's
  state in $\mi{pendingRequests}$. The information from
  $\mi{pendingRequests}$ is not extracted or used anywhere else (in
  particular it is not accessible by scripts). If the browser becomes
  closecorrupted prior to $s_j$ (and after $s_i$), the key cannot be
  used anymore (compare Line~\ref{line:key-not-used-anymore} of
  Algorithm~\ref{alg:browsermain}). Hence, $b$ does not leak $k$ to
  any other party in $s_j$ (except for $u$ and $b$). This proves
  \ref{prop:k-doesnt-leak-general}.

  \textbf{\ref{prop:host-header-matches-general}} Per the
  definition of browsers (Algorithm~\ref{alg:browsermain}), a host
  header is always contained in HTTP requests by browsers. From
  Line~\ref{line:select-enc-key} of Algorithm~\ref{alg:browsermain} we
  can see that the encryption key for the request $\mi{req}$ was
  chosen using the host header of the message. It is chosen from the
  $\mi{keymapping}$ in the browser's state, which is never changed
  during $\rho$.  This proves
  \ref{prop:host-header-matches-general}.

  \textbf{\ref{prop:only-owner-answers-general}} An HTTPS response
  $m'$ that is accepted by $b$ as a response to $m$ has to be encrypted
  with $k$.  The nonce $k$ is stored by the browser in the
  $\mi{pendingRequests}$ state information. The browser only stores
  freshly chosen nonces there (i.e., the nonces are not used twice, or
  for other purposes than sending one specific request). The
  information cannot be altered afterwards (only deleted) and cannot
  be read except when the browser checks incoming messages. The nonce
  $k$ is only known to $u$ (which did not leak it to any other party
  prior to $s_j$) and $b$ (which did not leak it either, as $u$ did
  not leak it and $b$ is honest, see
  \ref{prop:k-doesnt-leak-general}). The browser $b$ cannot send
  responses. This proves \ref{prop:only-owner-answers-general}.  \qed
\end{proof}

\begin{corollary}\label{cor:k-does-not-leak-from-honest-browser-general}
  In the situation of
  Lemma~\ref{lemma:k-does-not-leak-from-honest-browser-general}, as
  long as $u$ does not leak the symmetric key $k$ to $\bidsystem
  \setminus \{u,b\}$ and the browser does not become fully corrupted,
  $k$ is not known to any DY process $p \not\in\{b,u\}$ (i.e.,
  $\nexists\,s' = (S',E') \in \rho$: $k \in d_{N^p}(S'(p))$).
\end{corollary}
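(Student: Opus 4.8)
The plan is to derive the corollary from part~\ref{prop:k-doesnt-leak-general} of Lemma~\ref{lemma:k-does-not-leak-from-honest-browser-general} together with a short induction over the run. The guiding idea is: once we know that neither $b$ nor $u$ ever leaks $k$ to the coalition $C:=\bidsystem\setminus\{b,u\}$, the fact that $k$ is a freshly chosen nonce forces $k$ to remain unknown to every $p\in C$, since a fresh secret can reach a process only by being received inside some emitted message, and every such message can be traced back to an emission by $b$ or $u$.

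First I would record that $b$ never leaks $k$ to $C$. If it did, in some step $s_j\to s_{j+1}$, then by part~\ref{prop:k-doesnt-leak-general} of Lemma~\ref{lemma:k-does-not-leak-from-honest-browser-general} there would be an earlier step in which $u$ leaks $k$ to $C$, or $b$ would be fully corrupted in $s_j$; both are excluded by the hypotheses of the corollary (the browser never becomes fully corrupted, and $u$ never leaks $k$). Hence $b$ never leaks $k$ to $C$, and $u$ does not either, by assumption, so no process in $\{b,u\}$ ever leaks $k$ to $C$.

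Next I would run the induction. Put $\eta:=\bigcup_{p\in C}N^p$, and for a configuration index $i$ let $\varsigma_i:=\bigcup_{p\in C,\,j\le i}S_j(p)$ and let $M_i$ be the set of messages occurring in the events of $E_i$. I would prove the invariant $k\notin d_{\eta}(\varsigma_i\cup M_i)$ by induction on $i$. For $i=0$ this holds because $k$ is freshly chosen by $b$ during the run (hence occurs in no initial state), $E_0$ consists only of $\trigger$ events, and $k\in N^b$ with $N^b$ disjoint from $\eta$, so $k$ occurs nowhere in $\varsigma_0\cup M_0$ and, being atomic, cannot be derived. For the step $s_i\to s_{i+1}$ performed by a process $q$ on a delivered message $m_e$ with emitted events $E_{\mathrm{out}}$: if $q\in C$, then by the defining property of DY processes the new state $S_{i+1}(q)$ and all messages in $E_{\mathrm{out}}$ lie in $d_{N^q}(\{S_i(q),m_e\})\subseteq d_\eta(\varsigma_i\cup M_i)$, so by idempotence and monotonicity of $d_\eta(\cdot)$ (both routine) the invariant is preserved; if $q\in\{b,u\}$, then $\varsigma$ is unchanged and the only new messages are those $q$ emits, and since neither $b$ nor $u$ leaks $k$ to $C$ — and, crucially, the only message the honest browser ever transmits that contains $k$ at all is $\ehreqWithVariable{\mi{req}}{k}{\pub(k')}$, from which $k$ is recoverable only using $k'$, which by part~\ref{prop:attacker-cannot-decrypt-general} of Lemma~\ref{lemma:k-does-not-leak-from-honest-browser-general} no party in $C$ knows — these emissions do not let $C$ derive $k$, and the invariant persists. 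Finally, any $p\notin\{b,u\}$ that knew $k$ in a configuration $s'=s_{i'}$ of $\rho$ would satisfy $k\in d_{N^p}(S'(p))\subseteq d_\eta(\varsigma_{i'}\cup M_{i'})$, contradicting the invariant.

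The point I expect to require the most care is exactly the $q\in\{b,u\}$ case: the leak notion used in part~\ref{prop:k-doesnt-leak-general} of Lemma~\ref{lemma:k-does-not-leak-from-honest-browser-general} accounts only for the coalition's stored states, whereas the invariant also carries the in-flight messages, so one must additionally argue that every message an honest party puts on the wire that mentions $k$ is opaque to $C$ in isolation. For the browser this follows from its code — it emits $k$ only inside $\enc{\cdot}{\pub(k')}$ (using that the host header matches $\pub(k')$, part~\ref{prop:host-header-matches-general} of Lemma~\ref{lemma:k-does-not-leak-from-honest-browser-general}, and that $\mi{pendingRequests}$ is never exposed to scripts or re-emitted) — while for $u$ it follows from the non-leakage hypothesis together with the fact that $u$'s HTTPS responses to $m$ are themselves encrypted under $k$. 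With this observation in place the induction goes through and the corollary follows.
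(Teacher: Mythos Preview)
Your proposal is correct. The paper itself gives no proof for this corollary, treating it as an immediate consequence of Lemma~\ref{lemma:k-does-not-leak-from-honest-browser-general}; your explicit induction over the run, tracking both coalition states and in-flight messages, is a sound way to discharge what the paper leaves implicit.

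One small remark: you slightly overshoot the target. The corollary only asks that no single process $p\notin\{b,u\}$ can derive $k$ from its own state and nonces, i.e.\ $k\notin d_{N^p}(S'(p))$, whereas your invariant $k\notin d_\eta(\varsigma_i\cup M_i)$ speaks about the whole coalition's combined knowledge together with all pending events. That stronger invariant is exactly what makes the induction go through cleanly (since a processing step of some $q\in C$ consumes an event from $M_i$), so this is a feature rather than a defect; it is worth saying explicitly that the desired conclusion is the special case $d_{N^p}(S'(p))\subseteq d_\eta(\varsigma_{i'})$. Your observation in the final paragraph---that the non-leak notion of Definition~\ref{def:leak} uses only $\varsigma$ and a single emitted message, so one must separately argue that the encrypted HTTPS request is opaque even when combined with the rest of $M_i$---is well spotted and is precisely the step the paper glosses over by calling this a corollary.
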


\begin{lemma} \label{lemma:https-document-origin-general} If for some
  $s_i \in \rho$ an honest browser $b$ has a document $d$ in its state
  $S_i(b).\str{windows}$ with the origin $\an{\mi{dom}, \https}$ where
  $\mi{dom} \in \mathsf{Domain}$, and
  $S_i(b).\str{keyMapping}[\mi{dom}] \equiv \pub(k)$ with
  $k\in\nonces$ being a private key, and there is only one DY process
  $p$ that knows the private key $k$ in all $s_j$, $j \leq i$, then
  $b$ extracted (in Line~\ref{line:take-script} in
  Algorithm~\ref{alg:processresponse}) the script in that document
  from an HTTPS response that was created by $p$.
\end{lemma}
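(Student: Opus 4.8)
The plan is to trace backwards from the document $d$ in the browser's state to the HTTPS response that produced it, using the structure of the browser algorithm together with Lemma~\ref{lemma:k-does-not-leak-from-honest-browser-general}. First I would observe that a document with a non-empty script and script-state component can only be inserted into the window/document tree by Algorithm~\ref{alg:processresponse} (the function $\mathsf{PROCESSRESPONSE}$), specifically in Line~\ref{line:take-script}: all other places where the browser modifies $\mi{windows}$ either create empty documents (e.g., new windows in $\mathsf{GETNAVIGABLEWINDOW}$, subwindow creation for iframes) or modify existing documents via script commands ($\tSetScript$, $\tSetScriptState$), but the latter require an already-present document and, since $d$ was loaded with origin $\an{\mi{dom},\https}$ and $\mathsf{GETWINDOW}$ enforces same-origin, the chain of such modifications traces back to an initial insertion via $\mathsf{PROCESSRESPONSE}$. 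Hence there is a processing step $s_{j}\to s_{j+1}$ with $j<i$ (or $j \le i$) in which $b$ accepted an HTTP(S) response $m'$ and set the origin of the freshly created document to $\an{\comp{\mi{request}}{host},\comp{\mi{request}}{protocol}}$ (Line~\ref{line:set-origin-of-document}); since the origin's protocol is $\https$, the accepted response must have been an encrypted HTTPS response, i.e., the branch of Algorithm~\ref{alg:browsermain} handling encrypted HTTP responses was taken.

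Next I would identify the request that $m'$ answered. Because $b$ matched $m'$ against an entry $\an{\mi{reference},\mi{request},k,f}$ in $\mi{pendingRequests}$ and decrypted with the symmetric key $k$, and because (as argued when proving Lemma~\ref{lemma:k-does-not-leak-from-honest-browser-general}\ref{prop:host-header-matches-general}) the encryption key $k$ was chosen via $\comp{s'}{keyMapping}[\comp{\mi{message}}{host}]$, the corresponding HTTPS request $m=\ehreqWithVariable{\mi{req}}{k}{\pub(k_{\mi{dom}})}$ was sent encrypted under exactly the public key that the browser's (immutable) $\mi{keyMapping}$ assigns to the host appearing in $\mi{req}$, namely $\mi{dom}$. (Here I use that $\mi{dom}$ is the host of the document's origin, which by Line~\ref{line:set-origin-of-document} equals $\comp{\mi{request}}{host}$, and that $\mi{keyMapping}$ is never changed along $\rho$.) Thus $m$ is an HTTPS request of the form required by Lemma~\ref{lemma:k-does-not-leak-from-honest-browser-general}, with $k'=k_{\mi{dom}}$ and $u=p$, the unique process knowing $k_{\mi{dom}}$ in all $s_j$, $j\le i$.

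Finally I would invoke Lemma~\ref{lemma:k-does-not-leak-from-honest-browser-general}: its hypotheses (I)–(III) hold since $b$ emitted $m$, $k_{\mi{dom}}$ is known only to $p$ in all relevant states, and $p$ therefore never leaks it. Part~\ref{prop:only-owner-answers-general} of the lemma then says that, provided $b$ is honest at the time it accepts the response and $p$ did not leak the symmetric key $k$ beforehand, $p$ created the HTTPS response $m'$. The leak condition on $k$ follows from Corollary~\ref{cor:k-does-not-leak-from-honest-browser-general} (equivalently from part~\ref{prop:k-doesnt-leak-general}): since $p$ never leaks $k_{\mi{dom}}$ and $b$ is honest, $k$ stays confined to $\{b,p\}$, so in particular $p$ did not leak it to anyone. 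Honesty of $b$ at the acceptance step is inherited from honesty in $s_i$ (a browser that is honest later was honest earlier, as corruption is monotone). Therefore the script extracted in Line~\ref{line:take-script} of that $\mathsf{PROCESSRESPONSE}$ call came from an HTTPS response created by $p$, which is the claim.

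\smallskip

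\noindent\textbf{Main obstacle.} The delicate part is the first paragraph: arguing rigorously that the document $d$ with its current (non-empty) script content can be traced back to a single $\mathsf{PROCESSRESPONSE}$-insertion whose recorded origin is $\an{\mi{dom},\https}$, given that $\tSetScript$/$\tSetScriptState$ commands may have rewritten the script in the meantime. One must show these rewrites preserve both the document nonce and the origin (they do, since $\mathsf{GETWINDOW}$ only returns same-origin windows and the document's origin field is never reassigned after creation), so that the origin seen in $s_i$ is exactly the origin stamped at creation time from $\comp{\mi{request}}{host}$ and $\comp{\mi{request}}{protocol}=\https$. Everything else is a straightforward application of the already-established lemma and corollary.
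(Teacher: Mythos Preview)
Your proposal is correct and follows essentially the same route as the paper: trace the document's origin back to its unique assignment in $\mathsf{PROCESSRESPONSE}$ (Line~\ref{line:set-origin-of-document}), use the immutable $\mi{keyMapping}$ to identify the public key under which the corresponding request was encrypted, and then invoke Lemma~\ref{lemma:k-does-not-leak-from-honest-browser-general} (in particular part~\ref{prop:only-owner-answers-general}) to conclude that the accepted response was created by $p$. Your detour through $\tSetScript$/$\tSetScriptState$ is unnecessary for this lemma---the paper simply observes that a document's origin is set exactly once and never reassigned, and the statement concerns the script extracted at Line~\ref{line:take-script} at creation time---but otherwise the arguments coincide.
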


\begin{proof}
  The origin of the document $d$ is set only once: In
  Line~\ref{line:set-origin-of-document} of
  Algorithm~\ref{alg:processresponse}. The values (domain and
  protocol) used there stem from the information about the request
  (say, $\mi{req}$) that led to loading of $d$. These values have been
  stored in $\mi{pendingRequests}$ between the request and the
  response actions. The contents of $\mi{pendingRequests}$ are indexed
  by freshly chosen nonces and can never be altered or overwritten
  (only deleted when the response to a request arrives). The
  information about the request $\mi{req}$ was added to
  $\mi{pendingRequests}$ in
  Line~\ref{line:add-to-pendingrequests-https} (or
  Line~\ref{line:add-to-pendingrequests} which we can exclude as we
  will see later) of Algorithm~\ref{alg:browsermain}. In particular,
  the request was an HTTPS request iff a (symmetric) key was added to
  the information in $\mi{pendingRequests}$. When receiving the
  response to $\mi{req}$, it is checked against that information and
  accepted only if it is encrypted with the proper key and contains
  the same nonce as the request (say, $n$). Only then the protocol
  part of the origin of the newly created document becomes
  $\https$. The domain part of the origin (in our case $\mi{dom}$) is
  taken directly from the $\mi{pendingRequests}$ and is thus
  guaranteed to be unaltered.

  From Line~\ref{line:select-enc-key} of
  Algorithm~\ref{alg:browsermain} we can see that the encryption key
  for the request $\mi{req}$ was actually chosen using the host header
  of the message which will finally be the value of the origin of the
  document $d$. Since $b$ therefore selects the public key
  $S_i(b).\str{keyMapping}[\mi{dom}] =
  S_0(b).\str{keyMapping}[\mi{dom}]\equiv \pub(k)$ for $p$ (the key
  mapping cannot be altered during a run), we can see that $\mi{req}$
  was encrypted using a public key that matches a private key which is
  only (if at all) known to $p$. With
  Lemma~\ref{lemma:k-does-not-leak-from-honest-browser-general} we see
  that the symmetric encryption key for the response, $k$, is only
  known to $b$ and the respective web server. The same holds for the
  nonce $n$ that was chosen by the browser and included in the
  request. Thus, no other party than $p$ can encrypt a response that
  is accepted by the browser $b$ and which finally defines the script
  of the newly created document. \qed
\end{proof}

\begin{lemma} \label{lemma:https-script-origin-general} If in a
  processing step $s_i \rightarrow s_{i+1}$
  of a run $\rho$ of $\mathpzc{Web}$ an honest browser $b$ issues an
  HTTP(S) request with the Origin header value $\an{\mi{dom}, \https}$
  where and $S_i(b).\str{keyMapping}[\mi{dom}] \equiv \pub(k)$ with
  $k\in\nonces$ being a private key, and there is only one DY process
  $p$ that knows the private key $k$ in all $s_j$, $j \leq i$, then
  that request was initiated by a script that $b$ extracted (in
  Line~\ref{line:take-script} in Algorithm~\ref{alg:processresponse})
  from an HTTPS response that was created by $p$.
\end{lemma}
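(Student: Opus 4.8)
The plan is to trace the \str{Origin} header of the issued request back to the point in the browser algorithm where it was set, and then reduce the claim to Lemma~\ref{lemma:https-document-origin-general}. First I would note that the only place in the browser relation where an outgoing HTTP(S) request is equipped with an \str{Origin} header is the function $\mathsf{SEND}$ (Algorithm~\ref{alg:send}), which copies its $\mi{origin}$ argument into $\comp{\mi{message}}{headers}[\str{Origin}]$ exactly when that argument is $\not\equiv\bot$; the request is then parked in $\mi{pendingDNS}$ and, once DNS resolution finishes, emitted without further modification of its headers. Hence the request $b$ issues in the step under consideration was assembled by an earlier call to $\mathsf{SEND}$ (in some step $s_j\to s_{j+1}$ with $j\le i$), and its \str{Origin} header value is the $\mi{origin}$ argument of that call. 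I would then enumerate the six call sites of $\mathsf{SEND}$: those in the $\tHref$ and $\tIframe$ cases of $\mathsf{RUNSCRIPT}$, in the $\mGet$ branches of its $\tForm$ and $\tXMLHTTPRequest$ cases, and in the $\mi{switch}\equiv 2$ branch of the main algorithm (Algorithm~\ref{alg:browsermain}) all pass $\mi{origin}=\bot$ and are discarded; the remaining candidates are (i) the $\tForm$ case with method $\mPost$ (Line~\ref{line:send-form}) and (ii) the $\tXMLHTTPRequest$ case with a method not in $\{\mGet,\mHead\}$ (Line~\ref{line:send-xhr}), in both of which $\mi{origin}=\comp{\compn{s'}{\ptr{d}}}{origin}$ for the document $d$ whose script is being run, and (iii) the redirect branch of $\mathsf{PROCESSRESPONSE}$ (Line~\ref{line:send-redirect}), where $\mi{origin}$ is either $\bot$ or $\an{\comp{\mi{request}}{headers}[\str{Origin}],\an{\comp{\mi{request}}{host},\mi{protocol}}}$.

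Next I would eliminate case (iii): a non-$\bot$ value of $\mi{origin}$ produced there is a two-element sequence whose second component is itself a two-element sequence, while $\https$ is a constant and not a sequence; such a term cannot be $\equiv\an{\mi{dom},\https}$, contradicting the hypothesis on the \str{Origin} header of the request. So the request was assembled in case (i) or (ii). In either case $\mathsf{SEND}$ was called from inside a $\mathsf{RUNSCRIPT}$ invocation triggered in Line~\ref{line:trigger-script} for document $d$, with the script's command output starting with $\tForm$ resp.~$\tXMLHTTPRequest$; by Definition~\ref{def:initiating-pidp} the request was therefore initiated by the script running in $d$. Moreover the \str{Origin} header equals $\comp{\compn{s'}{\ptr{d}}}{origin}$ at that point, so $b$ has, in state $s_j$, a document $d$ in $S_j(b).\str{windows}$ with origin $\an{\mi{dom},\https}$.

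Finally I would apply Lemma~\ref{lemma:https-document-origin-general} to this $d$: the browser's key mapping is never modified during a run, so $S_j(b).\str{keyMapping}[\mi{dom}] = S_i(b).\str{keyMapping}[\mi{dom}] \equiv \pub(k)$; $p$ is the unique DY process knowing $k$ in all $s_{j'}$ with $j'\le j$ (since $j\le i$); and $b$ is honest at $s_j$ (it performs an honest step there, and browser corruption is irreversible). The lemma then yields that $b$ extracted the script of $d$ in Line~\ref{line:take-script} of Algorithm~\ref{alg:processresponse} from an HTTPS response created by $p$ — which, together with the fact that the request was initiated by exactly this script, is the assertion. I expect the only genuinely delicate parts to be verifying that the enumeration of $\mathsf{SEND}$'s call sites is exhaustive and the small term-structure argument ruling out redirects; the rest is bookkeeping, including the observation that the \str{Origin} header of the emitted request coincides with the $\mi{origin}$ argument recorded in the pending-request state when $\mathsf{SEND}$ was called.
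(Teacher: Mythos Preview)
Your proposal is correct and follows essentially the same route as the paper's proof: identify the call sites of $\mathsf{SEND}$ that can attach a non-$\bot$ \str{Origin} header, rule out the redirect branch by a structural mismatch of the resulting header with $\an{\mi{dom},\https}$, and then apply Lemma~\ref{lemma:https-document-origin-general} to the initiating document. Your enumeration of call sites and the term-structure argument for redirects are in fact more explicit than the paper's (which simply notes that a redirect yields ``at least two different origins''), and you are more careful about the timing $j\le i$ and honesty of $b$ at $s_j$; the paper also slightly overstates matters by saying the request ``must have been a POST request'' where you correctly allow arbitrary non-$\mGet$/$\mHead$ methods in the $\tXMLHTTPRequest$ case.
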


\begin{proof} First, we can see that the request was
  initiated by a script: As it contains an origin header,
  it must have been a POST request (see the browser
  definition in Appendix~\ref{sec:descr-web-brows}). POST
  requests can only be initiated in
  Lines~\ref{line:send-form}, \ref{line:send-xhr} of
  Algorithm~\ref{alg:runscript} and
  Line~\ref{line:send-redirect} of
  Algorithm~\ref{alg:processresponse}. In the latter
  instance (Location header redirect), the request contains
  at least two different origins, therefore it is
  impossible to create a request with exactly the origin
  $\an{\mi{dom}, \https}$ using a redirect.  In the other
  two cases (FORM and XMLHTTPRequest), the request was
  initiated by a script.

  The Origin header of the request is defined by the origin
  of the script's document. With
  Lemma~\ref{lemma:https-document-origin-general} we see that the
  content of the document, in particular the script, was
  indeed provided by $p$.  \qed
\end{proof}

\section{Step-By-Step Description of BrowserID (Primary IdP)}\label{app:browserid-lowlevel}

We now present additional details of the implementation of
BrowserID\@. While the basic steps have been shown in
Section~\ref{sec:javascript-descr}, we will now again refer
to Figure~\ref{fig:browserid-lowlevel-ld} and provide a
step-by-step description. As above, for brevity of
presentation, we focus on the main login flow without the
CIF, and we leave out steps for fetching additional
resources (like JavaScript files) and some less relevant
postMessages and \xhrs. Also, we assume that a typical IdP
implementation like the example implementation provided by
Mozilla is used.

We emphasize, however, that our formal model of BrowserID
with primary IdPs
(cf.~Appendix~\ref{sec:analysisbrowserid-pidp}) closely
follows the full BrowserID implementation (see also
Figure~\ref{fig:browserid-lowlevel-pidp-detailed-1} on
Pages~\pageref{fig:browserid-lowlevel-pidp-detailed-1}
and~\pageref{fig:browserid-lowlevel-pidp-detailed-2}, which
is an extended version of
Figure~\ref{fig:browserid-lowlevel-ld}).

\subsection{LPO Sessions} 
Before we describe the login flow step-by-step, we first
introduce LPO sessions.

LPO establishes a session with the browser by setting a
cookie \texttt{browserid\_state}
(Step~\refprotostep{ld-ctx-1} in
Figure~\ref{fig:browserid-lowlevel-ld}) on the client-side.
LPO considers such a session authenticated after having
received a valid CAP (Step~\refprotostep{ld-lpo-auth} in
Figure~\ref{fig:browserid-lowlevel-ld}). In future runs,
the user is presented a list of her email addresses (which
is fetched from LPO) in order to choose one address. Then,
she is asked if she trusts the computer she is using and is
given the option to be logged in for one month or ``for
this session only'' (\emph{ephemeral} session). In order to
use any of the email addresses, the user is required to
authenticate to the IdP responsible for that address to get
an UC issued. If the localStorage (under the origin LPO)
already contains a valid UC, then, however, authentication
at the IdP is not necessary.

\subsection{Step-By-Step Description}

We (again) assume that the user uses a ``fresh'' browser,
i.e., the user has not been logged in before. The user has
already opened a document of some RP (RP-Doc) in her
browser. RP-Doc includes a JavaScript file, which provides
the BrowserID API. The user is now about to click on a
login button in order to start a BrowserID login.

\subsubsection{Phase~\refprotophase{ld-start-1}.} After the
user has clicked on the login button, RP-Doc opens a new
browser window, the \emph{login dialog}
(LD)~\refprotostep{ld-open}. The document of LD is loaded
from LPO~\refprotostep{ld-init-1}. Now, LD sends a
\emph{ready} postMessage~\refprotostep{ld-rpdoc-ready-1} to
its opener, which is RP-Doc. RP-Doc then responds by
sending a \emph{request}
postMessage~\refprotostep{rpdoc-ld-request-1}. This
postMessage may contain additional information like a name
or a logo of RP-Doc.  LD then fetches the so-called
\emph{session context} from LPO
using~\refprotostep{ld-ctx-1}. The session context contains
information about whether the user is already logged in at
LPO, which, by our assumption, is not the case at this
point. The session context also contains an \xsrf
protection token which will be sent in all subsequent POST
requests to LPO\@. Also, an $\str{httpOnly}$ cookie called
\texttt{browserid\_state} is set, which contains an LPO
session identifier.  Now, the user is prompted to enter her
email address (\emph{login email address}), which she wants
to use to log in at RP~\refprotostep{ld-user-email}. LD
sends the login email address to LPO via an
\xhr~\refprotostep{ld-addrinfo-1}, in order to get
information about the IdP the email address belongs to. The
information from this so-called \emph{support document} may
be cached at LPO for further use. LPO extracts the domain
part of the login email address and fetches an information
document~\refprotostep{lpo-idp-wk-1} from a fixed path
(\url{/.well-known/browserid}) at the IdP. This document
contains the public key of IdP, and two paths, the
provisioning path and the authentication path at IdP. These
paths will be used later in the login process by LD. LPO
converts these paths into URLs and sends them in its
response~\refprotostep{ld-addrinfo-1-resp} to the
requesting \xhr~\refprotostep{ld-addrinfo-1}.

\subsubsection{Phase~\refprotophase{ld-prov-1}.} As there is
no record about the login email address in the localStorage under
the origin of LPO, the LD now tries to get a UC for this
identity. For that to happen, the LD creates a new iframe,
the \emph{provisioning iframe}
(PIF)~\refprotostep{ld-pif-open-1}. The PIF's document is
loaded~\refprotostep{pif-init-1} from the provisioning URL
LD has just received before
in~\refprotostep{ld-addrinfo-1-resp}. The PIF now interacts
with the LD via
postMessages~\refprotostep{pif-ld-pms-1}. As the user is
currently not logged in, the PIF tells the LD that the user
is not authenticated yet. This also indicates to the LD
that the PIF has finished operation. The LD then closes the
PIF~\refprotostep{ld-pif-close-1}.

\subsubsection{Phase~\refprotophase{ld-auth}.} Now, the LD
saves the login email address in the localStorage indexed
by a fresh nonce. This nonce is stored in the
sessionStorage to retrieve the email address later from the
localStorage again. Next, the LD navigates itself to the
authentication URL it has received
in~\refprotostep{ld-addrinfo-1-resp}. The loaded document
now interacts with the user and the
IdP~\refprotostep{idp-ld-auth} in order to establish some
authenticated session depending on the actual IdP
implementation, which is out of scope of the BrowserID
standard. For example, during this authentication
procedure, the IdP may issue some session cookie.

\subsubsection{Phase~\refprotophase{ld-start-2}.} After the
authentication to the IdP has been completed, the
authentication document navigates the LD to the LD URL
again. The LD's document is fetched again from LPO and the
login process starts over. The following steps are similar
to Phase~\refprotophase{ld-start-1}: The ready and request
postMessages are exchanged and the session context is
fetched. As the user has not been authenticated to LPO yet,
the session context still contains the same information as
above in~\refprotostep{ld-ctx-1}. Now, the user is not
prompted to enter her email address again. The email
address is fetched from the localStorage under the index of
the nonce stored in the sessionStorage. Now, the address
information is requested again from LPO.

\subsubsection{Phase~\refprotophase{ld-prov-2}.} As there
still is no UC belonging to the login email address in the
localStorage, the PIF is created again. As the user now has
established an authenticated session with the IdP, the PIF
asks the LD to generate a fresh key pair. After the LD has
generated the key pair~\refprotostep{gen-key-pair}, it
stores the key pair in the localStorage (under the origin
of LPO) and sends the public key to the PIF as a
postMessage~\refprotostep{pubkey-ld-pif}. The following
steps \refprotostep{req-uc}--\refprotostep{send-uc} are not
specified in the BrowserID protocol. Typically, the PIF
would send the public key to IdP (via an
\xhr)~\refprotostep{req-uc}. The IdP would create the
UC~\refprotostep{certify-uc} and send it back to the
PIF~\refprotostep{send-uc}.  The PIF then sends the UC to
the LD~\refprotostep{recv-uc}, which stores it in the
localStorage. Now, the LD closes the PIF.

\subsubsection{Phase~\refprotophase{ld-lpo-auth}.} The LD is
now able to create a CAP, as it has access to a UC and the
corresponding private key in its localStorage. First, LD
creates an IA for LPO~\refprotostep{ld-gen-cap-lpo}. The IA
and the UC is then combined to a CAP, which is then sent to
LPO in an \xhr POST message~\refprotostep{ld-lpo-auth}. LPO
is now able to verify this CAP with the public key of IdP,
which LPO has already fetched and cached before
in~\refprotostep{lpo-idp-wk-1}. If the CAP is valid, LPO
considers its session with the user's browser to be
authenticated for the email address the UC in the CAP is
issued for.

\subsubsection{Phase~\refprotophase{ld-cap}.} Now,
in~\refprotostep{ld-lpo-list-emails}, the LD fetches a list
of email addresses, which LPO considers to be owned by the
user. If the login email address would not appear in this
list, LD would abort the login process. After this, the LD
fetches the address information about the login email
address again in~\refprotostep{ld-addrinfo-3}. Using this
information, LD validates if the UC is signed by the
correct party (primary/secondary IdP).  Now, LD generates
an IA for the sender's origin of the request
postMessage~\refprotostep{rpdoc-ld-request-1} (which was
repeated in Phase~\refprotophase{ld-start-2}) using the
private key from the localStorage~\refprotostep{ld-gen-cap} (the IA is generated for
the login email address). Also, it is recorded in the
localStorage that the user is now logged in at RP with this
email address. The LD then combines the IA with the UC
stored in the localStorage to the CAP, which is then sent
to RP-Doc in the \emph{response}
postMessage~\refprotostep{ld-rpdoc-cap}.

This concludes the login process that runs in
LD. Afterwards, RP-Doc closes LD~\refprotostep{ld-close}.

\section{Model of BrowserID with Primary IdPs}
\label{sec:analysisbrowserid-pidp}

We now present the full details of our formal model of BrowserID with
primary IdPs and the fixes discussed in
Section~\ref{sec:attackbrowserid} applied.  We consider ephemeral
sessions (the default), which are supposed to last until the browser
is closed.

We model the BrowserID system as a web system (in the sense of
Section~\ref{sec:webmodel}). We call a web system
$\bidwebsystem=(\bidsystem, \scriptset, \mathsf{script}, E_0)$ a
\emph{BrowserID web system} if it is of the form described in what
follows.

\subsection{Outline}\label{app:outlinebrowserIDmodel}
The system $\bidsystem=\mathsf{Hon}\cup \mathsf{Web} \cup
\mathsf{Net}$ consists of the (network) attacker process
$\fAP{attacker}$, the web server for $\fAP{LPO}$, a finite set
$\fAP{B}$ of web browsers, a finite set $\fAP{RP}$ of web servers for
the relying parties, and a finite set $\fAP{IDP}$ of web servers for
the identity providers, with $\mathsf{Hon} := \fAP{B} \cup \fAP{RP}
\cup \fAP{IDP} \cup \{\fAP{LPO}\}$, $\mathsf{Web} := \emptyset$, and $\mathsf{Net} :=
\{\fAP{attacker}\}$. DNS servers are assumed to be dishonest, and
hence, are subsumed by $\fAP{attacker}$. More details on the processes
in $\bidsystem$ are provided below.
Figure~\ref{fig:scripts-in-w} shows the set of scripts $\scriptset$
and their respective string representations that are defined by the
mapping $\mathsf{script}$.
The set $E_0$ contains only the trigger events as specified in
Section~\ref{sec:websystem}.

\begin{figure}[htb]
  \centering
  \begin{tabular}{|@{\hspace{1ex}}l@{\hspace{1ex}}|@{\hspace{1ex}}l@{\hspace{1ex}}|}\hline 
    \hfill $s \in \scriptset$\hfill  &\hfill  $\mathsf{script}(s)$\hfill  \\\hline\hline
    $\Rasp$ & $\str{att\_script}$  \\\hline
    $\mi{script\_rp\_index}$ & $\str{script\_rp\_index}$  \\\hline
    $\mi{script\_lpo\_cif}$ &  $\str{script\_lpo\_cif}$  \\\hline

    $\mi{script\_lpo\_ld}$ & $\str{script\_lpo\_ld}$ \\\hline
    $\mi{script\_idp\_pif}$ & $\str{script\_idp\_pif}$ \\\hline
    $\mi{script\_idp\_ad}$ & $\str{script\_idp\_ad}$ \\\hline
  \end{tabular}
  
  \caption{List of scripts in $\scriptset$ and their respective string
    representations.}
  \label{fig:scripts-in-w}
\end{figure}

This outlines $\bidwebsystem$. We will now define the DY processes in
$\bidwebsystem$ and their addresses, domain names, and secrets in more
detail.

\subsection{Addresses and Domain Names}
The set $\addresses$ contains for $\fAP{LPO}$, $\fAP{attacker}$, every
relying party in $\fAP{RP}$, every identity provider in $\fAP{IDP}$,
and every browser in $\fAP{B}$ one address each. By $\mapAddresstoAP$
we denote the corresponding assignment from a process to its address.
The set $\dns$ contains one domain for $\fAP{LPO}$, one for every
relying party in $\fAP{RP}$, a finite set of domains for every
identity provider in $\fAP{IDP}$, and a finite set of domains for
$\fAP{attacker}$. Browsers (in $\fAP{B})$ do not have a domain.

By $\mapAddresstoAP$ and $\mapDomain$ we denote the assignments from
atomic processes to sets of $\addresses$ and $\dns$, respectively. If
$\mapDomain$ or $\mapAddresstoAP$ returns a set with only one element,
we often write $\mapDomain(x)$ or $\mapAddresstoAP(x)$ to refer to the
element.

\subsection{Keys and Secrets}
The set $\nonces$ of nonces is partitioned into four sets, an infinite
set $N^\bidsystem$, an infinite set $K_\text{SSL}$, an infinite set
$K_\text{sign}$, and a finite set $\RPSecrets$. We thus have
\begin{align*}
\def\hereMaxHeightPhantom{\vphantom{K_{\text{p}}^\bidsystem}}
\nonces = 
\underbrace{N^\bidsystem\hereMaxHeightPhantom}_{\text{infinite}} 
\dot\cup \underbrace{K_{\text{SSL}}\hereMaxHeightPhantom}_{\text{finite}} 
\dot\cup \underbrace{K_{\text{sign}}\hereMaxHeightPhantom}_{\text{finite}} 
\dot\cup \underbrace{\RPSecrets\hereMaxHeightPhantom}_{\text{finite}}\ .
\end{align*}
The set $N^\bidsystem$ contains the nonces that are available for each
DY process in $\bidsystem$. It is partitioned into infinite sets of
nonces, one set $N^p\subseteq N^\bidsystem$ for every $p\in
\bidsystem$.

The set $K_\text{SSL}$ contains the keys that will be used for SSL
encryption. Let $\mapSSLKey\colon \dns \to K_\text{SSL}$ be an injective
mapping that assigns a (different) private key to every domain.

The set $K_\text{sign}$ contains the keys that will be used by IdPs
for signing UCs. Let $\mapSignKey\colon \fAP{IdPs} \to K_\text{sign}$
be an injective mapping that assigns a (different) private key to every identity
provider.

The set $\RPSecrets\subseteq \nonces$ is the
set of passwords (secrets) the browsers share with the identity
providers. 

\subsection{Identities}\label{app:browserid-pidp-identities}
Indentites are email addresses, which consist of a user name and a
domain part. For our model, this is defined as follows:
\begin{definition}
  An \emph{identity} (email address) $i$ is a term of the form
  $\an{\mi{name},\mi{domain}}$ with $\mi{name}\in \mathbb{S}$ and
  $\mi{domain} \in \dns$.

  Let $\IDs$ be the finite set of identities. By $\IDs^y$ we denote
  the set $\{ \an{\mi{name}, \mi{domain}} \in \IDs\,|\, \mi{domain}
  \in \mapDomain(y) \}$.

  We say that an ID is \emph{governed} by the DY process to which the
  domain of the ID belongs. Formally, we define the mapping $\mapGovernor:
  \IDs \to \bidsystem$, $\an{\mi{name}, \mi{domain}} \mapsto
  \mapDomain^{-1}(\mi{domain})$.
\end{definition}%
The governor of an ID will usually be an IdP, but could also be the
attacker. Note that we omit delegation of authority over domains.

We further define UCs, IAs and CAPs formally:
\begin{definition}
  A (valid) \emph{user certificate} (UC) $\mi{uc}$ for a user $u$ with
  email address $\mi{id} = \an{\mi{name}, d}$ and public key (verification key)
  $\pub(k_u)$, where $d \in \mapDomain(y)$ is a domain of the
  governor $y$ of $\mi{id}$
  and $k_u$ is the private key (signing key) of $u$, is a message of
  the form $\mi{uc}=\sig{\an{\an{\mi{name}, d},
      \pub(k_u)}}{\mapSignKey(y)}$.

  An (valid) \emph{identity assertion} (IA) $ia$ for an origin
  $\mi{o}$ (e.g., $\an{\str{example.com},\str{S}}$) signed with the
  key $k_u$ is a message of the form $ia=\sig{o}{k_u}$.

  A \emph{certificate assertion pair} (CAP) is of the form
  $\an{\mi{uc},\mi{ia}}$, with $\mi{uc}$ and $\mi{ia}$ as
  above.\footnote{Note that the time stamps are omitted both from the
    UC and the IA\@. This models that both certificates are valid
    indefinitely. In reality, they are valid for a certain period
    of time, as indicated by the time stamps. So our modeling is a
    safe overapproximation.}
\end{definition}

Each browser $b\in \fAP{B}$ owns a set of secrets ($\in
\RPSecrets$). Each secret is assigned a set $S$ of IDs for a specific
IdP $y$ such that $S \subseteq \IDs^y$. Browsers have disjoint secrets
and secrets have disjoint sets of IDs.  The IdPs of the secrets of a
browser are disjoint. An \emph{ID $i$ is owned by a browser $b$} if
the identity associated with $i$ \emph{belongs to} $b$:

Let $\mapPLItoOwner: \RPSecrets \to \fAP{B}$ denote the mapping that
assigns to each secret the browser that owns this secret. Let
$\mapIDtoPLI: \IDs \to \RPSecrets$ denote the mapping that assigns to
each identity the associated secret. Now, we define the mapping
$\mapIDtoOwner: \IDs \to \fAP{B}$, $i \mapsto
\mapPLItoOwner(\mapIDtoPLI(i))$, which assigns to each identity the
browser that owns this identity (we say that the identity belongs to
the browser).

\subsection{Corruption}
RPs and IdPs can become corrupted: If they receive the message
$\corrupt$, they start collecting all incoming messages in their state
and (upon triggering) send out all messages that are derivable from
their state and collected input messages, just like the attacker
process. We say that an RP or IdP is \emph{honest} if the according
part of their state ($s.\str{corrupt}$) is $\bot$, and that they are
corrupted otherwise.

We are now ready to define the processes in $\websystem$ as well as
the scripts in $\scriptset$ in more detail. 

\subsection{Processes in $\bidsystem$ (Overview)}

We first provide an overview of the processes in $\bidsystem$. All
processes in $\websystem$ contain in their initial states all public
keys and the private keys of their respective domains (if any). We
define $I^p=\{\mapAddresstoAP(p)\}$ for all $p\in \mathsf{Hon}$.

\subsubsection{Attacker.}  The $\fAP{attacker}$ process is a
network attacker (see Section~\ref{sec:websystem}), who
uses all addresses for sending and listening. All parties
use the attacker as a DNS server. See
Appendix~\ref{app:attacker-pidp} for details.

\subsubsection{Browsers.}  Each $b \in \fAP{B}$ is a web
browser as defined in Section~\ref{sec:web-browsers}. The
initial state contains all secrets owned by $b$, stored
under the origin of the respective IdP. See
Appendix~\ref{sec:browsers-pidp} for details.

\subsubsection{LPO.} LPO is a web server that serves important scripts
($\str{script\_lpo\_cif}$ and $\str{script\_lpo\_ld}$) and manages
user sessions.  See Appendix~\ref{sec:lpo-pidp} for details.

\subsubsection{IdPs.} Each IdP is a web server. IdPs are modeled
following the example implementation provided by Mozilla. As outlined
in Section~\ref{sec:browserid}, users can authenticate to the IdP with
their credentials. IdP tracks the state of the users with
sessions. Authenticated users can receive signed UCs from the
IdP. When receiving a special message ($\corrupt$) IdPs can become
corrupted. Similar to the definition of corruption for the browser,
IdPs then start sending out all messages that are derivable from their
state. See Appendix~\ref{app:idps} for details.

\subsubsection{Relying Parties.}  A relying party $r \in \fAP{RP}$ is a
web server. The definition of $R^r$ follows the description in
Section~\ref{sec:browserid} and the security considerations in
\cite{mozilla/persona/mdn} (Cross-site Request Forgery protection,
e.g., by checking origin headers, and HTTPS only with STS enabled). RP
answers any $\mGet$ request with the script $\str{script\_rp\_index}$
(see below). When receiving an HTTPS $\mPost$ message, RP checks
(among others) if the message contains a valid CAP\@. For this
purpose, all signing keys of the identity providers (see below) are
contained in the initial state of all RPs. If successful, RP responds
with an \emph{RP service token for ID $i$} of the form $\an{n, i}$,
where $i \in \IDs$ is the ID for which the CAP was issued and $n$ is a
freshly chosen nonce. The RP $r$ keeps a list of such tokens in its
state. Intuitively, a client having such a token can use the service
of $r$ for ID $i$. See Appendix~\ref{app:relying-parties-pidp} for
details. Just like IdPs, RPs can become corrupted.

\subsection{Attacker}\label{app:attacker-pidp} As mentioned, the attacker
$\fAP{attacker}$ is modeled to be a network attacker as specified in
Section~\ref{sec:websystem}. We allow it to listen to/spoof all
available IP addresses, and hence, define $I^\fAP{attacker} =
\addresses$. His initial state is $s_0^\fAP{attacker} =
\an{\mi{attdoms}, \mi{sslkeys}, \mi{signkeys}}$, where $\mi{attdoms}$
is a sequence of all domains along with the corresponding private keys
owned by the attacker, $\mi{sslkeys}$ is a sequence of all domains and
the corresponding public keys, and $\mi{signkeys}$ is a sequence
containing all public signing keys for all IdPs. All other parties use
the attacker as a DNS server.

\subsection{Browsers}\label{sec:browsers-pidp} 

Each $b \in \fAP{B}$ is a web browser as defined in
Section~\ref{sec:web-browsers}, with $I^b := \{\mapAddresstoAP(b)\}$
being its address. 

To define the inital state, first let $\mi{ID}^b :=
\mapIDtoOwner^{-1}(b)$ be
 the set of all IDs of $b$, $\mi{ID}^{b,d} :=
\{i \mid \exists\, x:\ i = \an{x, d} \in \mi{ID}^b\}$ be the set of
IDs of $b$ for a domain $d$, and $\mi{SecretDomains}^b := \{d \mid
\mi{ID}^{b,d} \neq \emptyset \}$ be the set of all domains that $b$
owns identities for.

Then, the initial state $s_0^b$ is defined as follows: the key mapping
maps every domain to its public (ssl) key, according to the mapping
$\mapSSLKey$; the DNS address is $\mapAddresstoAP(\fAP{attacker})$;
the list of secrets contains an entry $\an{\an{d,\https}, s}$ for each
$d \in \mi{SecretDomains}^b$ and $s = \mapIDtoPLI(i)$ for some $i \in
\mi{ID}^{b,d}$ ($s$ is the same for all $i$); $\mi{ids}$ is
$\an{\mi{ID}^b}$; $\mi{sts}$ is empty.

\subsection{LPO} \label{sec:lpo-pidp}

$\LPO$ is a an atomic DY process $(I^\fAP{LPO}, Z^\fAP{LPO},
R^\fAP{LPO}, s^\fAP{LPO}_0, N^\fAP{LPO})$ with the IP address
$I^\fAP{LPO} = \{\mapAddresstoAP(\fAP{LPO})\}$.  The initial state
$s^\LPO_0$ of $\LPO$ contains the private key of its domain, and the
signing keys of all IdPs ($\LPO$ does not need the public ssl keys of
other parties, which is why we omit them from $\LPO$'s initial
state.). The definition of $R^\LPO$ follows the description of $\LPO$
in Appendix~\ref{app:browserid-lowlevel}.

HTTP responses by $\LPO$ can contain strings representing scripts,
namely the script $\str{script\_lpo\_cif}$ run in the CIF and the
script $\str{script\_lpo\_ld}$ run in the LD. These scripts are
defined in Appendix~\ref{app:browserid-scripts-pidp}.

\subsubsection{Client sessions at $\LPO$.}  Any party can establish a
\emph{session} at $\LPO$. Such a session can either be authenticated
or unauthenticated. Roughly speaking, a session becomes authenticated
if a client has provided a valid CAP (for the origin of LPO) to $\LPO$
during the session. LPO manages groups of IDs, i.e., lists of email
addresses. If a user authenticates a session using any ID in the
group, she is authenticated for all IDs in the group. An authenticated
session can (non-deterministically) \emph{expire}, i.e. the
authenticated session can get unauthenticated or it is removed
completely. Such an expiration is used to model a user logout or a
session expiration caused by a timeout.

More specifically, a session is identified by a nonce, which is issued
by $\LPO$. Each session is associated with some xsrfToken, which is
also a nonce issued by $\LPO$. $\LPO$ stores all information about
established sessions in its state as a dictionary indexed by the
session identifier. In this dictionary, for every session $\LPO$
stores a pair containing the xsrfToken and, in authenticated sessions,
the sequence of all IDs associated with the secret provided in the
session, or, in unauthenticated sessions, the empty sequence $\an{}$
of IDs. On the receiver side (typically a browser) $\LPO$ places, by
appropriate headers in its HTTPS responses, a cookie named
$\str{browserid\_state}$ whose value is the session identifier (a
nonce). This cookie is flagged to be a session, httpOnly, and secure
cookie.

Before we provide a detailed formal specification of
$\LPO$, we first provide an informal description.

\subsubsection{HTTPSRequests to $\LPO$.}  $\LPO$ answers only to certain
requests (listed below). In reality, all such requests have to be over
HTTPS, and all responses send by $\LPO$ contain the $\str{Strict
  \mhyphen Transport \mhyphen Security}$ header. We overapproximate
safely here in omitting these two requirements from the model.

\begin{description}

\item[\texttt{GET /cif}.] $\LPO$ replies to this request by
  providing the script $\str{script\_lpo\_cif}$.

\item[\texttt{GET /ld}.] $\LPO$ replies to this request by
  providing the script $\str{script\_lpo\_ld}$.

\item[\texttt{GET /ctx}.] This requests the session context
  information from $\LPO$. The response body is of the form
  $\an{\mi{loggedIn}, \mi{xsrfToken}}$, where $\mi{loggedIn}$ is
  $\True$ or $\bot$, depending on whether the user is logged in at
  $\LPO$ or not, and $\mi{xsrfToken}$ is the token that the client is
  supposed to include into the auth request (see below).
\item[\texttt{POST /auth}.] With this request, a client can log into
  $\LPO$. The client has to provide a sequence of a CAP and an XSRF
  token. The CAP must be valid and issued for the origin of LPO.

\end{description}

We define $\LPO$ formally as an atomic DY process
$(I^\LPO, Z^\LPO, R^\LPO, s^\LPO_0,
N^\LPO)$. As already mentioned, we define $I^\LPO
= \{\mapAddresstoAP(\LPO)\}$.

In order to define the set $Z^\LPO$ of states of $\LPO$,
we first define the terms describing the session context of
a session. 

\begin{definition}
  A term of the form $\an{\mi{ids},\mi{xsrfToken}}$ with $\mi{ids}
  \subsetPairing \IDs$ and $\mi{xsrfToken} \in \nonces$ is called an
  \emph{LPO session context}. We denote the set of all LPO session
  contexts by $\LPOSessionCTXs$.
\end{definition}

Now, we define the set $Z^\LPO$ of states of LPO as
well as the initial state $s^\LPO_0$ of LPO.

\begin{definition}
  A \emph{state $s\in Z^\LPO$ of LPO} is a term of the form
  $\langle\mi{nonces}$, $\mi{sslkey}$, $\mi{signkeys}$,
  $\mi{sessions}\rangle$ where $\mi{nonces} \subsetPairing \nonces$
  (used nonces), $\mi{sslkey}=\mapSSLKey(\mapDomain(\LPO))$,
  $\mi{signkeys}$ is a mapping of domain names to public signing keys of the
  form $\mi{signkeys}=\an{\left\{\an{d, \pub(\mapSignKey(y))} \mid y \in
      \fAP{IdPs},\ d \in \mapDomain(y)\right\}}$, and $\mi{sessions} \in
  \dict{\nonces}{\LPOSessionCTXs}$.\footnote{As mentioned before, the
    state of LPO does not need to contain public keys.}

    The \emph{initial state $s^\LPO_0$ of LPO} is a state of LPO
    with $s^\LPO_0.\str{nonces} = \an{}$ and $s^\LPO_0.\str{sessions} =
    \an{}$.
\end{definition}

\begin{example} Let $k$ be a private signing key for some identity
  provider which owns the domain $\str{example.com}$. A possible state
  $s$ of LPO may look like this:
\begin{align*}
  s &=
  \an{\an{n_1,\ldots,n_m},\mapSSLKey(\mapDomain(\fAP{LPO})),[\str{example.com}:\pub(k)],\mi{sessions}} 
\end{align*}
with 
\begin{align*}
  \mi{sessions} &= \an{ \an{\str{sessionid_1},\an{\an{\mi{id}'_1, \ldots, \mi{id}'_l},\str{xsrfToken}}} , \ldots }
\end{align*}
\end{example}

We now specify the relation $R^\LPO \subseteq (\events \times Z^\LPO) \times
  (2^\events \times Z^\LPO)$ of LPO. Just like
in Appendix~\ref{sec:descr-web-brows}, we describe this
relation by a non-deterministic algorithm.

\captionof{algorithm}{\label{alg:lpo-pidp} Relation of LPO
  $R^\LPO$ }
\begin{algorithmic}[1]
\Statex[-1] \textbf{Input:} $(a{:}f{:}m),s$
  \Let{$s'$}{$s$}
  \Let{$\mi{sts}$}{$\an{\cSTS, \True}$}
  \If{$m \equiv \str{TRIGGER}$}
    \If{$s'.\str{sessions} \equiv \an{}$}
      \Stop{\DefStop}
    \EndIf
    \LetND{$\mi{sessionid}$}{$\{\mi{id} \mid \mi{id} \inPairing s'.\str{sessions}\}$}
    \LetND{$\mi{choice}$}{$\{\str{logout}, \str{expire}\}$}
    \If{$\mi{choice} \equiv \str{logout}$}
      \Let{$s'.\str{sessions}[\mi{sessionid}].\str{ids}$}{$\an{}$}
    \Else
      \Remove{$s'.\str{sessions}$}{$\mi{sessionid}$}
    \EndIf
    \Stop{\DefStop}
  \EndIf
  \LetST{$m_{\text{dec}}$, $k$}{$\an{m_{\text{dec}}, k} \equiv \dec{m}{s.\str{sslkey}}$\breakalgohook{0}}{\textbf{stop} \DefStop}
  \LetST{$n$, $\mi{method}$, $\mi{path}$, $\mi{params}$, $\mi{headers}$, $\mi{body}$}{\breakalgohook{0}$\an{\cHttpReq, n, \mi{method}, \mapDomain(\fAP{LPO}), \mi{path}, \mi{params}, \mi{headers}, \mi{body}} \equiv m_{\text{dec}}$\breakalgohook{0}}{\textbf{stop} \DefStop}
  \If{$\mi{method} \equiv \mGet \wedge \mi{path} \equiv \str{/cif}$} \Comment{Deliver CIF script}
    \Let{$\mi{scriptinit}$}{$\an{\str{init},\bot,\bot,\bot,\bot,\bot,\bot,\an{},\bot,\bot}$}
    \Let{$m'$}{$\encs{\an{\cHttpResp, n, 200, \an{\mi{sts}}, \an{\str{script\_lpo\_cif}, \mi{scriptinit}}}, k}$}
    \Stop{\StopWithMPrime}
  \ElsIf{$\mi{method} \equiv \mGet \wedge \mi{path} \equiv \str{/ld}$} \Comment{Deliver LD script}
    \Let{$\mi{scriptinit}$}{$\an{\str{init},\bot,\bot,\bot,\bot,\bot,\an{},\bot,\bot,\bot}$}
    \Let{$m'$}{$\encs{\an{\cHttpResp, n, 200, \an{\mi{sts}}, \an{\str{script\_lpo\_ld}, \mi{scriptinit}}}, k}$} 
    \Stop{\StopWithMPrime}
  \ElsIf{$\mi{method} \equiv \mGet \wedge \mi{path} \equiv \str{/ctx}$} \Comment{Deliver context info.}
    \Let{$\mi{sessionid}$}{$\mi{headers}[\str{Cookie}][\str{browserid\_state}]$}
    \If{$\mi{sessionid} \not\inPairing s.\str{sessions}$}
     \Let{$\mi{sessionid}$, $s'$}{$\mathsf{TAKENONCE}(s')$}
     \Let{$\mi{xsrfToken}$, $s'$}{$\mathsf{TAKENONCE}(s')$}
     \Append{$\an{\mi{sessionid}, \an{\an{}, \mi{xsrfToken}}}$}{$s'.\str{sessions}$}
    \EndIf
    \Let{$\mi{context}$}{$\an{\bot, s'.\str{sessions}[\mi{sessionid}].\str{xsrfToken}}$}
    \If{$s'.\str{session}[\mi{sessionid}].\str{ids} \not\equiv \an{}$}
      \Let{$\mi{context}.1$}{$\True$}
    \EndIf
    \Let{$\mi{setCookie}$}{$\an{\cSetCookie, \an{\an{\str{browserid\_state}, \mi{sessionid}, \True, \True, \True}}}$}
    \Let{$\mi{headers}$}{$\an{\mi{sts}, \mi{setCookie}}$}
    \Let{$m'$}{$\encs{\an{\cHttpResp, n, 200, \mi{headers}, \mi{context}}}{k}$}
    \Stop{\StopWithMPrime}
  \ElsIf{$\mi{method} \equiv \mPost \wedge \mi{path} \equiv \str{/auth}$}
    \LetST{$\mi{uc}$, $\mi{ia}$, $\mi{xsrfToken}$}{$\an{\an{\mi{uc}, \mi{ia}}, \mi{xsrfToken}} \equiv \mi{body}$\breakalgohook{1}}{\textbf{stop} \DefStop}
    \Let{$\mi{sessionid}$}{$\mi{headers}[\str{Cookie}][\str{browserid\_state}]$}
    \If{$s'.\str{sessions}[\mi{sessionid}].\str{xsrfToken} \not\equiv \mi{xsrfToken}$}
      \Stop{\DefStop}
    \EndIf
    \LetST{$\mi{name}$, $\mi{domain}$, $\mi{userpubkey}$}{\breakalgohook{1}$\an{\an{\mi{name}, \mi{domain}}, \mi{userpubkey}} \equiv \unsig{\mi{uc}}$\breakalgohook{1}}{\textbf{stop} \DefStop}
    \Let{$\mi{id}$}{$\an{\mi{name}, \mi{domain}}$}
    \Let{$\mi{origin}$}{$\unsig{\mi{ia}}$}
    \If{$\checksig{\mi{uc}}{s.\str{signkeys}[\mi{domain}]} \not\equiv \True \vee \checksig{\mi{ia}}{\mi{userpubkey}} \not\equiv \True$ \breakalgohook{1} $\vee$ $ \mi{origin} \not\equiv \an{s.\str{domain}, \https}$}
      \Stop{\DefStop}
    \EndIf
    \If{$s'.\str{sessions}[\mi{sessionid}].\str{ids} \equiv \an{}$}
      \If{$\not\exists\, n \in \mathbb{N}$ \textbf{such that} $\mi{id} \inPairing s'.\str{idgroups}.n$}
        \Append{$\an{\mi{id}}$}{$s'.\str{idgroups}$}
      \EndIf
      \LetND{$n$}{$\mathbb{N}$ \textbf{such that} $\mi{id} \inPairing s'.\str{idgroups}.n$}
    \Else
      \LetST{$n \gets \mathbb{N}$}{$s'.\str{idgroups}.n \equiv s'.\str{sessions}[\mi{sessionid}].\str{ids}$\breakalgohook{2}}{\textbf{stop} \DefStop}
      \If{$\mi{id} \not\inPairing s'.\str{idgroups}.n$}
        \Append{$\an{\mi{name}, \mi{domain}}$}{$s'.\str{idgroups}.n$}
      \EndIf
    \EndIf
    \Let{$s'.\str{sessions}[\mi{sessionid}].\str{ids}$}{$s'.\str{idgroups}.n$}
    \Let{$m'$}{$\encs{\an{\cHttpResp, n, 200, \an{\mi{sts}}, \True}}{k}$}
    \Stop{\StopWithMPrime}
  \EndIf
\Stop{\DefStop}
\end{algorithmic} \setlength{\parindent}{1em}
\subsection{Relying Parties} \label{app:relying-parties-pidp} 

A relying party $r \in \fAP{RP}$ is a web server modeled as an atomic
DY process $(I^r, Z^r, R^r, s^r_0, N^r)$ with the address $I^r :=
\{\mapAddresstoAP(r)\}$. Its initial state $s^r_0$ contains its
domain, the private key associated with its domain, the DNS server
address, and the signing keys of all IdPs.\footnote{We add the IdP
  verification keys to the initial status (instead of having RPs
  retrieve them dynamically from the IdP) in order to reduce the
  overall complexity.} The full state additionally contains the set of
service tokens the RP has issued. The definition of $R^r$ again
follows the description in Appendix~\ref{app:browserid-lowlevel}. RP
accepts only HTTPS requests.

In a typical flow with one client, $r$ will first receive an HTTP GET
request. In this case, it returns the script $\str{script\_rp\_index}$
(see Appendix~\ref{app:browserid-scripts-pidp} below) and sets the \texttt{Strict-Trans\-port-Security} header.

Afterwards, it will receive an HTTPS POST request. Provided that the
message contains a CAP,
$r$ checks that the UC and IA are valid and matching, and that the IA
contains the Origin of $r$ (with HTTPS). If the check is successful,
$r$ creates a new \emph{RP service token for the identity $i$},
$\an{n,i}$, and sends it to the browser. The RP keeps a list of such
tokens in its state. Intuitively, a client in possession of such a
token can use the service of $r$ for ID $i$ (e.g., access data of $i$
at $r$).

We now provide the formal definition of $r$ as an atomic DY process
$(I^r, Z^r, R^r, s^r_0, N^r)$. As mentioned, we define $I^r =
\{\mapAddresstoAP(r)\}$. Next, we define the set $Z^r$ of states of
$r$ and the initial state $s^r_0$ of $r$.

\begin{definition}\label{def:relying-parties}
  A \emph{state $s\in Z^r$ of an RP $r$} is a term of the form
  $\langle\mi{nonces}$, $\mi{domain}$, $\mi{sslkey}$, $\mi{signkeys}$,
  $\mi{serviceTokens}$, $\mi{corrupt}\rangle$ where $\mi{nonces} \subsetPairing
  \nonces$ (used nonces), $\mi{domain} = \mapDomain(r)$, $\mi{sslkey}=\mapSSLKey(\mapDomain(r))$,
  $\mi{signkeys}=\an{\left\{\an{d, \pub(\mapSignKey(y))} \mid y \in
      \fAP{IdPs},\ d \in \mapDomain(y)\right\}}$ (same as for $\fAP{LPO}$),
  $\mi{serviceTokens}\in\dict{\nonces}{\mathbb{S}}$,
  $\mi{corrupt}\in\terms$.

  The \emph{initial state $s^r_0$ of $r$} is a state of $r$ with
  $s^r_0.\str{nonces} = s^r_0.\str{serviceTokens} = \an{}$  and
  $s^r_0.\str{corrupt} = \bot$.
\end{definition}

We now specify the relation $R^r \subseteq (\events \times Z^r ) \times
  (2^\events \times Z^r)$ of $r$. Just like
in Appendix~\ref{sec:descr-web-brows}, we describe this
relation by a non-deterministic algorithm.  We note that we
use the function \textsf{TAKENONCE} introduced in
Section~\ref{app:proceduresbrowser} for this purpose.

\captionof{algorithm}{\label{alg:rp-pidp} Relation of a Relying
  Party $R^r$}
\begin{algorithmic}[1]
\Statex[-1] \textbf{Input:} $(a{:}f{:}m),s$
  \Let{$s'$}{$s$}
  \If{$s'.\str{corrupt} \not\equiv \bot \vee m \equiv \corrupt$}
    \Let{$s'.\str{corrupt}$}{$\an{\an{a, f, m}, s'.\str{corrupt}}$}
    \LetND{$m'$}{$d_{N^p}(s')$}
    \LetND{$a'$}{$\addresses$}
    \State \textbf{stop} $\{(a'{:}a{:}m')\}$, $s'$
  \EndIf
  \Let{$\mi{sts}$}{$\an{\cSTS, \True}$}
  \LetST{$m_{\text{dec}}$, $k$}{$\an{m_{\text{dec}}, k} \equiv \dec{m}{s.\str{sslkey}}$\breakalgohook{0}}{\textbf{stop} \DefStop}
  \LetST{$n$, $\mi{method}$, $\mi{path}$, $\mi{params}$, $\mi{headers}$, $\mi{body}$}{\breakalgohook{0}$\an{\cHttpReq, n, \mi{method}, s.\str{domain}, \mi{path}, \mi{params}, \mi{headers}, \mi{body}} \equiv m_{\text{dec}}$\breakalgohook{0}}{\textbf{stop} \DefStop}
  \If{$\mi{method} \equiv \mGet$} \Comment{Deliver CIF script}
    \Let{$\mi{scriptinit}$}{$\an{\str{init},\bot,\bot,\bot,\an{},\an{},\bot}$}
    \Let{$m'$}{$\encs{\an{\cHttpResp, n, 200, \an{\mi{sts}}, \an{\str{script\_rp\_index}, \mi{scriptinit}}}}{k}$}
    \Stop{\StopWithMPrime}
  \ElsIf{$\mi{method} \equiv \mPost \wedge \mi{headers} \equiv \an{\str{Origin}, \an{s.\str{domain},\https}}$} \label{line:rp-checksig-pidp}
    \LetST{$\mi{uc}$, $\mi{ia}$}{$\an{\mi{uc}, \mi{ia}} \equiv \mi{body}$}{\textbf{stop} \DefStop}
    \LetST{$\mi{name}$, $\mi{domain}$, $\mi{userpubkey}$}{\breakalgohook{1}$\an{\an{\mi{name}, \mi{domain}}, \mi{userpubkey}} \equiv \unsig{\mi{uc}}$\breakalgohook{1}}{\textbf{stop} \DefStop}
    \Let{$\mi{id}$}{$\an{\mi{name}, \mi{domain}}$}
    \Let{$\mi{origin}$}{$\unsig{\mi{ia}}$}
    \If{$\checksig{\mi{uc}}{s.\str{signkeys}[\mi{domain}]} \not\equiv \True \vee \checksig{\mi{ia}}{\mi{userpubkey}} \not\equiv \True \vee$\breakalgohook{1}$ \mi{origin} \not\equiv \an{s.\str{domain}, \https}$}
      \Stop{\DefStop}
    \EndIf
    \Let{$n_{\text{token}}$, $s'$}{$\mathsf{TAKENONCE}(s')$}\label{line:rp-create-nonce-pidp}
    \Append{$\an{n_{\text{token}},\mi{id}}$}{$s'.\str{serviceTokens}$}
    \Let{$m'$}{$\encs{\an{\cHttpResp, n, 200, \an{\mi{sts}}, \an{n_{\text{token}}, \mi{id}}}}{k}$}
    \Stop{\StopWithMPrime}
  \EndIf
  
\end{algorithmic} \setlength{\parindent}{1em}

\subsection{Identity Providers}  \label{app:idps}

An identity provider $i \in \mathsf{IdPs}$ is a web server modeled as
an atomic process $(I^i, Z^i, R^i, s_0^i, N^i)$ with the address $I^i
:= \{\mapAddresstoAP(i)\}$. Its initial state $s^i_0$ contains
a list of domains and (private) SSL keys (see below), a
list of users and identites (see below), and a private key for signing
UCs. Besides this, the full state of $i$ further contains a list of
used nonces, and information about active sessions.

Sessions are structured as a dictionary: For each session identifier
(session ID) the dictionary contains the list of identities for which
the session is authenticated.

IdPs, in our model, only accept SSL connections. Thus, after receiving
a request, an IdP first decodes the message. It then checks whether a
valid session ID is contained in the cookie that was sent with the
request. If there is no such ID, a new session with a freshly chosen
session ID is created. IdP saves this ID into its list of active
sessions, along with the initial session data (an empty list of
authenticated identities). A Set-Cookie header is added to IdPs
response to the browser in order to add the session cookie to the
client's cookie store.

The IdP then checks the method and the path of the request and acts as
follows:

If the method is GET, IdP serves, depending on the path, the
provisioning iframe ($\str{script\_idp\_pif}$) or the authentication
dialog ($\str{script\_idp\_ad}$) defined in Appendix~\ref{app:browserid-scripts-pidp}.

If the method is POST, the IdP can either authenticate the user or
sign a UC. In the first case, IdP extracts the identity of the user
(an email address) and the user's secret from the request. If the
secret and the identity are found in the user database, the session is
considered to be logged in for all identities associated with this
secret.  In the second case (signing UC), the IdP extracts the user's
identity and the public key of the user from the request. If the
session is considered to be logged in for this identity, the IdP
creates a UC and signs it with its signing key before sending it to
the user.

\subsubsection{Formal description.} In the following, we will first
define the (initial) state of $i$ formally and afterwards present the
definition of the relation $R^i$.

To define the initial state, we will need a term that represents the
``user database'' of the IdP $i$. We will call this term
$\mi{userset}^i$. This database defines, which secret is valid for
which set of identities.  It is encoded as a mapping of secrets to
lists of identities for which these secrets are valid. For example, if
the secret $\mi{secret}_1$ is valid for the identites $\mi{id}_1$ and
$\mi{id}_2$ and the secret $\mi{secret}_2$ is valid for the identities
$\mi{id}_3$ and $\mi{id}_4$, the $\mi{userset}^i$ looks as follows:
\begin{align*}
\mi{userset}^i = [\mi{secret}_1{:}\an{\mi{id}_1, \mi{id}_2}, \mi{secret}_2{:}\an{\mi{id}_3, \mi{id}_4}]
\end{align*}

To define $\mi{userset}^i$ (for the identity provider $i$), we first
define the set $\RPSecrets^i = \bigcup_{j \in \IDs^i}\mapIDtoPLI(j)$,
the function $\mathsf{IDsofSecret}: \RPSecrets \to \IDs$, $s \mapsto
\{j\,|\, j \in \IDs,\ \mapIDtoPLI(j) = s\}$, and finally
$\mi{userset}^i = \an{\{\an{s, \an{ \mathsf{IDsofSecret}(s) }}\, |\, s \in
\RPSecrets^i\}}$.

We also need a term that represents a dictionary that maps domains to
(private) SSL keys of the IdP $i$. We define $\mi{sslkeys}^i =
\an{\left\{\an{d, \mapSSLKey(d)} \mid d \in \mapDomain(i)\right\}}$.

\begin{definition}
  A \emph{state $s\in Z^i$ of an IdP $i$} is a term of the form
  $\langle\mi{nonces}$, $\mi{sslkeys}$, $\mi{users}$, $\mi{signkey}$,
  $\mi{sessions}$, $\mi{corrupt}\rangle$ where $\mi{nonces}
  \subsetPairing \nonces$ (used nonces), $\mi{sslkeys} =
  \mi{sslkeys}^i $, $\mi{users} = \mi{userset}^i$, $\mi{signkey} \in
  \nonces$ (the key used by the IdP $i$ to sign UCs),
  $\mi{sessions}\in\dict{\nonces}{\terms}$, $\mi{corrupt} \in \terms$.

  The \emph{initial state $s^i_0$ of $i$} is the state $\an{\an{},
    \mi{sslkeys}^i, \mi{userset}^i, \mapSignKey(i), \an{},
    \bot}$.
\end{definition}

The relation $R^i$ that defines the behavior of the IdP $i$ is defined as follows:

\captionof{algorithm}{\label{alg:idp-pidp} Relation of IdP $R^i$}
\begin{algorithmic}[1]
\Statex[-1] \textbf{Input:} $(a{:}f{:}m),s$
  \Let{$s'$}{$s$}
  \If{$s'.\str{corrupt} \not\equiv \bot \vee m \equiv \corrupt$}
    \Let{$s'.\str{corrupt}$}{$\an{\an{a, f, m}, s'.\str{corrupt}}$}
    \LetND{$m'$}{$d_{N^p}(s')$}\label{line:usage-of-signkey-corrupt-pidp}
    \LetND{$a'$}{$\addresses$}
    \State \textbf{stop} $\{(a'{:}a{:}m')\}$, $s'$
  \EndIf
  \Let{$\mi{sts}$}{$\an{\cSTS, \True}$}
  \LetST{$m_{\text{dec}}$, $k$, $k'$, $\mi{inDomain}$}{\breakalgohook{0}$\an{m_{\text{dec}}, k} \equiv \dec{m}{k'} \wedge \an{inDomain,k'} \in s.\str{sslkeys}$\breakalgohook{0}}{\textbf{stop} \DefStop}
  \LetST{$n$, $\mi{method}$, $\mi{path}$, $\mi{params}$, $\mi{headers}$, $\mi{body}$}{\breakalgohook{0}$\an{\cHttpReq, n, \mi{method}, \mi{inDomain}, \mi{path}, \mi{params}, \mi{headers}, \mi{body}} \equiv m_{\text{dec}}$\breakalgohook{0}}{\textbf{stop} \DefStop}
  
  \If{$\mi{method} \equiv \mPost$}
    \If{$\mi{path} \not\equiv \str{/certreq}$}  \Comment{User logs in.}
      \LetST{$\mi{id}$, $\mi{secret}$}{$\an{\mi{id}, \mi{secret}} \equiv \mi{body}$\breakalgohook{2}}{\textbf{stop} \DefStop}
      \If{$\mi{headers} \not\equiv \an{\str{Origin}, \an{\mi{inDomain}, \https}}$}
        \Stop{\DefStop}
      \EndIf
      
      \Let{$\mi{ids}$}{$s.\str{users}[\mi{secret}]$}
      \If{$\mi{ids} \equiv \an{} \vee \mi{id} \equiv \an{} \vee \mi{id} \not\inPairing \mi{ids}$} \Comment{Check id/secret pair.}
        \Stop{\DefStop}
      \EndIf
      \Let{$\mi{sessionid}$, $s'$}{$\mathsf{TAKENONCE}(s')$}
      \Let{$s'.\str{sessions}[\mi{sessionid}]$}{$\mi{ids}$}\label{line:populate-sessions-pidp}
      \Let{$\mi{setCookie}$}{$\an{\cSetCookie, \an{\an{\str{sessionid}, \mi{sessionid}, \True, \True, \True}}}$}
      \Let{$m'$}{$\encs{\an{\cHttpResp, n, 200, \an{\mi{sts,setCookie}}, \True}}{k}$}
      \Stop{\StopWithMPrime}\label{line:send-auth-response-pidp}
    \Else \Comment{User wants a certificate.}
      \LetST{$\mi{id}$, $\mi{pubkey}$}{$\an{\mi{id}, \mi{pubkey}} \equiv \mi{body}$\breakalgohook{2}}{\textbf{stop} \DefStop}      
      \Let{$\mi{sessionid}$}{$\mi{headers}[\str{Cookie}][\str{sessionid}]$}
      \If{$\mi{id} \not\inPairing s'.\str{sessions}[\mi{sessionid}]$} \Comment{Check if user is logged in.}
        \Stop{\DefStop}
      \EndIf
      \Let{$\mi{uc}$}{$\sig{\an{\mi{id}, \mi{pubkey}}}{s.\str{signkey}}$} \label{line:usage-of-signkey-pidp}
      \Let{$m'$}{$\encs{\an{\cHttpResp, n, 200, \an{\mi{sts}}, \mi{uc}}}{k}$}
      \Stop{\StopWithMPrime}
    \EndIf
  \Else
    \If{$\mi{path} \equiv \str{/pif}$}
      
      \Let{$m'$}{$\mathsf{enc}_\mathsf{s}(\langle\cHttpResp, n, 200, \an{\mi{sts}}, \langle\str{script\_idp\_pif}$,\breakalgohook{2}$\an{\str{init},\an{},\an{},\an{},\bot,\bot,\bot}\rangle\rangle, k)$}
    \Else
      \Let{$m'$}{$\encs{\an{\cHttpResp, n, 200, \an{\mi{sts}}, \an{\str{script\_idp\_ad, \an{}}}}}{k}$}
    \EndIf
    \Stop{\StopWithMPrime}
  \EndIf
  \Stop{\DefStop}
\end{algorithmic} \setlength{\parindent}{1em}

\subsection{BrowserID Scripts} \label{app:browserid-scripts-pidp} As
already mentioned in Section~\ref{app:outlinebrowserIDmodel}, the set
$\scriptset$ of the web system
$\bidwebsystem_\text{primary}=(\bidsystem, \scriptset,
\mathsf{script}, E_0)$ consists of the scripts $\Rasp$,
$\mi{script\_rp\_index}$, $\mi{script\_lpo\_cif}$,
$\mi{script\_lpo\_ld}$, $\mi{script\_idp\_pif}$, and
$\mi{script\_idp\_ad}$ with their string representations being
$\str{att\_script}$, $\str{script\_rp\_index}$,
$\str{script\_lpo\_cif}$, $\str{script\_lpo\_ld}$,
$\str{script\_idp\_pif}$, and $\str{script\_idp\_ad}$ (defined by
$\mathsf{script}$). 

The script $\Rasp$ is the attacker script (see
Section~\ref{sec:websystem}). The formal model of the other scripts
follows the description in Appendix~\ref{app:browserid-lowlevel}. The
script $\mi{script\_rp\_index}$ defines the script of the RP index
page. In reality, this page has its own script(s) and includes a
script from LPO. In our model, we combine both scripts into
$\mi{script\_rp\_index}$.  In particular, this script is responsible
for creating the CIF and the LD \iframes/subwindows, whose contents are loaded from LPO.

In what follows, the scripts $\mi{script\_rp\_index}$,
$\mi{script\_lpo\_cif}$, and $\mi{script\_lpo\_ld}$ are
defined formally. First, we introduce some notation and
helper functions.

\subsubsection{Notations and Helper Functions.}
In the formal description of the scripts we use an abbreviation for
URLs at LPO. We write $\mathsf{URL}^\mathsf{LPO}_\mi{path}$ to
describe the following URL term: $\an{\tUrl, \https, \domLPO,
  \mi{path}, \an{}}$.  Also, we call $\mathsf{origin}_\LPO$ the origin
of LPO which describes the following origin term:
$\an{\domLPO,\https}$.

In order to simplify the description of the scripts, several helper functions are used.

\paragraph{CHOOSEINPUT.}  As explained in
Section~\ref{sec:web-browsers}, the state of a document
contains a term, say, $\mi{scriptinputs}$, which records the
input this document has obtained so far (via \xhrs and
\pms). If the script of the document is activated, it will
typically need to pick one input message from
$\mi{scriptinputs}$ and record which input it has already
processed. For this purpose, the function
$\mathsf{CHOOSEINPUT}(s',\mi{scriptinputs})$ is used, where
$s'$ denotes the scripts current state. It saves the
indexes of already handled messages in the scriptstate $s'$
and chooses a yet unhandled input message from
$\mi{scriptinputs}$. The index of this message is then
saved in the scriptstate (which is returned to the script).

\captionof{algorithm}{\label{alg:chooseinput} Choose an unhandled input message for a script}
\begin{algorithmic}[1]
  \Function{$\mathsf{CHOOSEINPUT}$}{$s',\mi{scriptinputs}$}
  \LetST{$\mi{iid}$}{$\mi{iid} \in
    \{1,\cdots,|\mi{scriptinputs}|\} \wedge \mi{iid} \not\inPairing
    s'.\str{handledInputs}$\breakalgohook{1}}{\Return$(\bot,s')$}
  \Let{$\mi{input}$}{$\proj{\mi{iid}}{\mi{scriptinputs}}$}
  \Let{$s'.\str{handledInputs}$}{$s'.\str{handledInputs} \plusPairing
    \mi{iid}$}

  \State \Return$(\mi{input}, s')$
  \EndFunction
\end{algorithmic} \setlength{\parindent}{1em}

\paragraph{PARENTWINDOW.} To determine the nonce
referencing the parent window in the browser, the function
$\mathsf{PARENTWINDOW}(\mi{tree}, \mi{docnonce})$ is
used. It takes the term $\mi{tree}$, which is the (partly
cleaned) tree of browser windows the script is able to see
and the document nonce $\mi{docnonce}$, which is the nonce
referencing the current document the script is running in,
as input. It outputs the nonce referencing the window which
directly contains in its subwindows the window of the
document referenced by $\mi{docnonce}$. If there is no such
window (which is the case if the script runs in a document
of a top-level window), $\mathsf{PARENTWINDOW}$ returns
$\bot$.

\paragraph{SUBWINDOWS.} This function takes a term
$\mi{tree}$ and a document nonce $\mi{docnonce}$ as input
just as the function above. If $\mi{docnonce}$ is not a
reference to a document contained in $\mi{tree}$, then
$\mathsf{SUBWINDOWS}(\mi{tree},\mi{docnonce})$ returns
$\an{}$. Otherwise, let $\an{\mi{docnonce}$, $\mi{origin}$,
  $\mi{script}$, $\mi{scriptstate}$, $\mi{scriptinput}$,
  $\mi{subwindows}$, $\mi{active}}$ denote the subterm of
$\mi{tree}$ corresponding to the document referred to by
$\mi{docnonce}$. Then,
$\mathsf{SUBWINDOWS}(\mi{tree},\mi{docnonce})$ returns
$\mi{subwindows}$.

\paragraph{AUXWINDOW.} This function takes a term
$\mi{tree}$ and a document nonce $\mi{docnonce}$ as input
as above. From all window terms in $\mi{tree}$ that have
the window containing the document identified by
$\mi{docnonce}$ as their opener, it selects one
non-deterministically and returns its nonce. If there is no such
window, it returns the nonce of the window containing
$\mi{docnonce}$.

\paragraph{OPENERWINDOW.} This function takes a
term $\mi{tree}$ and a document nonce $\mi{docnonce}$ as
input as above. It returns the window nonce of the opener
window of the window that contains the document identified
by $\mi{docnonce}$. Recall that the nonce identifying the
opener of each window is stored inside the window term. If
no document with nonce $\mi{docnonce}$ is found in the tree
$\mi{tree}$, $\notdef$ is returned.

\paragraph{GETWINDOW.} This function takes a term
$\mi{tree}$ and a document nonce $\mi{docnonce}$ as input
as above. It returns the nonce of the window containing $\mi{docnonce}$.

\paragraph{GETORIGIN.} The function
$\mathsf{GETORIGIN}(\mi{tree},\mi{docnonce})$ extracts the origin of a
document. It searches for the document with the identifier
$\mi{docnonce}$ in the (cleaned) tree $\mi{tree}$ of the
browser's windows and documents. It returns the origin $o$
of the document. If no document with nonce $\mi{docnonce}$
is found in the tree $\mi{tree}$, $\notdef$ is returned.

\subsubsection{Web storage under LPO's origin.}\label{app:webst-under-lpos}
The web storage under the origin of LPO used by the scripts
$\mi{script\_lpo\_cif}$ and $\mi{script\_lpo\_ld}$ (see below) is organized as
follows: 

The \ls is a dictionary.  There are two types of entries in this
dictionary: Under the key $\str{siteInfo}$, a dictionary is stored
which has origins as keys and IDs as values. An entry in this
dictionary indicates that the user is logged in at the referenced
origin with a certain ID.  The second type of entry has a nonce as a
key. The value is an email address (ID). This models the email address
a user entered in the LD before being navigated away to the AD. The
nonce is also stored in the sessionStorage (see below).

\begin{example} \label{ex:lpo-localstorage-pidp}
  \[
  \begin{array}{rll}
    \langle\hspace{-0.61ex}\vphantom 
        & \langle\str{siteInfo},\langle\hspace{-0.61ex}\vphantom
                                &\an{ \an{\str{domain_{RP1}},\str{S}},\mi{id_1} },\\
                                &&\an{ \an{\str{domain_{RP2}},\str{S}},\mi{id_1} },\\
                                &&\an{ \an{\str{domain_{RP3}},\str{S}},\mi{id_2} }
                              \rangle
           \rangle,\\
       & \an{n_1, \mi{id}_1}, \\
       & \an{n_2, \mi{id}_3} \rangle
  \end{array} 
  \]

  This example shows a \ls under the origin of $\LPO$, indicating that
  the user is logged in at $\str{domain_{RP1}}$ and
  $\str{domain_{RP2}}$ with $\mi{id_1}$ and at $\str{domain_{RP3}}$
  with $\mi{id_2}$ (using HTTPS). Further, the nonces $n_1$ and $n_2$
  each refer to an email address which the user entered in the LD.
\end{example}

The sessionStorage is also a dictionary. It may only contain one key,
$\str{idpnonce}$. Its value is a nonce (like $n_1$ or $n_2$ in the
example above) which references the latest email address entry in the
localStorage (see above).

\subsubsection{login.persona.org Communication Iframe Script (script\_lpo\_cif).}\label{app:scriptLPOcif-pidp}
As defined in Section~\ref{sec:websystem}, a script is a
relation that takes as input a term and a set of nonces it
may use. It outputs a new term. As specified in
Section~\ref{sec:web-browsers} (Triggering the Script of a
Document (\textbf{\hlExp{$m = \trigger$},
  \hlExp{$\mi{action} = 1$}})) and formally specified in
Algorithm~\ref{alg:runscript}, the input term is provided
by the browser. It contains the current internal state of
the script (which we call \emph{scriptstate} in what
follows) and additional information containing all browser
state information the script has access to, such as the
input the script has obtained so far via \xhrs and \pms,
information about windows, etc. The browser expects the
output term to have a specific form, as also specified in
Section~\ref{sec:web-browsers} and
Algorithm~\ref{alg:runscript}. The output term contains,
among other information, the new internal scriptstate.

As for $\mi{script\_lpo\_cif}$, this script models the
script run in the CIF, as sketched in
Appendix~\ref{app:browserid-lowlevel}.

We first describe the structure of the internal scriptstate
of the script $\mi{script\_lpo\_cif}$.

\begin{definition} \label{def:scriptstatelpocif-pidp} A \emph{scriptstate
  $s$ of $\mi{script\_lpo\_cif}$} is a term of the form $\langle q$,
  $\mi{requestOrigin}$, $\mi{loggedInUser}$, $\mi{pause}$,
  $\mi{context}$, $\mi{key}$, $\mi{uc}$, $\mi{handledInputs}$,
  $\mi{refXHRctx}$, $\mi{PIFindex} \rangle$ where $q \in \mathbb{S}$,
  $\mi{requestOrigin} \in \origins \cup \{\bot\}$, $\mi{loggedInUser}
  \in \IDs \cup \{\an{},\bot\}$, $\mi{pause} \in \{\True,\bot\}$,
  $\mi{context} \in \terms$, $\mi{key} \in \nonces \cup \{\bot\}$,
  $\mi{uc} \in \terms$, $\mi{handledInputs} \subsetPairing
  \mathbb{N}$, $\mi{refXHRctx} \in \nonces \cup \{\bot\}$,
  $\mi{PIFindex} \in \mathbb{N} \cup \{\bot\}$.  The \emph{initial
    scriptstate $\mi{initState_{cif}}$} of $\mi{script\_lpo\_cif}$ is
  the state
  $\an{\str{init},\bot,\bot,\bot,\bot,\bot,\bot,\an{},\bot,\bot}$.
\end{definition}

Before we provide the formal specification of the relation
that defines the behavior of $\mi{script\_lpo\_cif}$, we
present an informal description. The behavior mainly
depends on the state $q$ the script is in.

\begin{description}
\item[$q=\str{init}$] This is the initial state. Its only
  transition handles no input and outputs a \pm
  \texttt{cifready} to its parent window and transitions to
  $\str{default}$.

\item[$q=\str{default}$] This is the state to which
  $\str{script\_lpo\_cif}$ always returns to. This state handles all
  \pms the CIF expects to receive from its parent window. If the \pm
  received was sent from the parent window of the CIF, it behaves as
  follows, depending on the first element of the received postMessage:

  \begin{description}

  \item[\pm \texttt{loaded}] The script records the sender's origin of the
    received \pm as the remote origin in the scriptstate if the
    scriptstate did not contain any information about the remote
    origin yet. Also, an ID, which represents the assumption of the
    sender on who it believes to be logged in, is saved in the
    scriptstate. If the $\mi{pause}$ flag in the scriptstate is
    $\True$ it transitions to the state $\str{default}$. Otherwise, it
    is checked, if the current context in the scriptstate is
    $\bot$. If the check is true, the script transitions to the state
    $\str{fetchContext}$, or to the state $\str{checkAndEmit}$
    otherwise.

  \item[\pm \texttt{dlgRun}] The script sets the $\mi{pause}$ flag in the
    scriptstate to $\True$ and transitions to
    $\str{default}$.

  \item[\pm \texttt{dlgCmplt}] The script sets the $\mi{pause}$ flag in the
    scriptstate to $\bot$.  It then transitions to the state
    $\str{fetchContext}$.

  \item[\pm \texttt{loggedInUser}] This message has to contain an
    ID. This ID is saved in the scriptstate and then the
    script transitions to $\str{default}$.

  \item[\pm \texttt{logout}] The script removes the entry for the RP (recorded in
    the scriptstate) from the \ls and then transitions to the state
    $\str{sendLogout}$. If no remote origin is set in the script's
    state, it is now set to the sender's origin of the received \pm.
  \end{description}

\item[$q=\str{fetchContext}$] In this state, the script sends an \xhr to LPO with a
  $\mGet$ request to the path \texttt{/ctx} and then
  transitions to the state $\str{receiveContext}$.

\item[$q=\str{receiveContext}$] In this state, the script expects an \xhr response as
  input containing the session context. This context is
  saved as the current context in the scriptstate. The
  script transitions to $\str{checkAndEmit}$.

\item[$q=\str{checkAndEmit}$] This state lets the script create the provisioning
  iframe and transition to $\str{startPIF}$ iff (1) some email address
  is marked as logged in at RP in the \ls, (2) if an email address is
  recorded in the current scriptstate, this email address differs from
  the one recorded in the \ls, and (3) the user is marked as logged in
  in the current context. Otherwise, if the email address recorded in
  the current scriptstate is $\an{}$, the script transitions to
  $\str{default}$, else it transitions to $\str{sendLogout}$.

\item[$q=\str{startPIF}$] In this state, the script waits for a \pm
  from the PIF containing a \texttt{ping} message. If such a message
  is received and the sender's window and origin match the PIF, the
  script sends a \texttt{pong} message back to the PIF and transitions
  to the state $\str{runPIF}$.

\item[$q=\str{runPIF}$] This is the state in which
  $\str{script\_lpo\_cif}$ interacts with the PIF. This state handles
  all \pms the CIF expects to receive from the latest PIF (as recorded
  in $\str{PIFindex}$ in its state). If the \pm received was sent from
  the PIF's window and the PIF's origin, it behaves as follows,
  depending on the first element of the postMessage:

  \begin{description}
  \item[\pm \texttt{beginProvisioning}] The script responds with a \pm
    to the PIF containing the email address of the identity which is
    to authenticate to the relying party (as recorded in the CIF's
    state).
  \item[\pm \texttt{genKeyPair}] The script creates a fresh key pair (i.e. the
    CIF chooses a fresh nonce) and sends the public key contained in
    a \pm to the PIF.
  \item[\pm \texttt{registerCertificate}] The script stores the UC received in
    this \pm in the CIF's state and transitions to the state
    $\str{createCAPforRP}$.
  \item[\pm \texttt{raiseProvisioningFailure}] This message indicates that no one
    is logged in. This is recorded in the CIF's state accordingly. The
    script transitions to the state $\str{sendLogout}$ in which the
    CIF's parent window will be notified that no one is logged in.
  \end{description}

\item[$q=\str{createCAPforRP}$] In this state, the script creates an
  IA for the request origin (as recorded in the script's state),
  combines the IA with the UC to a CAP, and sends the CAP in a
  postMessage to its parent restricting the receiver to the request
  origin.

\item[$q=\str{sendLogout}$] In this state, the script sends a \texttt{logout} \pm to
  the parent document and then  transitions to
  the $\str{default}$ state.

\end{description}

We now specify the relation $\mi{script\_lpo\_cif}
\subseteq (\terms \times 2^{\nonces})\times \terms$ of the
CIF's scripting process formally. Just like in
Appendix~\ref{sec:descr-web-brows}, we describe this
relation by a non-deterministic algorithm. 

Just like all scripts, as explained in
Section~\ref{sec:web-browsers} (see also
Algorithm~\ref{alg:runscript} for the formal
specification), the input term this script obtains from the
browser contains the cleaned tree of the browser's windows
and documents $\mi{tree}$, the nonce of the current
document $\mi{docnonce}$, its own scriptstate
$\mi{scriptstate}$ (as defined in
Definition~\ref{def:scriptstatelpocif-pidp}), a sequence of all
inputs $\mi{scriptinput}$ (also containing already handled
inputs), a dictionary $\mi{cookies}$ of all accessible
cookies of the document's domain, the \ls
$\mi{localStorage}$ belonging to the document's origin, the
secrets $\mi{secret}$ of the document's origin, and a set
$\mi{nonces}$ of fresh nonces as input. The script returns
a new scriptstate $s'$, a new set of cookies
$\mi{cookies'}$, a new \ls $\mi{localStorage'}$, and a term
$\mi{command}$ denoting a command to the browser.

\captionof{algorithm}{\label{alg:scriptlpocif-pidp} Relation of $\mi{script\_lpo\_cif}$ }
\begin{algorithmic}[1]
\Statex[-1] \textbf{Input:} $\langle\mi{tree}$, $\mi{docnonce}$, $\mi{scriptstate}$, $\mi{scriptinputs}$, $\mi{cookies}$, $\mi{localStorage}$, $\mi{sessionStorage}$,\breakalgohook{-1}$\mi{ids}$, $\mi{secret}\rangle$, $\mi{nonces}$ 
\Let{$s'$}{$\mi{scriptstate}$}
\Let{$\mi{cookies}'$}{$\mi{cookies}$}
\Let{$\mi{localStorage}'$}{$\mi{localStorage}$}
\Let{$\mi{sessionStorage}'$}{$\mi{sessionStorage}$}

\Switch{$s'.\str{q}$}
 \Case{$\str{init}$}
  \Let{$\mi{command}$}{$\langle\tPostMessage$, $\textsf{PARENTWINDOW}(\mi{tree}, \mi{docnonce})$,\breakalgohook{2}$\an{\str{cifready},\an{}}$, $\bot\rangle$}
  \Let{$s'.\str{q}$}{$\str{default}$}
  \State \textbf{stop} $\an{s',\mi{cookies}',\mi{localStorage}',\mi{sessionStorage}',\mi{command}}$
 \EndCase

 \Case{$\str{default}$}
  \Let{$\mi{input},s'$}{\textsf{CHOOSEINPUT}($s',\mi{scriptinputs}$)}
  \If{$\proj{1}{\mi{input}} \equiv \tPostMessage$}
   \Let{$\mi{senderWindow}$}{$\proj{2}{\mi{input}}$}
   \Let{$\mi{senderOrigin}$}{$\proj{3}{\mi{input}}$}
   \Let{$m$}{$\proj{4}{\mi{input}}$}%

   \If{$\mi{senderWindow} \equiv \textsf{PARENTWINDOW}(\mi{tree}, \mi{docnonce})$}
   \Switch{$m$}
    \Case{$\an{\str{loaded},\mi{id}}$}
      \If{$s'.\str{requestOrigin} \equiv \bot$}
        \Let{$s'.\str{requestOrigin}$}{$\mi{senderOrigin}$} \label{line:set-parent-origin-pidp}
      \EndIf
      \Let{$s'.\str{loggedInUser}$}{$\mi{id}$}
      \If{$s'.\str{pause} \equiv \True$}
       \State \textbf{stop} $\an{s',\mi{cookies}',\mi{localStorage}',\mi{sessionStorage}',\an{}}$
      \ElsIf{$s'.\str{context} \equiv \bot$}
       \Let{$s'.\str{q}$}{$\str{fetchContext}$}
       \State \textbf{stop} $\an{s',\mi{cookies}',\mi{localStorage}',\mi{sessionStorage}',\an{}}$
      \Else
       \Let{$s'.\str{q}$}{$\str{checkAndEmit}$}
       \State \textbf{stop} $\an{s',\mi{cookies}',\mi{localStorage}',\mi{sessionStorage}',\an{}}$
      \EndIf
    \EndCase

    \Case{$\an{\str{dlgRun},\an{}}$}
      \Let{$s'.\str{pause}$}{$\True$}
      \State \textbf{stop} $\an{s',\mi{cookies}',\mi{localStorage}',\mi{sessionStorage}',\an{}}$ 
    \EndCase

    \Case{$\an{\str{dlgCmplt},\an{}}$}
      \Let{$s'.\str{pause}$}{$\bot$}
      \Let{$s'.\str{q}$}{$\str{fetchContext}$}
      \State \textbf{stop} $\an{s',\mi{cookies}',\mi{localStorage}',\mi{sessionStorage}',\an{}}$ 
    \EndCase

    \Case{$\an{\str{loggedInUser},\mi{id}}$}
      \Let{$s'.\str{loggedInUser}$}{$\mi{id}$}
      \State \textbf{stop} $\an{s',\mi{cookies}',\mi{localStorage}',\mi{sessionStorage}',\an{}}$ 
   \EndCase

    \Case{$\an{\str{logout},\an{}}$}
      \If{$s'.\str{requestOrigin} \equiv \bot$}
        \Let{$s'.\str{requestOrigin}$}{$\mi{senderOrigin}$}
      \EndIf
      \Let{$s'.\str{loggedInUser}$}{$\bot$}
      \State \textbf{remove} the element with key $s'.\str{requestOrigin}$\breakalgohook{6}from the dictionary $\mi{localStorage}'[\str{siteInfo}]$ \label{line:remove-siteinfo-cif-pidp}
      \Let{$s'.\str{q}$}{$\str{sendLogout}$}
    \EndCase
   \EndSwitch
   \EndIf
  \EndIf
 \EndCase

 \Case{$\str{fetchContext}$}
  \LetND{$s'.\str{refXHRctx}$}{$\mi{nonces}$}
  \Let{$\mi{command}$}{$\an{\tXMLHTTPRequest,\textsf{URL}^\LPO_\str{/ctx},\mGet,\an{},s'.\str{refXHRctx}}$} \label{line:ctx-over-https-pidp}
  \Let{$s'.\str{q}$}{$\str{receiveContext}$}
  \State \textbf{stop} $\an{s',\mi{cookies}',\mi{localStorage}',\mi{sessionStorage}',\mi{command}}$
 \EndCase

 \Case{$\str{receiveContext}$}
  \Let{$\mi{input},s'$}{\textsf{CHOOSEINPUT}($s',\mi{scriptinputs}$)}
  \If{$(\proj{1}{\mi{input}} \equiv \tXMLHTTPRequest) \wedge (\proj{3}{\mi{input}} \equiv s'.\str{refXHRctx})$}
   \Let{$s'.\str{context}$}{$\proj{2}{\mi{input}}$} \label{line:store-context-pidp}
   \Let{$s'.\str{q}$}{$\str{checkAndEmit}$}
   \State \textbf{stop} $\an{s',\mi{cookies}',\mi{localStorage}',\mi{sessionStorage}',\an{}}$
  \EndIf

 \EndCase

 \Case{$\str{checkAndEmit}$}
  \Let{$s'.\str{email}$}{$\mi{localStorage}'[\str{siteInfo}][s'.\str{requestOrigin}]$} \label{line:set-semail-cif-pidp}
  \If{$(s'.\str{email} \not\equiv \an{}) $\breakalgohook{2}$\wedge (s'.\str{loggedInUser} \notin \{\an{},\bot\} \Rightarrow (s'.\str{loggedInUser} \not\equiv s'.\str{email})) $\breakalgohook{2}$\wedge (\proj{1}{s'.\str{context}} \equiv \True)$}
   \Let{$s'.\str{q}$}{$\str{startPIF}$}
   \Let{$\mi{url}$}{$\an{\cUrl, \https, \pi_2(s'.\str{email}), \str{/pif}}$}
   \Let{$s'.\str{PIFindex}$}{$|\mi{subwindows}|+1$} \breakalgo{1}\Comment{Index of the next subwindow to be created.} 
   \Let{$\mi{command}$}{$\an{\tIframe,\mi{url},\wSelf}$}
   \State \textbf{stop} $\an{s',\mi{cookies}',\mi{localStorage}',\mi{sessionStorage}',\mi{command}}$
  \ElsIf{$s'.\str{loggedInUser} \equiv \an{}$}
   \Let{$s'.\str{q}$}{$\str{default}$}
   \State \textbf{stop} $\an{s',\mi{cookies}',\mi{localStorage}',\mi{sessionStorage}',\an{}}$
  \Else
   \Let{$s'.\str{q}$}{$\str{sendLogout}$}
   \State \textbf{stop} $\an{s',\mi{cookies}',\mi{localStorage}',\mi{sessionStorage}',\an{}}$
  \EndIf
 \EndCase

 \Case{$\str{startPIF}$}
  \Let{$\mi{idpOrigin}$}{$\an{\pi_2(s'.\str{email}), \https}$}
  \Let{$\mi{input},s'$}{\textsf{CHOOSEINPUT}($s',\mi{scriptinputs}$)}
  \Let{$\mi{pifNonce}$}{$\proj{s'.\str{PIFindex}}{\mi{subwindows}}.\str{nonce}$}
  \If{$\proj{1}{\mi{input}} \equiv \tPostMessage$}
   \Let{$\mi{senderWindow}$}{$\proj{2}{\mi{input}}$}
   \Let{$\mi{senderOrigin}$}{$\proj{3}{\mi{input}}$}
   \Let{$m$}{$\proj{4}{\mi{input}}$}

   \If{$m \equiv \str{ping} \wedge \mi{senderWindow} \equiv \mi{pifNonce} $\breakalgohook{3}$\wedge \mi{senderOrigin} \equiv \mi{idpOrigin}$}
     \Let{$\mi{command}$}{$\an{\tPostMessage,\mi{pifNonce},\str{pong},\mi{idpOrigin}}$}
     \Let{$s'.\str{q}$}{$\str{runPIF}$}
     \State \textbf{stop} $\an{s',\mi{cookies}',\mi{localStorage}',\mi{sessionStorage}',\mi{command}}$
   \EndIf
  \EndIf
 \EndCase

 \Case{$\str{runPIF}$}
  \Let{$\mi{idpOrigin}$}{$\an{\pi_2(s'.\str{email}), \https}$}
  \Let{$\mi{input},s'$}{\textsf{CHOOSEINPUT}($s',\mi{scriptinputs}$)}
  \Let{$\mi{pifNonce}$}{$\proj{s'.\str{PIFindex}}{\mi{subwindows}}.\str{nonce}$}
  \If{$\proj{1}{\mi{input}} \equiv \tPostMessage$}
   \Let{$\mi{senderWindow}$}{$\proj{2}{\mi{input}}$}
   \Let{$\mi{senderOrigin}$}{$\proj{3}{\mi{input}}$}
   \Let{$m$}{$\proj{4}{\mi{input}}$}

   \If{$\mi{senderWindow} \equiv \mi{pifNonce} \wedge \mi{senderOrigin} \equiv \mi{idpOrigin}$}
     \Switch{$\proj{1}{m}$}
       \Case{$\str{beginProvisioning}$}
         \Let{$\mi{jschannel\_nonce}$}{$\proj{2}{m}$}
         \Let{$\mi{command}$}{$\langle\tPostMessage$, $\mi{pifNonce}$,\breakalgohook{6}$\an{jschannel\_nonce, s'.\str{email}}$, $\mi{idpOrigin}\rangle$}
         \State \textbf{stop} $\an{s',\mi{cookies}',\mi{localStorage}',\mi{sessionStorage}',\mi{command}}$ \label{line:send-pm-with-nonce-cif-email-pidp}
       \EndCase
       \Case{$\str{genKeyPair}$}
         \Let{$\mi{jschannel\_nonce}$}{$\proj{2}{m}$}
         \LetND{$s'.\str{key}$}{$\mi{nonces}$}
         \Let{$\mi{command}$}{$\langle\tPostMessage$, $\mi{pifNonce}$,\breakalgohook{6} $\an{jschannel\_nonce, \pub(s'.\str{key})}$, $\mi{idpOrigin}\rangle$}
         \State \textbf{stop} $\an{s',\mi{cookies}',\mi{localStorage}',\mi{sessionStorage}',\mi{command}}$ \label{line:send-pm-with-nonce-cif-key-pidp}
       \EndCase
       \Case{$\str{registerCertificate}$}
         \If{$\proj{1}{\unsig{\proj{2}{m}}} \equiv s'.\str{email} \wedge s'.\str{email} \not\equiv \an{}$} \breakalgo{1}\Comment{This check is our fix against identity injection.}
           \Let{$s'.\str{uc}$}{$\proj{2}{m}$} \label{line:set-uc-cif-pidp}
           \Let{$s'.\str{q}$}{$\str{createCAPforRP}$}
         \EndIf
         \State \textbf{stop} $\an{s',\mi{cookies}',\mi{localStorage}',\mi{sessionStorage}',\mi{command}}$
       \EndCase
       \Case{$\str{raiseProvisioningFailure}$}
         \Let{$s'.\str{loggedInUser}$}{$\bot$}
         \Let{$s'.\str{q}$}{$\str{sendLogout}$}
         \State \textbf{stop} $\an{s',\mi{cookies}',\mi{localStorage}',\mi{sessionStorage}',\mi{command}}$
       \EndCase
     \EndSwitch
   \EndIf
  \EndIf
 \EndCase

 \Case{$\str{createCAPforRP}$}
   \Let{$\mi{ia}$}{$\sig{s'.\str{requestOrigin}}{s'.\str{key}}$}
   \Let{$\mi{cap}$}{$\an{s'.\str{uc},\mi{ia}}$} \label{line:use-uc-cif-pidp}
   \Let{$\mi{command}$}{$\langle\tPostMessage$, $\textsf{PARENTWINDOW}(\mi{tree}, \mi{docnonce})$, \breakalgohook{2}$\an{\str{response},\mi{cap}}$, $s'.\str{requestOrigin}\rangle$} %
   \Let{$s'.\str{q}$}{$\str{null}$}
   \State \textbf{stop} $\an{s',\mi{cookies}',\mi{localStorage}',\mi{sessionStorage}',\mi{command}}$ %
 \EndCase

 \Case{$\str{sendLogout}$}
  \Let{$\mi{command}$}{$\langle\tPostMessage$, $\textsf{PARENTWINDOW}(\mi{tree}, \mi{docnonce})$, \breakalgohook{2}$\an{\str{logout},\an{}}$, $\bot\rangle$}
  \Let{$s'.\str{q}$}{$\str{default}$}
  \State \textbf{stop} $\an{s',\mi{cookies}',\mi{localStorage}',\mi{sessionStorage}',\mi{command}}$
 \EndCase

\EndSwitch

\State \textbf{stop} $\an{\mi{scriptstate},\mi{cookies},\mi{localStorage},\mi{sessionStorage},\an{}}$

\end{algorithmic} \setlength{\parindent}{1em}

\subsubsection{login.persona.org Login Dialog Script (script\_lpo\_ld).}\label{app:scriptLPOld-pidp}
This script models the LD contents. Its formal specification, presented next,
follows the one presented above for $\mi{script\_lpo\_cif}$.

\begin{definition}\label{def:scriptstatelpold-pidp}
  A \emph{scriptstate $s$ of $\mi{script\_lpo\_ld}$} is a term of the
  form $\langle q$, $\mi{requestOrigin}$, $\mi{context}$,
  $\mi{email}$, $\mi{key}$, $\mi{uc}$, $\mi{handledInputs}$,
  $\mi{refXHRctx}$, $\mi{refXHRLPOauth}$, $\mi{PIFindex} \rangle$ with
  $q \in \mathbb{S}$, $\mi{requestOrigin} \in \origins \cup \{\bot\}$,
  $\mi{context} \in \terms$, $\mi{email} \in \IDs \cup \{\bot\}$,
  $\mi{key} \in \nonces \cup \{\bot\}$, $\mi{uc} \in \terms$,
  $\mi{handledInputs} \subsetPairing \mathbb{N}$,
  $\mi{refXHRctx},\mi{refXHRLPOauth} \in \nonces \cup \{\bot\}$,
  $\mi{PIFindex} \in \mathbb{N} \cup \{\bot\}$.
  The \emph{initial scriptstate $\mi{initState_{ld}}$} is the state
  $\an{\str{init},\bot,\bot,\bot,\bot,\bot,\an{},\bot,\bot,\bot}$.
\end{definition}

Before we provide the formal specification of the relation
that defines the behavior of $\mi{script\_lpo\_ld}$, we
present an informal description. The behavior mainly
depends on the state $q$ the script is in.

\begin{description}
\item[$q\equiv\str{init}$] This is the initial state. Its only
  transition takes no input and outputs a \pm \texttt{ldready} to its
  parent window and transitions to $\str{start}$.

\item[$q\equiv\str{start}$] In this state, the script expects a
  \texttt{request} \pm. The sender's origin of this \pm is recorded as
  the requesting origin in the scriptstate. An \xhr is sent to $\LPO$
  with a $\mGet$ request to the path \texttt{/ctx} and then the script
  transitions to the state $\str{receiveContext}$.

\item[$q\equiv\str{receiveContext}$] In this state, the script expects
  an \xhr response as input containing the session context. This
  context is saved as the current context in the scriptstate. The
  script checks if an \texttt{idpNonce} is recorded in the
  sessionStorage. The presence of this nonce indicates that there was
  a run of $\str{script\_lpo\_ld}$ in the same window
  previously. Indexed by this nonce, there can be an email address
  (identity) recorded in the localStorage which is then copied to the
  script's state. Otherwise an email address is non-deterministically
  choosen (and copied to the script's state) out of the email
  addresses owned by the browser.

  The script now always issues the command to create an iframe, the
  PIF. The URL for the PIF is determined by the domain of the email
  address now recorded in the state. The script then transitions to
  the state $\str{startPIF}$.

\item[$q=\str{startPIF}$] In this state, the script waits for a \pm
  from the PIF containing a \texttt{ping} message. If such a message
  is received and the sender's window and origin match the PIF, the
  script sends a \texttt{pong} message back to the PIF and transitions
  to the state $\str{runPIF}$.

\item[$q=\str{runPIF}$] This is the state in which
  $\str{script\_lpo\_ld}$ interacts with the PIF. This state handles
  all \pms the LD expects to receive from the latest PIF (as recorded
  in $\str{PIFindex}$ in its state). If the \pm received was sent from
  the PIF's window and the PIF's origin, it behaves as follows,
  depending on the first element of the received postMessage:

  \begin{description}
  \item[\pm \texttt{beginProvisioning}] The script responds with a \pm to the PIF containing
    the email address of the identity which is to authenticate to the
    relying party (as recorded in the LD's state).
  \item[\pm \texttt{genKeyPair}] The script creates a fresh key pair (i.e. the
    LD chooses a fresh nonce) and sends the public key contained in
    an \pm to the PIF.
  \item[\pm \texttt{registerCertificate}] The script stores the UC received in
    this \pm in the LD's state. If the context contained in the
    script's state indicates that the browser is authenticated to LPO,
    the script transitions to the state
    $\str{createCAPforRP}$. Otherwise, the script transitions to the
    state $\str{createCAPforLPO}$.
  \item[\pm \texttt{raiseProvisioningFailure}] This message indicates
    that no one is logged in. The script now chooses a fresh nonce,
    the so-called \emph{idpNonce}, which is stored in the
    sessionStorage. In the localStorage, this nonce is used as a key
    under which the email address is stored, the LD is currently
    trying to get an UC for. The script navigates the window it is
    running to the authentication path at the identity provider
    responsible for the email address.
  \end{description}

\item[$q=\str{createCAPforLD}$] In this state, the script creates an IA for LPO, combines it
  with the UC (stored in the script's state) to a CAP and sends the
  CAP to LPO in an \xhr. The nonce identifying the \xhr is stored as
  $\str{refXHRLPOauth}$ in the script's state.

\item[$q=\str{receiveLPOauthresponse}$] In this state, the script expects the response to the
  \xhr identified by the nonce $\str{refXHRLPOauth}$. If the response
  indicates a successful authentication at LPO, the context recorded
  in the script's state is changed accordingly and the script
  transitions to the state $\str{createCAPforRP}$.

\item[$q=\str{createCAPforRP}$] In this state, the script creates an IA for the request origin
  (as recorded in the script's state), combines the IA with the UC to
  a CAP, and sends the CAP in a postMessage to its parent restricting
  the receiver to the request origin. The script records in the
  localStorage that the email address it is currently using is logged
  in at the request origin. The script then transitions to the state
  $\str{null}$.

\item[$q\equiv\str{null}$] In this state, the script does nothing.

\end{description}

We now formally specify the relation $\mi{script\_lpo\_ld}
\subseteq (\terms \times 2^{\nonces})\times \terms$ of the
LD's scripting process. Just like in
Appendix~\ref{sec:descr-web-brows}, we describe this
relation by a non-deterministic algorithm. Like all
scripts, the input term given to this script is determined
by the browser and the browser expects a term of a specific
form (see Algorithm~\ref{alg:runscript})

\captionof{algorithm}{\label{alg:scriptlpold-pidp} Relation of $\mi{script\_lpo\_ld}$ }
\begin{algorithmic}[1]
\Statex[-1] \textbf{Input:} $\langle\mi{tree}$, $\mi{docnonce}$, $\mi{scriptstate}$, $\mi{scriptinputs}$, $\mi{cookies}$, $\mi{localStorage}$, $\mi{sessionStorage}$,\breakalgohook{-1}$\mi{ids}$, $\mi{secret}\rangle$, $\mi{nonces}$ 
\Let{$s'$}{$\mi{scriptstate}$}
\Let{$\mi{cookies}'$}{$\mi{cookies}$}
\Let{$\mi{localStorage}'$}{$\mi{localStorage}$}
\Let{$\mi{sessionStorage}'$}{$\mi{sessionStorage}$}

\Switch{$s'.\str{q}$}
 \Case{$\str{init}$}
  \Let{$\mi{command}$}{$\langle\tPostMessage$, $\textsf{OPENERWINDOW}(\mi{tree}, \mi{docnonce})$, \breakalgohook{2}
$\an{\str{ldready},\an{}}$, $\bot\rangle$}
  \Let{$s'.\str{q}$}{$\str{start}$}
  \State \textbf{stop} $\an{s',\mi{cookies}',\mi{localStorage}',\mi{sessionStorage}',\mi{command}}$
 \EndCase

 \Case{$\str{start}$}
  \Let{$\mi{input},s'$}{\textsf{CHOOSEINPUT}($s',\mi{scriptinputs}$)}
  \If{$\proj{1}{\mi{input}} \equiv \tPostMessage$}
   \Let{$\mi{senderWindow}$}{$\proj{2}{\mi{input}}$}
   \Let{$\mi{senderOrigin}$}{$\proj{3}{\mi{input}}$}
   \Let{$m$}{$\proj{4}{\mi{input}}$}

   \If{$m \equiv \an{\str{request},\an{}}$}
    \Let{$s'.\str{requestOrigin}$}{$\mi{senderOrigin}$}
    \LetND{$s'.\str{refXHRctx}$}{$\mi{nonces}$}
    \Let{$\mi{command}$}{$\an{\tXMLHTTPRequest,\textsf{URL}^\LPO_\str{/ctx},\mGet,\an{},s'.\str{refXHRctx}}$} \label{line:ctx-over-https-2-pidp}
    \Let{$s'.\str{q}$}{$\str{receiveContext}$}
    \State \textbf{stop} $\an{s',\mi{cookies}',\mi{localStorage}',\mi{sessionStorage}',\mi{command}}$
   \EndIf
  \EndIf
 \EndCase

 \Case{$\str{receiveContext}$}
  \Let{$\mi{input},s'$}{\textsf{CHOOSEINPUT}($s',\mi{scriptinputs}$)}
  \If{$(\proj{1}{\mi{input}} \equiv \tXMLHTTPRequest) \wedge (\proj{3}{\mi{input}} \equiv s'.\str{refXHRctx})$}
   \Let{$s'.\str{context}$}{$\proj{2}{\mi{input}}$} \label{line:store-context-2-pidp}
   \Let{$s'.\str{q}$}{$\str{startPIF}$}
   \Let{$\mi{idpnonce}$}{$\mi{sessionStorage}[\str{idpnonce}]$} \label{line:get-idpnonce-ld-pidp}
   \If{$\mi{idpnonce} \equiv \an{} \vee \mi{localStorage}[\mi{idpnonce}] \equiv \an{}$} \label{line:check-idpnonce-ld-pidp}
     \LetND{$s'.\str{email}$}{$\mi{ids}$} \label{line:set-semail-ld-pidp-1}
   \Else
     \Let{$s'.\str{email}$}{$\mi{localStorage}[\mi{idpnonce}]$} \label{line:set-semail-ld-pidp-2}
     \Let{$\mi{sessionStorage}[\str{idpnonce}]$}{$\an{}$}
   \EndIf
   \Let{$\mi{url}$}{$\an{\cUrl, \https, \pi_2(s'.\str{email}), \str{/pif}}$}
   \Let{$s'.\str{PIFindex}$}{$|\mi{subwindows}|+1$} \breakalgo{1}\Comment{Index of the next subwindow to be created.}
   \Let{$\mi{command}$}{$\an{\tIframe,\mi{url},\wSelf}$}
  \State \textbf{stop} $\an{s',\mi{cookies}',\mi{localStorage}',\mi{sessionStorage}',\mi{command}}$
  \EndIf
 \EndCase

 \Case{$\str{startPIF}$}
  \Let{$\mi{idpOrigin}$}{$\an{\pi_2(s'.\str{email}), \https}$}
  \Let{$\mi{input},s'$}{\textsf{CHOOSEINPUT}($s',\mi{scriptinputs}$)}
  \Let{$\mi{pifNonce}$}{$\proj{s'.\str{PIFindex}}{\mi{subwindows}}.\str{nonce}$}
  \If{$\proj{1}{\mi{input}} \equiv \tPostMessage$}
   \Let{$\mi{senderWindow}$}{$\proj{2}{\mi{input}}$}
   \Let{$\mi{senderOrigin}$}{$\proj{3}{\mi{input}}$}
   \Let{$m$}{$\proj{4}{\mi{input}}$}

   \If{$m \equiv \str{ping} \wedge \mi{senderWindow} \equiv \mi{pifNonce} $\breakalgohook{3}$\wedge \mi{senderOrigin} \equiv \mi{idpOrigin}$}
     \Let{$\mi{command}$}{$\an{\tPostMessage,\mi{pifNonce},\str{pong},\mi{idpOrigin}}$}
     \Let{$s'.\str{q}$}{$\str{runPIF}$}
     \State \textbf{stop} $\an{s',\mi{cookies}',\mi{localStorage}',\mi{sessionStorage}',\mi{command}}$
   \EndIf
  \EndIf
 \EndCase

 \Case{$\str{runPIF}$}
  \Let{$\mi{idpOrigin}$}{$\an{\pi_2(s'.\str{email}), \https}$}
  \Let{$\mi{input},s'$}{\textsf{CHOOSEINPUT}($s',\mi{scriptinputs}$)}
  \Let{$\mi{pifNonce}$}{$\proj{s'.\str{PIFindex}}{\mi{subwindows}}.\str{nonce}$}
  \If{$\proj{1}{\mi{input}} \equiv \tPostMessage$}
   \Let{$\mi{senderWindow}$}{$\proj{2}{\mi{input}}$}
   \Let{$\mi{senderOrigin}$}{$\proj{3}{\mi{input}}$}
   \Let{$m$}{$\proj{4}{\mi{input}}$}

   \If{$\mi{senderWindow} \equiv \mi{pifNonce} \wedge \mi{senderOrigin} \equiv \mi{idpOrigin}$}
     \Switch{$\proj{1}{m}$}
       \Case{$\str{beginProvisioning}$}
         \Let{$\mi{jschannel\_nonce}$}{$\proj{2}{m}$}
         \Let{$\mi{command}$}{$\langle\tPostMessage$, $\mi{pifNonce}$,\breakalgohook{6}$\an{jschannel\_nonce, s'.\str{email}}$, $\mi{idpOrigin}\rangle$}
         \State \textbf{stop} $\an{s',\mi{cookies}',\mi{localStorage}',\mi{sessionStorage}',\mi{command}}$
       \EndCase
       \Case{$\str{genKeyPair}$}
         \Let{$\mi{jschannel\_nonce}$}{$\proj{2}{m}$}
         \LetND{$s'.\str{key}$}{$\mi{nonces}$} 
         \Let{$\mi{command}$}{$\langle\tPostMessage$, $\mi{pifNonce}$,\breakalgohook{6}$\an{jschannel\_nonce, \pub(s'.\str{key})}$, $\mi{idpOrigin}\rangle$}
         \State \textbf{stop} $\an{s',\mi{cookies}',\mi{localStorage}',\mi{sessionStorage}',\mi{command}}$
       \EndCase
       \Case{$\str{registerCertificate}$}
         \If{$\proj{1}{\unsig{\proj{2}{m}}} \equiv s'.\str{email} \wedge s'.\str{email} \not\equiv \an{}$}\breakalgo{1}\Comment{This check is our fix against identity injection.}
           \Let{$s'.\str{uc}$}{$\proj{2}{m}$} \label{line:set-uc-ld-pidp}
           \Let{$\mi{loggedIn}$}{$\proj{1}{s'.\str{context}}$}
           \If{$\mi{loggedIn} \equiv \True$}
             \Let{$s'.\str{q}$}{$\str{createCAPforRP}$}
           \EndIf
           \Let{$s'.\str{q}$}{$\str{createCAPforLPO}$}
         \EndIf
         \State \textbf{stop} $\an{s',\mi{cookies}',\mi{localStorage}',\mi{sessionStorage}',\mi{command}}$
         
       \EndCase
       \Case{$\str{raiseProvisioningFailure}$}
         \LetND{$\mi{idpnonce}$}{$\mi{nonces}$} \label{line:chose-idpnonce-ld-pidp}
         \Let{$\mi{localStorage}'[\mi{idpnonoce}]$}{$s'.\str{email}$} \label{write-localstorage-idpnonce-ld-pidp}
         \Let{$\mi{sessionStorage}'[\str{idpnonce}]$}{$\mi{idpnonce}$} \label{line:write-sessionstorage-idpnonce-ld-pidp}
         \Let{$\mi{command}$}{$\an{\tHref,\an{\cUrl, \https, \proj{2}{s'.\str{email}}, \an{}} ,\wSelf}$} %
         \State \textbf{stop} $\an{s',\mi{cookies}',\mi{localStorage}',\mi{sessionStorage}',\mi{command}}$
       \EndCase
     \EndSwitch
   \EndIf
  \EndIf
 \EndCase

 \Case{$\str{createCAPforLPO}$}
   \Let{$\mi{ia}$}{$\sig{\an{\mapDomain(\fAP{LPO}), \https}}{s'.\str{key}}$}
   \Let{$\mi{cap}$}{$\an{s'.\str{uc},\mi{ia}}$}
   \Let{$\mi{body}$}{$\an{\mi{cap}, \proj{2}{s'.\str{context}}}$}
   \LetND{$s'.\str{refXHRLPOauth}$}{$\mi{nonces}$}
  \Let{$\mi{command}$}{$\an{\tXMLHTTPRequest,\textsf{URL}^\LPO_\str{/auth},\mPost,\mi{body}, s'.\str{refXHRLPOauth}}$}
  \Let{$s'.\str{q}$}{$\str{receiveLPOauthresponse}$}
  \State \textbf{stop} $\an{s',\mi{cookies}',\mi{localStorage}',\mi{sessionStorage}',\mi{command}}$
 \EndCase

 \Case{$\str{receiveLPOauthresponse}$}
  \Let{$\mi{input},s'$}{\textsf{CHOOSEINPUT}($s',\mi{scriptinputs}$)}
  \If{$(\proj{1}{\mi{input}} \equiv \tXMLHTTPRequest) \wedge (\proj{3}{\mi{input}} \equiv s'.\str{refXHRLPOauth})$ \breakalgohook{2}$ \wedge \proj{2}{\mi{input}} \equiv \True$}
    \Let{$\proj{1}{s'.\str{context}}$}{$\True$}
    \Let{$s'.\str{q}$}{$\str{createCAPforRP}$}    
    \State \textbf{stop} $\an{s',\mi{cookies}',\mi{localStorage}',\mi{sessionStorage}',\mi{command}}$
  \EndIf
 \EndCase

 \Case{$\str{createCAPforRP}$}
   \Let{$\mi{ia}$}{$\sig{s'.\str{requestOrigin}}{s'.\str{key}}$}
   \Let{$\mi{cap}$}{$\an{s'.\str{uc},\mi{ia}}$}  \label{line:use-uc-ld-pidp}
   \Let{$\mi{command}$}{$\langle\tPostMessage$, $\textsf{OPENERWINDOW}(\mi{tree}, \mi{docnonce})$,\breakalgohook{2}$\an{\str{response},\mi{cap}}$, $s'.\str{requestOrigin}\rangle$} %
   \Let{$s'.\str{q}$}{$\str{null}$}
   \Let{$\mi{localStorage}'[\str{siteInfo}][s'.\str{requestOrigin}]$}{$s'.\str{email}$} \label{write-localstorage-siteinfo-ld-pidp}
   \State \textbf{stop} $\an{s',\mi{cookies}',\mi{localStorage}',\mi{sessionStorage}',\mi{command}}$ %
 \EndCase

\EndSwitch

\State \textbf{stop} $\an{\mi{scriptstate},\mi{cookies},\mi{localStorage},\mi{sessionStorage},\an{}}$
\end{algorithmic} \setlength{\parindent}{1em}

\subsubsection{Relying Party Web Page Script (script\_rp\_index).}\label{app:scriptrpindex-pidp}
This script models the default web page at a RP. The user usually
triggers the login process on this page. Its formal specification,
presented next, follows the one presented for the other scripts above.

\begin{definition}\label{def:scriptstaterpindex-pidp}
  A \emph{scriptstate $s$ of $\mi{script\_rp\_index}$} is a term
  of the form $\langle q$, $\mi{CIFindex}$, $\mi{LDindex}$,
  $\mi{dialogRunning}$, $\mi{cap}$, $\mi{handledInputs}$,
  $\mi{refXHRcap} \rangle$ with $q \in \mathbb{S}$,
  $\mi{CIFindex} \in\mathbb{N} \cup \{\bot\}$,
  $\mi{dialogRunning} \in \{\True,\bot\}$, $\mi{cap} \in
  \terms$, $\mi{handledInputs} \subsetPairing \mathbb{N}$,
  $\mi{refXHRcap} \in \nonces \cup \{\bot\}$. 
  We call $s$ the \emph{initial scriptstate of
  $\mi{script\_rp\_index}$} iff $s \equiv
  \an{\str{init},\bot,\bot,\bot,\an{},\an{},\bot}$.
\end{definition}

Before we provide the formal specification of the relation
that defines the behavior of $\mi{script\_rp\_index}$, we
present an informal description. The behavior mainly
depends on the state $q$ the script is in.

\begin{description}

\item[$q\equiv\str{init}$] This is the initial state. The script
  creates the CIF \iframe and then transitions to
  $\str{receiveCIFReady}$.

\item[$q\equiv\str{receiveCIFReady}$] In this state, the script
  expects a \texttt{cifready} \pm from the CIF \iframe with the sender origin of
  $\LPO$. The script chooses some ID, $\an{}$, or $\bot$ and sends this in a
  \texttt{loaded} \pm to the CIF \iframe with receiver's origin set to
  the origin of $\LPO$.\footnote{From the point of view of the real
    scripts running at RP either some ID is considered to be logged in
    (e.g. from some former ``session''), or no one is considered
    to be logged in ($\an{}$), or the script $\mi{script\_rp\_index}$ does
    not know if it should consider anyone to be logged in
    ($\bot$). This is overapproximated here by allowing
    $\mi{script\_rp\_index}$ to choose non-deterministically between
    these cases.}  It then transitions to the state $\str{default}$.

\item[$q\equiv\str{default}$] In this state, the script chooses
  non-deterministically between (1) opening the LD subwindow and then
  transitioning to the same state or (2) handling one of the following
  \pms (identified by their first element):

  \begin{description}

  \item[\pm \texttt{login}] This message has to be sent from the CIF
    with origin of $\LPO$. Handling this \pm stores the CAP (contained
    in the \pm) in the scriptstate and then transitions to the
    $\str{sendCAP}$ state.
  \item[\pm \texttt{logout}] This message has to be sent from the CIF
    with origin of $\LPO$. Handling this \pm has no effect and results
    in the same state.
  \item[\pm \texttt{ldready}] This message can only be handled after
    the LD has been opened and before a \texttt{response} \pm has been
    received. The \texttt{ldready} \pm has to be sent from the origin
    of $\LPO$. The script sends a \texttt{request} \pm to the LD and
    stays in the $\str{default}$ state.
  \item[\pm \texttt{response}] This message can only be handled after
    the LD has been opened and before another \texttt{response} \pm
    has been received. The \texttt{ldready} \pm has to be sent from
    the origin of $\LPO$. Handling this \pm stores the CAP (contained
    in the \pm) in the scriptstate, closes the LD, and then
    transitions to the $\str{dlgClosed}$ state.
  \end{description}

\item[$q\equiv\str{dlgClosed}$] In this state, the script sends a
  \texttt{loggedInUser} \pm to the CIF and transitions to
  the $\str{loggedInUser}$ state.
\item[$q\equiv\str{loggedInUser}$] In this state, the script sends a \texttt{dlgCmplt} \pm
  to the CIF and transitions to the $\str{sendCAP}$ state.
\item[$q\equiv\str{sendCAP}$] In this state, the script sends the CAP to RP as a $\mPost$
  \xhr and then transitions to the
  $\str{receiveServiceToken}$ state.
\item[$q\equiv\str{receiveServiceToken}$] In this state, the script receives $\an{n,i}$
  from RP, but does not do anything with it. The script then
  transitions to the $\str{default}$ state.

\end{description}

We now formally specify the relation $\mi{script\_rp\_index}
\subseteq (\terms \times 2^{\nonces})\times \terms$ of the
RP-Doc's scripting process. Just like in
Appendix~\ref{sec:descr-web-brows}, we describe this
relation by a non-deterministic algorithm. Like all
scripts, the input term given to this script is determined
by the browser and the browser expects a term of a specific
form (see Algorithm~\ref{alg:runscript}). Following
Algorithm~\ref{alg:scriptrpindex-pidp}, we provide some more
explanation. 

\captionof{algorithm}{\label{alg:scriptrpindex-pidp} Relation of $\mi{script\_rp\_index}$}
\begin{algorithmic}[1]
\Statex[-1] \textbf{Input:} $\langle\mi{tree}$, $\mi{docnonce}$, $\mi{scriptstate}$, $\mi{scriptinputs}$, $\mi{cookies}$, $\mi{localStorage}$, $\mi{sessionStorage}$,\breakalgohook{-1}$\mi{ids}$, $\mi{secret}\rangle$, $\mi{nonces}$ 
\Let{$s'$}{$\mi{scriptstate}$}
\Let{$\mi{cookies}'$}{$\mi{cookies}$}
\Let{$\mi{localStorage}'$}{$\mi{localStorage}$}
\Let{$\mi{sessionStorage}'$}{$\mi{sessionStorage}$}

\Switch{$s'.\str{q}$}
 \Case{$\str{init}$}
  \Let{$\mi{command}$}{$\an{\tIframe,\textsf{URL}^\LPO_\str{/cif},\mathsf{GETWINDOW}(\mi{tree},\mi{docnonce})}$}\label{line:cifindex-begin-pidp}
  \Let{$s'.\str{q}$}{$\str{receiveCIFReady}$}
  \Let{$\mi{subwindows}$}{$\mathsf{SUBWINDOWS}(\mi{tree},\mi{docnonce})$}
  \Let{$s'.\str{CIFindex}$}{$|\mi{subwindows}|+1$} \Comment{Index of the next subwindow to be created.}
  \State \textbf{stop} $\an{s',\mi{cookies}',\mi{localStorage}',\mi{sessionStorage}',\mi{command}}$\label{line:cifindex-end-pidp}
 \EndCase

 \Case{$\str{receiveCIFReady}$}
  \Let{$\mi{input},s'$}{\textsf{CHOOSEINPUT}($s',\mi{scriptinputs}$)}
  \If{$\proj{1}{\mi{input}} \equiv
    \tPostMessage$}
   \Let{$\mi{senderWindow}$}{$\proj{2}{\mi{input}}$}
   \Let{$\mi{senderOrigin}$}{$\proj{3}{\mi{input}}$}
   \Let{$m$}{$\proj{4}{\mi{input}}$}
   \Let{$\mi{subwindows}$}{$\mathsf{SUBWINDOWS}(\mi{tree},\mi{docnonce})$}
   \If{$(m \equiv \an{\str{cifready},\an{}})$\breakalgohook{3}$ \wedge(\mi{senderOrigin} \equiv \mathsf{origin_\LPO}) $\breakalgohook{3}$\wedge(\mi{senderWindow} \equiv \proj{s'.\str{CIFindex}}{\mi{subwindows}}.\str{nonce})$}
    \LetND{$\mi{id}$}{$\{\bot,\an{}\} \cup \IDs $}
    \Let{$\mi{command}$}{$\langle\tPostMessage$, $\proj{s'.\str{CIFindex}}{\mi{subwindows}}$,\breakalgohook{4}
$\an{\str{loaded},\mi{id}}$, $\mathsf{origin_\LPO}\rangle$}
    \Let{$s'.\str{q}$}{$\str{default}$}
    \State \textbf{stop} $\an{s',\mi{cookies}',\mi{localStorage}',\mi{sessionStorage}',\mi{command}}$
   \EndIf
  \EndIf
 \EndCase

 \Case{$\str{default}$} \label{line:state-default-pidp}
  \If{$s'.\str{dialogRunning} \equiv \bot$}
   \LetND{$\mi{choice}$}{$\{\str{openLD},\str{handlePM}\}$}
  \Else
   \Let{$\mi{choice}$}{$\str{handlePM}$}
  \EndIf
  \If{$\mi{choice} \equiv \str{openLD}$} 
   \Let{$s'.\str{dialogRunning}$}{$\True$} \label{line:ldindex-begin-pidp}
   \Let{$\mi{command}$}{$\an{\tHref,\textsf{URL}^\LPO_\str{/ld},\wBlank}$}
   \Let{$s'.\str{q}$}{$\str{default}$}
   \State \textbf{stop}
  $\an{s',\mi{cookies}',\mi{localStorage}',\mi{sessionStorage}',\mi{command}}$ \label{line:ldindex-end-pidp}

  \Else
   \Let{$\mi{input},s'$}{\textsf{CHOOSEINPUT}($s',\mi{scriptinputs}$)}
   \If{$\proj{1}{\mi{input}} \equiv \tPostMessage$}
    \Let{$\mi{senderWindow}$}{$\proj{2}{\mi{input}}$}
    \Let{$\mi{senderOrigin}$}{$\proj{3}{\mi{input}}$}
    \Let{$m$}{$\proj{4}{\mi{input}}$}
    \Let{$\mi{subwindows}$}{$\mathsf{SUBWINDOWS}(\mi{tree},\mi{docnonce})$}
    \If{$\mi{senderOrigin} \equiv \mathsf{origin_\LPO}$}
     \If{$\mi{senderWindow} \equiv \proj{s'.\str{CIFindex}}{\mi{subwindows}}.\str{nonce}$}
      \If{$\proj{1}{m} \equiv \str{login}$}
       \Let{$s'.\str{cap}$}{$\proj{2}{m}$} \label{line:set-cap-pidp}
       \Let{$s'.\str{q}$}{$\str{sendCAP}$}
       \State \textbf{stop} $\an{s',\mi{cookies}',\mi{localStorage}',\mi{sessionStorage}',\an{}}$
      \ElsIf{$\proj{1}{m} \equiv \str{logout}$}
       \Let{$s'.\str{q}$}{$\str{default}$}
       \State \textbf{stop} $\an{s',\mi{cookies}',\mi{localStorage}',\mi{sessionStorage}',\an{}}$
      \EndIf

     \ElsIf{$s'.\str{dialogRunning} \equiv \True$}
      \If{$\proj{1}{m} \equiv \str{ldready}$}                 
       \Let{$\mi{command}$}{$\langle\tPostMessage$,\breakalgohook{7}$\mathsf{AUXWINDOW}(\!\mi{tree}, \mi{docnonce}\!)$, $\an{\str{request},\!\an{}}$, $\mathsf{origin_\LPO}\rangle$}
       \Let{$s'.\str{q}$}{$\str{default}$}
       \State \textbf{stop}\! $\an{s',\mi{cookies}',\mi{localStorage}',\mi{sessionStorage}',\mi{command}}$
      \ElsIf{$\proj{1}{m} \equiv \str{response}$}
       \Let{$s'.\str{dialogRunning}$}{$\bot$}
       \Let{$s'.\str{cap}$}{$\proj{2}{m}$} \label{line:set-cap-2-pidp}
   \Let{$\mi{command}$}{$\an{\tClose,\mathsf{AUXWINDOW}(\mi{tree},\mi{docnonce})}$} 
       \Let{$s'.\str{q}$}{$\str{dlgClosed}$}
       \State \textbf{stop}\! $\an{s',\mi{cookies}',\mi{localStorage}',\mi{sessionStorage}',\mi{command}}$
      \EndIf
     \EndIf
    \EndIf
   \EndIf
  \EndIf
 \EndCase

 \Case{$\str{dlgClosed}$} 
   \Let{$\mi{subwindows}$}{$\mathsf{SUBWINDOWS}(\mi{tree},\mi{docnonce})$}
   \Let{$\mi{id}$}{$\proj{1}{\unsig{\proj{1}{s'.\str{cap}}}}$} \label{line:use-cap-1-pidp}\Comment{Extract ID from CAP.}
\Let{$\mi{command}$}{$\langle\tPostMessage$, $\proj{s'.\str{CIFindex}}{\mi{subwindows}}.\str{nonce}$,\breakalgohook{2}$\an{\str{loggedInUser}$, $\mi{id}}$, $\mathsf{origin_\LPO}\rangle$}
  \Let{$s'.\str{q}$}{$\str{loggedInUser}$}
  \State \textbf{stop} $\an{s',\mi{cookies}',\mi{localStorage}',\mi{sessionStorage}',\mi{command}}$
 \EndCase

 \Case{$\str{loggedInUser}$}
   \Let{$\mi{subwindows}$}{$\mathsf{SUBWINDOWS}(\mi{tree},\mi{docnonce})$}
\Let{$\mi{command}$}{\breakalgohook{2}$\an{\tPostMessage,\proj{s'.\str{CIFindex}}{\mi{subwindows}}.\str{nonce},\an{\str{dlgCmplt},\an{}},\mathsf{origin_\LPO}}$}
  \Let{$s'.\str{q}$}{$\str{sendCAP}$}
  \State \textbf{stop} $\an{s',\mi{cookies}',\mi{localStorage}',\mi{sessionStorage}',\mi{command}}$
 \EndCase

 \Case{$\str{sendCAP}$} \label{line:state-sendcap-pidp}
   \LetND{$s'.\str{refXHRcap}$}{$\mi{nonces}$}
   \LetST{$\mi{host}$, $\mi{protocol}$}{\breakalgohook{2}$\an{host, protocol} \equiv \mathsf{GETORIGIN}(\mi{tree}, \mi{docnonce})$\breakalgohook{2}}{\textbf{stop}\breakalgohook{2} $\langle\mi{scriptstate}$, $\mi{cookies}$, $\mi{localStorage}$, $\mi{sessionStorage}$, $\mi{command}\rangle$}
   \Let{$\mi{command}$}{$\langle\tXMLHTTPRequest$, $\an{\cUrl, \mi{protocol}, \mi{host}, \mathtt{/}, \an{}}$, $\mPost$, $s'.\str{cap}$,\breakalgohook{2}$s'.\str{refXHRcap}\rangle$} \label{line:use-cap-2-pidp} \Comment{Relay received CAP to RP.}
   \Let{$s'.\str{q}$}{$\str{receiveServiceToken}$}
   \State \textbf{stop} $\an{s',\mi{cookies}',\mi{localStorage}',\mi{sessionStorage}',\mi{command}}$ \label{line:send-cap-pidp}
 \EndCase

 \Case{$\str{receiveServiceToken}$}
  \Let{$\mi{input},s'$}{\textsf{CHOOSEINPUT}($s',\mi{scriptinputs}$)}
  \If{$(\proj{1}{\mi{input}} \equiv \tXMLHTTPRequest) \wedge (\proj{3}{\mi{input}} \equiv s'.\str{refXHRcap})$}
   \Let{$s'.\str{q}$}{$\str{default}$}\label{line:ignore-service-token-pidp}
   \State \textbf{stop} $\an{s',\mi{cookies}',\mi{localStorage}',\mi{sessionStorage}',\an{}}$
  \EndIf
 \EndCase

\EndSwitch

\State \textbf{stop} $\an{\mi{scriptstate},\mi{cookies},\mi{localStorage},\mi{sessionStorage},\an{}}$
\end{algorithmic} \setlength{\parindent}{1em}

In Lines~\ref{line:cifindex-begin-pidp}--\ref{line:cifindex-end-pidp}
and \ref{line:ldindex-begin-pidp}--\ref{line:ldindex-end-pidp} the
script asks the browser to create iframes. To obtain the
window reference for these iframes, the script first
determines the current number of subwindows and stores it
(incremented by 1) in the scriptstate ($\str{CIFindex}$ and
$\str{LDindex}$, respectively).  When the script is invoked
the next time, the iframe the script asked to be created
will have been added to the sequence of subwindows by the
browser directly following the previously existing
subwindows. The script can therefore access the iframe by
the indexes $\str{CIFindex}$ and $\str{LDindex}$,
respectively.

\subsubsection{Identity Provider Authentication Dialog Script (script\_idp\_ad).}
This script runs in the LD after $\mi{script\_lpo\_ld}$ has navigated
the LD window. The purpose of this script is to authenticate the
browser to the identity provider.

The script non-deterministically chooses if it sends authentication data to the IdP (i.e. its origin) via an \xhr, or if it navigates the window to an URL at LPO which servers $\mi{script\_lpo\_ld}$. Note that $\mi{script\_idp\_ad}$ does not read or change its scriptstate. Hence, we omit the definition of the scriptstate for this script.

\captionof{algorithm}{\label{alg:scriptidpad-pidp} Relation of $\mi{script\_idp\_ad}$ }
\begin{algorithmic}[1]
\Statex[-1] \textbf{Input:} $\langle\mi{tree}$, $\mi{docnonce}$, $\mi{scriptstate}$, $\mi{scriptinputs}$, $\mi{cookies}$, $\mi{localStorage}$, $\mi{sessionStorage}$,\breakalgohook{-1}$\mi{ids}$, $\mi{secret}\rangle$, $\mi{nonces}$ 
\LetND{$\mi{action}$}{$\{\str{authenticate},\str{navigate}\}$}
\If{$\mi{action} \equiv \str{authenticate}$}
  \LetND{$\mi{email}$}{$\mi{ids}$}
  \Let{$\mi{body}$}{$\an{\mi{email},\mi{secret}}$}
  \LetST{$\mi{host}$, $\mi{protocol}$}{\breakalgohook{1}$\an{host, protocol} \equiv \mathsf{GETORIGIN}(\mi{tree},\mi{docnonce})$\breakalgohook{1}\!\!}{\breakalgohook{1}\textbf{stop} $\an{\mi{scriptstate},\mi{cookies},\mi{localStorage},\mi{sessionStorage},\an{}}$}
  \Let{$\mi{command}$\!}{\!$\an{\tXMLHTTPRequest,\an{\cUrl, \mi{protocol}, \mi{host}, \mathtt{/auth}, \an{}},\mPost,\mi{body},\bot}$}
  \State \textbf{stop} $\an{\mi{scriptstate},\mi{cookies}',\mi{localStorage}',\mi{sessionStorage}',command}$
\Else
  \Let{$\mi{command}$}{$\an{\tHref,\an{\cUrl, \https, \mapDomain(\fAP{LPO}), \mathtt{/ld}, \an{}},\wSelf}$}
  \State \textbf{stop} $\an{\mi{scriptstate},\mi{cookies}',\mi{localStorage}',\mi{sessionStorage}',command}$
\EndIf
\end{algorithmic} \setlength{\parindent}{1em}

\subsubsection{Identity Provider Provisioning Iframe Script (script\_idp\_pif).}
This script acts as a proxy between the LD or CIF and the IdP server.

\begin{definition}\label{def:scriptstateidppif-pidp}
  A \emph{scriptstate $s$ of $\mi{script\_idp\_pif}$} is a term of the form
  $\langle q$, $\mi{emails}$, $\mi{pubkeys}$, $\mi{ucs}$,
  $\mi{provisioningnonces}$, $\mi{genkeypairnonces}$, $\mi{xhrnonces}$, $\mi{handledInputs}
  \rangle$ with $q \in \mathbb{S}$,
  $\mi{emails}$, $\mi{pubkeys}$, $\mi{ucs} \in\terms$,
  $\mi{provisioningnonces}$, $\mi{genkeypairnonces}$, $\mi{xhrnonces}
  \in \nonces \cup \{\bot\}$, $\mi{handledInputs} \subsetPairing \mathbb{N}$.
  We call $s$ the \emph{initial scriptstate of
  $\mi{script\_idp\_pif}$} iff $s \equiv
  \an{\str{init},\an{},\an{},\an{},\bot,\bot,\bot}$.
\end{definition}

Before we provide the formal specification of the relation
that defines the behavior of $\mi{script\_idp\_pif}$, we
present an informal description. The behavior mainly
depends on the state $q$ the script is in.

\begin{description}
\item[$q=\str{init}$] This is the initial state. Its only transition
  handles no input and outputs a \pm \texttt{ping} to its parent
  window, which has to have the origin of LPO, and transitions to
  $\str{waiting}$.

\item[$q=\str{waiting}$] In this state, the script expects a \pm
  containing either $\str{ping}$ or $\str{pong}$, which has to be sent
  by the parent window from the origin of LPO. If such a \pm has been
  received, the script transitions to $\str{default}$.

\item[$q=\str{default}$] In this state, the script chooses an action non-deterministically out of the following:
  \begin{description}
  \item[$\str{beginprovisioning}$] The script sends a \pm to the
    parent window, which has to have the origin of LPO, indicating
    that the provisioning process of a UC should start. A fresh nonce
    is chosen, stored in the script's state, and included in this
    \pm. The \pm requests the email address of the user from the
    receiver. The address is to be sent to the PIF in a \pm which is
    identified by the nonce in the request.

  \item[$\str{genkeypair}$] The script sends a \pm to the parent
    window, which has to have the origin of LPO, indicating that a new
    key pair should be generated. This \pm requests the public key of
    this fresh key pair. As above, a nonce is included to identify the
    response corresponding to the request.

  \item[$\str{registercert}$] The script sends a \pm containing a UC to the
    parent window, which has to have the origin of LPO. This \pm is
    only sent if the script has received a UC before.

  \item[$\str{raisefailure}$] The script sends a \pm to the parent window, which
    has to have the origin of LPO, indicating that the browser is
    currently not authenticated to the identity provider.

  \item[$\str{requestuc}$] The script sends an \xhr to the origin of
    the current document if the scriptstate contains at least one
    email address and one public key. The message contains a
    non-deterministically chosen email address and a public key (from
    the scriptstate). The nonce identifying this \xhr is
    non-deterministically chosen and stored in the scriptstate.

  \item[$\str{handleresponse}$] The script chooses non-deterministically a script
    input and distinguishes if this input is a \pm or an \xhr
    response.

    If the chosen input is a \pm, it is checked if the \pm was sent by
    the parent window and if this window has the origin of LPO. If
    this check is successful, it is checked if the message contains a
    nonce, which was previously been recorded in the script's state. If
    this nonce indicates that this message is a response to a
    $\str{beginProvisioning}$ \pm, the second part is assumed to
    contain an email address. This address is then recorded in the
    script's state.  If the nonce indicates that this message is a
    response to a $\str{genKeyPair}$ \pm, the second part is assumed
    to contain a public key. This public key is then recorded in the
    script's state.

    If the chosen input is an \xhr response, it is checked if the
    nonce identifying the \xhr is recorded in the script's state. If
    this is the case, the message is assumed to contain an UC. The
    content of the message is stored in the script's state.
  \end{description}

\end{description}

\captionof{algorithm}{\label{alg:scriptidppif-pidp} Relation of $\mi{script\_idp\_pif}$ }
\begin{algorithmic}[1]
\Statex[-1] \textbf{Input:} $\langle\mi{tree}$, $\mi{docnonce}$, $\mi{scriptstate}$, $\mi{scriptinputs}$, $\mi{cookies}$, $\mi{localStorage}$, $\mi{sessionStorage}$,\breakalgohook{-1}$\mi{ids}$, $\mi{secret}\rangle$, $\mi{nonces}$ 
\Let{$s'$}{$\mi{scriptstate}$}

\Switch{$s'.\str{q}$}

  \Case{$\str{init}$}
    \Let{$\mi{command}$}{$\langle\tPostMessage$, $\mathsf{PARENTWINDOW}(\mi{tree}, \mi{docnonce})$,\breakalgohook{2}$\an{\str{ping},\an{}}$, $\an{\mapDomain(\fAP{LPO})$, $\https}\rangle$}
    \Let{$s'.\str{q}$}{$\str{waiting}$}
    \State \textbf{stop} $\an{s', \mi{cookies}, \mi{localStorage}, \mi{sessionStorage}, \mi{command}}$
  \EndCase

  \Case{$\str{waiting}$}
    \Let{$\mi{input}, s'$}{$\mathsf{CHOOSEINPUT}(s',\mi{scriptinputs})$}
    \Let{$\mi{senderWindow}$}{$\proj{2}{\mi{input}}$}
    \Let{$\mi{senderOrigin}$}{$\proj{3}{\mi{input}}$}
    \Let{$m$}{$\proj{4}{input}$}
    \If{$\proj{1}{\mi{input}} \in \{\str{ping}, \str{pong}\} $\breakalgohook{2}$\wedge\mi{senderWindow} \equiv \mathsf{PARENTWINDOW}(\mi{tree}, \mi{docnonce}) $\breakalgohook{2}$\wedge\mi{senderOrigin} \equiv \an{\mapDomain(\fAP{LPO}),\https}$}
      \Let{$s'.\str{q}$}{$\str{default}$}
    \EndIf
    \State \textbf{stop} $\an{s', \mi{cookies}, \mi{localStorage}, \mi{sessionStorage}, \an{}}$
  \EndCase

  \Case{$\str{default}$}
    \LetND{$\mi{action}$}{$\{ \str{beginprovisioning}$, $\str{genkeypair}$, $\str{registercert}$,\breakalgohook{2}$\str{raisefailure}$, $\str{requestuc}$, $\str{handleresponse} \}$}

    \Switch{$\mi{action}$}

      \Case{$\str{beginprovisioning}$}
        \LetND{$\mi{jschannel\_nonce}$}{$\mi{nonces}$} 
        \Let{$\mi{command}$}{$\langle \tPostMessage$, $\mathsf{PARENTWINDOW}(\mi{tree}, \mi{docnonce})$,\breakalgohook{4}$\an{\str{beginProvisioning},\mi{jschannel\_nonce}}$, $\mapDomain(\fAP{LPO})\rangle$}
        \Let{$s'.\str{provisioningnonces}$}{\breakalgohook{4}$s'.\str{provisioningnonces} \plusPairing \mi{jschannel\_nonce}$}
        \State \textbf{stop} $\an{s', \mi{cookies}, \mi{localStorage}, \mi{sessionStorage}, \mi{command}}$
      \EndCase

      \Case{$\str{genkeypair}$}
        \LetND{$\mi{jschannel\_nonce}$}{$\mi{nonces}$}
        \Let{$\mi{command}$}{$\langle\tPostMessage$, $\mathsf{PARENTWINDOW}(\mi{tree}, \mi{docnonce})$,\breakalgohook{4}$\an{\str{genKeyPair},\mi{jschannel\_nonce}}$, $\mapDomain(\fAP{LPO})\rangle$}
        \Let{$s'.\str{genkeypairnonces}$}{\breakalgohook{4}$s'.\str{genkeypairnonces} \plusPairing \mi{jschannel\_nonce}$}
        \State \textbf{stop} $\an{s', \mi{cookies}, \mi{localStorage}, \mi{sessionStorage}, \mi{command}}$ \label{line:start-genkeypair-pidp}
      \EndCase
      
      \Case{$\str{registercert}$}
        \If{$s'.\str{ucs} \not\equiv \an{}$}
          \LetND{$\mi{uc}$}{$s'.\str{ucs}$}
          \Let{$\mi{command}$\!}{\!$\langle\tPostMessage$,$\mathsf{PARENTWINDOW}(\!\mi{tree},\!\mi{docnonce}\!)$,\breakalgohook{5}$\an{\str{registerCertificate},\mi{uc}}$, $\mapDomain(\fAP{LPO})\rangle$}
          \State \textbf{stop} $\an{s', \mi{cookies}, \mi{localStorage}, \mi{sessionStorage}, \mi{command}}$
        \EndIf
      \EndCase
      
      \Case{$\str{raisefailure}$}
        \Let{$\mi{command}$}{$\langle\tPostMessage$, $\mathsf{PARENTWINDOW}(\mi{tree}, \mi{docnonce})$,\breakalgohook{4}$\an{\str{raiseProvisioningFailure},\bot}$, $\mapDomain(\fAP{LPO})\rangle$}
        \State \textbf{stop} $\an{s', \mi{cookies}, \mi{localStorage}, \mi{sessionStorage}, \mi{command}}$
      \EndCase
      
      \Case{$\str{requestuc}$}
        \If{$s'.\str{emails} \not\equiv \an{} \wedge s'.\str{pubkeys} \not\equiv \an{}$}
          \LetND{$\mi{email}$}{$s'.\str{emails}$}
          \LetND{$\mi{pubkey}$}{$s'.\str{pubkeys}$}
          \Let{$\mi{body}$}{$\an{\mi{email},\mi{pubkey}}$}
          \LetND{$\mi{xhrnonce}$}{$\mi{nonces}$}
          \Append{$\mi{xhrnonce}$}{$s'.\str{xhrnonces}$}
          \LetST{$\mi{host}$,$\mi{protocol}$}{\breakalgohook{5}$\an{\mi{host},\mi{protocol}} \equiv \mathsf{GETORIGIN}(\mi{tree},\mi{docnonce})$\breakalgohook{5}\!\!}{\breakalgohook{5}\textbf{stop} $\an{s', \mi{cookies}, \mi{localStorage}, \mi{sessionStorage}, \an{}}$}
          \Let{$\mi{command}$}{$\langle \tXMLHTTPRequest$,\breakalgohook{5}$\an{\cUrl,\mi{protocol},\mi{host},\str{/certreq},\an{}}$,$\mPost$,$\mi{body}$,$\mi{xhrnonce} \rangle$}
          \State \textbf{stop} $\an{s', \mi{cookies}, \mi{localStorage}, \mi{sessionStorage}, \mi{command}}$ \label{line:send-req-uc-pidp}
        \EndIf
      \EndCase
      
      \Case{$\str{handleresponse}$}
        \Let{$\mi{input},s'$}{$\mathsf{CHOOSEINPUT}(s',\mi{scriptinputs})$}
        \If{$\proj{1}{\mi{input}} \equiv \tPostMessage$}
          \Let{$\mi{senderWindow}$}{$\proj{2}{\mi{input}}$}
          \Let{$\mi{senderOrigin}$}{$\proj{3}{\mi{input}}$}
          \Let{$m$}{$\proj{4}{input}$}
          \If{$\mi{senderWindow} \equiv \mathsf{PARENTWINDOW}(\mi{tree}, \mi{docnonce}) $\breakalgohook{5}$\wedge\mi{senderOrigin} \equiv \an{\mapDomain(\fAP{LPO}),\https}$}
            \If{$\proj{1}{m} \in s'.\str{provisioningnonces}$}
              \Append{$\proj{2}{m}$}{$s'.\str{emails}$}
            \ElsIf{$\proj{1}{m} \in s'.\str{genkeypairnonces}$}
              \Append{$\proj{2}{m}$}{$s'.\str{pubkeys}$} \label{line:populate-pubkeys-pidp}
            \EndIf
            \State \textbf{stop} $\an{s', \mi{cookies}, \mi{localStorage}, \mi{sessionStorage}, \an{}}$
          \EndIf
        \ElsIf{$\proj{1}{\mi{input}} \equiv \tXMLHTTPRequest $\breakalgohook{4}$\wedge\proj{3}{\mi{input}} \in s'.\str{xhrnonces}$}
          
          \Append{$\proj{2}{\mi{input}}$}{$s'.\str{ucs}$}
          \State \textbf{stop} $\an{s', \mi{cookies}, \mi{localStorage}, \mi{sessionStorage}, \an{}}$
        \EndIf
      \EndCase
      
    \EndSwitch
  \EndCase

\EndSwitch
\State \textbf{stop} $\an{\mi{scriptstate}, \mi{cookies}, \mi{localStorage}, \mi{sessionStorage}, \an{}}$
\end{algorithmic} \setlength{\parindent}{1em}

\section{Formal Security Properties}\label{app:form-secur-prop}

The security properties for BrowserID, informally introduced in
Section~\ref{sec:securitypropsBrowserID}, are formally defined as
follows. First note that every RP service token $\an{n,i}$ recorded in
RP was created by RP as the result of a unique HTTPS $\mPost$ request
$m$ with a valid CAP for ID $i$. We refer to $m$ as the \emph{request
  corresponding to $\an{n,i}$}.

\begin{definition}\label{def:security-property-pidp} Let $\bidwebsystem$ be a BrowserID web
  system. We say that \emph{$\bidwebsystem$ is secure} if
  for every run $\rho$ of $\bidwebsystem$, every state
  $(S_j, E_j)$ in $\rho$, every $r\in \fAP{RP}$ that is
  honest in $S_j$, every RP service token of the form
  $\an{n,i}$ recorded in $r$ in the state $S_j(r)$, the
  following two conditions are satisfied:

  \textbf{(A)} If $\an{n,i}$ is derivable from the
  attackers knowledge in $S_j$ (i.e., $\an{n,i} \in
  d_{N^\fAP{attacker}}(S_j(\fAP{attacker}))$), then it
  follows that the browser $b$ owning $i$ is fully corrupted in
  $S_j$ (i.e., the value of $\mi{isCorrupted}$ is
  $\fullcorrupt$) or $\mapGovernor(i)$ is not an honest IdP
  (in $S_j$).

  \textbf{(B)} If the request corresponding to $\an{n,i}$
  was sent by some $b\in \fAP{B}$ which is honest in $S_j$,
  then $b$ owns $i$.
\end{definition}

\section{Proof of Theorem~\ref{thm:secur-fixed-syst}}
\label{app:proofbrowserid-pidp}

In order to prove Theorem~\ref{thm:secur-fixed-syst}, we have to prove
Conditions A and B of Definition~\ref{def:security-property-pidp}. We
prove these conditions separately. First, we provide an overview of
the proofs.

\subsection{Overview}

For Condition \textbf{(A)}, we analyze the request to an
honest RP $r$ upon which $r$ returned a service token
$\an{n,i}$, where $i$ is an ID and $n$ a nonce. We show
that it must contain a valid CAP (for the identity
$i$). For this, it must in particular contain a valid UC
and a matching IA. We show that the UC must have been
created by the IdP that governs the identity $i$ (which is
honest by assumption). We can then show that only $b$ can
request a UC at the IdP for the identity $i$, and that $b$
does not leak the private key that corresponds to the
public key used for this UC, and that this key was chosen
from $b$'s set of fresh nonces. Thus, only $b$ can know the
key that is used in the creation of the UC in the CAP. We
show that neither the private key corresponding to the
public key in the UC, nor the IA can leak to the
attacker. Thus, the attacker cannot have sent the request
corresponding to $\an{n,i}$ to the RP $r$. Also, $\an{n,i}$
does not leak to the attacker. The attacker can therefore
not know $\an{n,i}$, which contradicts the assumption and
proves that Condition \textbf{(A)} is satisfied.

For Condition \textbf{(B)}, we focus on the request corresponding to
$\an{n,i}$ as well. We observe that if the request was sent
by $b$, the script that initiated the request was
$\mi{script\_rp\_index}$, which again got the CAP that is
finally used in the request from either
$\mi{script\_lpo\_cif}$ or $\mi{script\_lpo\_ld}$ (any
other sources, including the attacker script, can be ruled
out). In both of these scripts, the identity in the CAP is
checked against the list of identities of the browser
(here, the proposed patch comes into play). This ensures
that the request corresponding to $\an{n,i}$ contains a CAP
for an identity of the browser, which contradicts the
assumption that Condition \textbf{(B)} is not satisfied and thus
proves the theorem.

\subsection{Condition A}

We assume that Condition A is not satisfied and prove that this leads
to a contradiction. That is, we make the following
assumption: There is a run $\rho = s_0, s_1,\dots$ of $\bidwebsystem$,
a state $s_j = (S_j, E_j)$ in $\rho$, an $r \in \fAP{RP}$ that is
honest in $S_j$, an RP service token of the form $\an{n,i}$ recorded
in $r$ in the state $S_j(r)$ such that $\an{n,i} \in
d_{N^\fAP{attacker}}(S_j(\fAP{attacker}))$ and the browser $b$ owning $i$
is not fully corrupted in $S_j$ and $\mapGovernor(i)$ is an honest IdP
in $S_j$.

By definition of RPs, for $\an{n,i}$ there exists a
corresponding HTTPS request received by $r$, which we call
$\mi{req}_\text{cap}$, and a corresponding response
$\mi{resp}_\text{cap}$. The request must contain a valid CAP $c$ and
must have been sent by some atomic process $p$ to $r$. The response
must contain $\an{n,i}$ and it must be encrypted by some symmetric
encryption key $k$ sent in $\mi{req}_\text{cap}$.

In particular, it follows that the request and the response must be of
the following form, where $d_r \in \mathsf{dom}(r)$ is the domain of
$r$, $n_\text{cap}, k \in \nonces$ are some nonces, $\mi{path}$, $\mi{params} \in \terms$, $c$ is some valid
CAP, and $\mi{sts}$ is the Strict-Transport-Security header (as in the
definition of RP's relation):
\begin{align}
  \label{eq:proofreqcapA-pidp} \mi{req}_\text{cap} &=
  \mathsf{enc}_\mathsf{a}(\langle \hreq{ nonce=n_\text{cap}, method=\mPost,
      xhost=d_r, path=\mi{path}, parameters=\mi{params}, headers=[\str{Origin}:
      \an{d_r, \https}], xbody=c},\nonumber\\ &\hspace{4.5em}k\rangle, \pub(\mapKey(d_r)))\\
  \label{eq:proofrespcapA-pidp} \mi{resp}_\text{cap} &=
  \ehrespWithVariable{\hresp{ nonce=n_\text{cap}, status=200,
      headers=\an{\mi{sts}}, xbody=\an{n,i}}}{k}
\end{align}
Moreover, there must exist a processing step of the following form,
where $m \leq j$, $a_r \in \mapAddresstoAP(r)$, and $x$ is some
address:
\[ s_{m-1} \xrightarrow[r \rightarrow
\{(x{:}a_r{:}\mi{resp}_\text{cap})\}]{(a_r{:}x{:}\mi{req}_\text{cap})
  \rightarrow r} s_{m}\enspace . \]

From the assumption and the definition of RPs it follows that $c$ is
of the following form:
\begin{align*}
  c &= \an{\mi{uc}, \mi{ia}}\\
  &\equiv \an{\sig{\an{i, \pub(k_u)}}{k_\text{sign}}, \sig{\an{d_r,
        \https}}{k_u}}
\end{align*}
where $k_u$ and $k_\text{sign}$ are some private keys. When we write
$i = \an{i_\text{name}, i_\text{domain}}$, we have that:
\begin{align*}
  c \equiv \an{\sig{\an{\an{i_\text{name}, i_\text{domain}},
        \pub(k_u)}}{k_\text{sign}}, \sig{\an{d_r, \https}}{k_u}}\ .
\end{align*}

As $r$ accepts the CAP $c$, we know that $\pub(k_\text{sign}) \equiv
S_j(r).\str{signkeys}[i_\text{domain}]$. As the subterm
$\str{signkeys}$ of $r$'s state is never changed, we have
$S_j(r).\str{signkeys} = S_0(r).\str{signkeys}$. With the definition
of the initial state of $r$ (See
Definition~\ref{def:relying-parties}), we have that
$\pub(k_\text{sign}) \equiv S_j(r).\str{signkeys}[i_\text{domain}]
\equiv \pub(\mapSignKey(\mapDomain^{-1}(i_\text{domain})))$.

The private key $\mapSignKey(\mapDomain^{-1}(i_\text{domain}))$ is
initially only known to the DY process $\mi{idp} :=
\mapDomain^{-1}(i_\text{domain}) = \mapGovernor(i)$. From the
assumption we know that $\mi{idp}$ is an honest IdP (and not the
attacker, a corrupted IdP, or some other DY process). As we can see in
Algorithm~\ref{alg:idp-pidp} (that defines the behavior of IdPs), the
$\mi{signkey}$ can only be used in
Line~\ref{line:usage-of-signkey-corrupt-pidp} and in
Line~\ref{line:usage-of-signkey-pidp}. We know that
Line~\ref{line:usage-of-signkey-corrupt-pidp} cannot be invoked as
long as $\mi{idp}$ is honest, which it is in $s_j$ and ever since
$s_0$. For Line~\ref{line:usage-of-signkey-pidp}, we see that the key
is not sent out to other processes. In $s_j$, the key can therefore
not have been leaked to any other DY processes.

Knowing that in or before $s_j$, only $\mi{idp}$ can derive
$k_\text{sign}$ from its knowledge, it is easy to see that only
$\mi{idp}$ can derive $\sig{x}{k_\text{sign}}$ for any $x$, and in
particular, $\mi{uc}$.

Now we want to see exactly how $\mi{idp}$ creates $\mi{uc}$ and which
data it uses in this process.

We have already seen that $\mi{idp}$ creates the $\mi{uc}$ in
Line~\ref{line:usage-of-signkey-pidp} of
Algorithm~\ref{alg:idp-pidp}. There may be more than one processing
step in $\rho$ where $\mi{idp}$ outputs $\mi{uc}$.

\begin{lemma}\label{lemma:req-uc-is-from-b-pidp}
  For all processing steps of the form
  \begin{align}\label{eq:request-beta-to-pidp}
    s_{\beta-1} \xrightarrow[\mi{idp} \rightarrow
    \{(x{:}a_{\mi{idp}}{:}\mi{resp}_\text{uc})\}]{(a_{\mi{idp}}{:}x{:}\mi{req}_\text{uc})
      \rightarrow \mi{idp}} s_{\beta}
  \end{align}
  (for some addresses $x$, $a_{\mi{idp}}$ with $s_{\beta} < s_j$,
  where $\mi{resp}_\text{uc}$ is an encrypted HTTP response with the
  body $\an{\mi{uc}}$) it holds that $\mi{req}_\text{uc}$ was emitted
  by $b$.
\end{lemma}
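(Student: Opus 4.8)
The goal is to show that whenever the honest IdP $\mi{idp}$ outputs a response containing a valid UC $\mi{uc}$ (before state $s_j$), the corresponding request $\mi{req}_\text{uc}$ was emitted by the browser $b$ that owns the identity $i$. I would proceed by a careful inspection of Algorithm~\ref{alg:idp-pidp} (the IdP relation) together with the general security properties of Section~\ref{sec:generalproperties} / Appendix~\ref{app:generalproperties}.

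\emph{Step 1: Analyze how the UC is created.} Look at the processing step in~(\ref{eq:request-beta-to-pidp}). Since $\mi{idp}$ is honest in $s_\beta$ (as it is honest in $s_j \geq s_\beta$ and corruption is monotone), the response body $\an{\mi{uc}}$ can only have been produced on the path through Line~\ref{line:usage-of-signkey-pidp} of Algorithm~\ref{alg:idp-pidp} — the ``user wants a certificate'' branch with $\mi{path} \equiv \str{/certreq}$. Reading off that branch, the request $\mi{req}_\text{uc}$ must be an (encrypted) HTTPS $\mPost$ request to a domain in $\mapDomain(\mi{idp})$ with body $\an{i, \pub(k_u)}$, and it must carry a cookie $\an{\str{sessionid}, \mi{sid}}$ such that $i \inPairing S_{\beta-1}(\mi{idp}).\str{sessions}[\mi{sid}]$. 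So two things need to be tracked: (a) who can form such an encrypted request at all, and (b) how $\mi{sid}$ got into $\mi{idp}$'s session dictionary keyed to a list that contains $i$.

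\emph{Step 2: Only the owner of the secret can create an authenticated session.} The entry $S_{\beta-1}(\mi{idp}).\str{sessions}[\mi{sid}]$ containing $i$ must have been written in Line~\ref{line:populate-sessions-pidp} of Algorithm~\ref{alg:idp-pidp} (the only place sessions are populated with identities), which requires a prior $\mPost$ to $\str{/auth}$ (not $\str{/certreq}$) carrying body $\an{i', \mi{secret}'}$ with $\mi{secret}' = \mapIDtoPLI(i)$ (since $\mi{idp}$'s user database $\mi{userset}^\mi{idp}$ maps exactly that secret to the set $\mathsf{IDsofSecret}(\mapIDtoPLI(i)) \ni i$, and identities/secrets across browsers are disjoint). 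Now $\mi{secret}' \in \RPSecrets$ is a nonce that in the initial state is only known to $b = \mapPLItoOwner(\mapIDtoPLI(i)) = \mapIDtoOwner(i)$ — it is placed only in $b$'s initial state (under the IdP's origin) and in $\mi{idp}$'s read-only user database. I would argue that this secret never leaks: in the browser it is only released to scripts running under the origin $\an{d,\https}$ for $d \in \mapDomain(\mi{idp})$ (the AutoFill mechanism, Line~\ref{line:browser-secrets} of Algorithm~\ref{alg:runscript}), and by Lemma~\ref{lemma:https-document-origin-general} any document with such a secure origin in $b$'s state was delivered by $\mi{idp}$; the only script $\mi{idp}$ serves that uses $\mi{secret}$ is $\mi{script\_idp\_ad}$, which sends it solely in a $\mPost$ \xhr to the same (honest $\mi{idp}$) origin over HTTPS, which by Lemma~\ref{lemma:k-does-not-leak-from-honest-browser-general} cannot be read by the attacker. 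Hence no process other than $b$ and $\mi{idp}$ ever knows $\mi{secret}$, and since $\mi{idp}$ only ever \emph{checks} it (never emits it), in fact only $b$ can produce a message containing $\mi{secret}$. Therefore the $\str{/auth}$ request that created the session entry was emitted by $b$, and the $\str{/certreq}$ request carrying the matching session cookie was likewise emitted by $b$ — here I use that $\mi{idp}$ sets the cookie with the $\str{httpOnly}$ and $\str{secure}$ flags (Line of Algorithm~\ref{alg:idp-pidp} setting $\mi{setCookie}$), that the cookie is delivered to $b$ over an honest-IdP HTTPS connection (Lemma~\ref{lemma:k-does-not-leak-from-honest-browser-general} again), and that a browser only attaches such a cookie to a secure request to that same domain.

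\emph{Step 3: Assemble the conclusion, watching the corruption caveat.} Chaining Steps 1–2: $\mi{req}_\text{uc}$ is the $\str{/certreq}$ request, it carries a session cookie $\mi{sid}$ whose session in $\mi{idp}$ contains $i$, that session was established by a $\str{/auth}$ request emitted by $b$ using $\mi{secret} = \mapIDtoPLI(i)$, and only $b$ (among processes honest throughout $[s_0,s_j]$) could have emitted anything containing that cookie value or that secret. I expect the \textbf{main obstacle} to be the case distinction around corruption of $b$: the hypothesis of Theorem~\ref{thm:secur-fixed-syst}'s Condition~(A) only gives us that $b$ is \emph{not fully corrupted} in $s_j$, so $b$ could in principle be close-corrupted; I would need to check that a close-corrupted browser (which discards its nonces and session cookies, Line~\ref{line:key-not-used-anymore} of Algorithm~\ref{alg:browsermain}) cannot afterwards replay the secret or the session cookie, so either the $\str{/certreq}$ request genuinely came from the honest (pre-corruption) $b$, or the attacker would have had to obtain $\mi{secret}$ or $\mi{sid}$ — which Step 2 rules out. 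Carefully, one also has to handle the possibility that the secret is released to a malicious script: but a malicious script only runs in a document whose origin is \emph{not} a secure $\mi{idp}$ origin (since $\mi{idp}$ is honest, by Lemma~\ref{lemma:https-document-origin-general} it serves only the honest scripts under its secure origins), and the AutoFill mechanism releases the secret only to scripts at exactly the IdP's secure origin; a malicious IdP for a \emph{different} identity cannot receive $\mi{secret}$ because secrets for different IdPs are disjoint. This closes the lemma.
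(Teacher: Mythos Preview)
Your proposal is correct and follows essentially the same route as the paper: analyze the $\str{/certreq}$ branch of Algorithm~\ref{alg:idp-pidp}, trace the session entry back to a prior $\str{/auth}$ request, argue via Lemma~\ref{lemma:https-document-origin-general} and inspection of $\mi{script\_idp\_ad}$/$\mi{script\_idp\_pif}$ that the secret $\mapIDtoPLI(i)$ never leaks (hence $b$ emitted the auth request), then track the freshly issued session id via Corollary~\ref{cor:k-does-not-leak-from-honest-browser-general} and the secure/httpOnly cookie flags to conclude that $b$ also emitted $\mi{req}_\text{uc}$. Your explicit handling of the close-corruption case is slightly more careful than the paper's (which dispatches it implicitly via ``not fully corrupted''), but the overall decomposition and the lemmas invoked are the same.
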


\begin{proof}
  To reach Line~\ref{line:usage-of-signkey-pidp} of
  Algorithm~\ref{alg:idp-pidp}, several conditions have to be met for
  $\mi{req}_\text{uc}$: It must be an encrypted HTTPS POST request
  with the path $\str{/certreq}$. The body of $\mi{req}_\text{uc}$
  must be congruent to $\an{i, \pub(k_u)}$. The request must contain a
  cookie with the name $\str{sessionid}$ and some value
  $\mi{sessionid}$. This value must be a valid key for the dictionary
  $s'.\str{sessions}$ and
  \begin{align}\label{eq:idInSessionAuthIds}
    i \inPairing s'.\str{sessions}[\mi{sessionid}]\ .
  \end{align}

  Initially, $s'.\str{sessions}$ is empty. It is only populated in
  Line~\ref{line:populate-sessions-pidp} of
  Algorithm~\ref{alg:idp-pidp}. This line must have been executed in a
  previous processing step of the following form:
  \begin{align}\label{eq:request-alpha-to-pidp}
    s_{\alpha-1} \xrightarrow[\mi{idp} \rightarrow
    \{(x{:}a_{\mi{idp}}{:}\mi{resp}_\text{auth})\}]{(a_{\mi{idp}}{:}x{:}\mi{req}_\text{auth})
      \rightarrow \mi{idp}} s_{\alpha}
  \end{align}
  (for some addresses $x$, $a_{\mi{idp}}$ with $s_{\alpha} <
  s_{\beta}$). In this step, $s'.\str{sessions}$ was populated with a
  new entry for the session id $\mi{sessionid}$.

  From Algorithm~\ref{alg:idp-pidp} we can see that
  $\mi{req}_\text{auth}$ must meet the following conditions: It must
  be an HTTPS POST request, must contain
  a specific Origin header and its body must contain a pair
  $\an{i_\text{in}, \mi{secret}_\text{in}}$ such that the id/password
  combination matches a combination stored in
  $S_{\alpha-1}(\mi{idp}).\str{users}$. As we have that
  $S_{\alpha-1}(\mi{idp}).\str{users} = S_{0}(\mi{idp}).\str{users}$
  and with the initial definition
  \begin{align}
    S_{0}(\mi{idp}).\str{users} = \an{\{\an{s, \an{
          \mathsf{IDsofSecret}(s) }} | \RPSecrets^i\}}
  \end{align}
  we can see that $i_\text{in} \in
  \mathsf{IDsofSecret}(\mi{secret}_\text{in})$. As the list of
  authenticated ids in the session is then (in
  Line~\ref{line:populate-sessions-pidp} of
  Algorithm~\ref{alg:idp-pidp}) populated with
  $\mathsf{IDsofSecret}(\mi{secret}_\text{in})$ and with
  (\ref{eq:idInSessionAuthIds}) we have that $\mi{i} \in
  \mathsf{IDsofSecret}(\mi{secret}_\text{in})$.  Now,
  $\mathsf{IDsofSecret}$ assigns the IDs to their secrets according to
  $\mathsf{secretOfID}$, i.e., it must hold that
  \begin{align}\label{eq:idMatchesSecret}
    \mapIDtoPLI(\mi{i}) = \mi{secret}_\text{in}\ .
  \end{align}
  This secret can be owned by at most one browser, and according to
  the definitions of the initial knowledge of the DY processes in
  \ref{sec:analysisbrowserid-pidp}, it is initially only known to the
  owner of the secret $\mapPLItoOwner(\mi{secret}_\text{in})$ (see
  Section~\ref{sec:browsers-pidp}) and to one specific IdP (see
  Section~\ref{app:idps}), in this case $\mi{i}_\text{domain} \in
  \mapDomain(\mi{idp})$ (because otherwise, $\mi{idp}$ would not
  accept this ID).

  From Algorithm~\ref{alg:idp-pidp} we can see that the IdP never uses
  this secret to create messages as long as it is honest, which it is
  by precondition.

  With (\ref{eq:idMatchesSecret}) we see that initially, only
  $\mapPLItoOwner(\mapIDtoPLI(\mi{i})) = \mapIDtoOwner(\mi{i})$ knows
  the secret $\mi{secret}_\text{in}$, which, by assumption, is not
  fully corrupted in $s_j$, and thus, with the request order given for
  (\ref{eq:request-beta-to-pidp}) and (\ref{eq:request-alpha-to-pidp})
  is not fully corrupted in $s_\alpha$. (Once fully corrupted,
  browsers stay fully corrupted.)

  (*): Honest browsers release secrets only to scripts that are loaded
  from a specific origin. In this case, according to the initial state
  given in Section~\ref{sec:browsers-pidp}, the secret
  $\mapIDtoPLI(\mi{i})$ is only released to scripts from the origin
  $\an{i_\text{domain}, \https}$. For any such script (or document),
  with Lemma~\ref{lemma:https-document-origin-general} and the
  definition of the browser's key mapping in
  Section~\ref{sec:browsers-pidp}, we can see that any script that has
  access to the secret was sent by $\mi{idp}$. This DY process is also
  the governor of $i$, which is, by assumption, not
  corrupted. Therefore, $\mi{idp}$ can only deliver either the script
  $\mi{script\_idp\_pif}$ or the script $\mi{script\_idp\_ad}$. We can
  now check, that both scripts, running in a browser, never send this
  secret to any other DY process than $\mi{idp}$, and trigger only
  encrypted requests to do so.

  In $\mi{script\_idp\_pif}$ (Algorithm~\ref{alg:scriptidppif-pidp}),
  the subterm $\mi{secret}$ of the state is not used at all;
  therefore, the script triggers no outgoing message containing the
  secret at all.

  In $\mi{script\_idp\_ad}$ (Algorithm~\ref{alg:scriptidpad-pidp}),
  $\mi{secret}$ is only used as a part of a an HTTP request to the
  document's own origin (which therefore is the origin for which the
  secret is stored in the browser's list of secrets, which therefore
  must be $\an{i_\text{domain}, \https}$). The request's data is not
  stored in the script's state.

  We now know that all entities that have access to $\mi{secret}$ (the
  browser $b$ and the IdP $\mi{idp}$) never leak it. As $\mi{idp}$
  never creates any HTTP(S) requests, $b$ must have created
  $\mi{req}_\text{auth}$ before the processing step $s_{\alpha-1}
  \rightarrow s_{\alpha}$.

  In this processing step, $\mi{idp}$ creates a new session id
  ($\mi{sessionid}$). This id is sent out only once (in
  Line~\ref{line:send-auth-response-pidp} of
  Algorithm~\ref{alg:idp-pidp}), which, in our case, is
  $\mi{resp}_\text{auth}$. With
  Corollary~\ref{cor:k-does-not-leak-from-honest-browser-general} we
  can see that from this (encrypted) response $\mi{resp}_\text{auth}$,
  only $b$ can derive the contents, especially the contents of the
  $\cSetCookie$ header. As in $b$, the cookie is stored as a
  \emph{secure}, \emph{HTTP only} cookie, $b$ releases the contents of
  this cookie only as a $\str{Cookie}$ header to the origin
  $\an{\mi{i}_\text{domain}, \https}$. Given the keymapping in $b$'s
  state, requests to this origin are handled by $\mi{idp}$, and with
  Algorithm~\ref{alg:idp-pidp} it is easy to see that the
  $\str{Cookie}$ header is only used for validating the UC request,
  but is not used anywhere else. All in all, $b$ and $\mi{idp}$ do not
  leak the session id $\mi{sessionid}$.

  As $\mi{sessionid}$ is an important part of $\mi{req}_\text{uc}$, we
  can see that this request must have been emitted by $b$. \qed
\end{proof}

\begin{lemma}\label{lemma:k-is-from-b-pidp}
  The secret key $k_u$ was chosen by the browser $b$ from its own
  nonces, i.e., $k_u \subset N^b$.
\end{lemma}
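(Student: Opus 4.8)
The plan is to combine Lemma~\ref{lemma:req-uc-is-from-b-pidp} with the generic origin properties of Appendix~\ref{app:generalproperties} and then trace the public key $\pub(k_u)$ backwards through the honest BrowserID scripts. Since only $\mi{idp}=\mapGovernor(i)$ can create terms of the form $\sig{\cdot}{k_\text{sign}}$ (as shown in the ongoing Condition~A argument) and $\mi{uc}=\sig{\an{i,\pub(k_u)}}{k_\text{sign}}$ occurs inside $c\subseteq\mi{req}_\text{cap}$, there is a processing step before $s_j$ in which $\mi{idp}$ emits $\mi{uc}$; by Lemma~\ref{lemma:req-uc-is-from-b-pidp} the triggering request $\mi{req}_\text{uc}$ was emitted by $b$, and it is an HTTPS $\mPost$ request to $\str{/certreq}$ at a domain $\mi{dom}\in\mapDomain(\mi{idp})$ with body congruent to $\an{i,\pub(k_u)}$. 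A browser emits an HTTP(S) request only as the effect of a script command ($\tHref$, $\tIframe$, $\tForm$, $\tXMLHTTPRequest$), of the ``enter a URL'' action, or as a $\str{Location}$-redirect of an earlier request; the URL-typed commands and the URL action yield $\mGet$ requests with empty body, and a redirect preserves method and body except for a $303$ rewrite, which empties it. As $\mi{req}_\text{uc}$ is a $\mPost$ request with non-empty body, it (or the request at the head of its redirect chain) was issued by some script in a document $d$ of $b$ via a $\tForm$ or $\tXMLHTTPRequest$ command whose data component is congruent to $\an{i,\pub(k_u)}$.

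I would then identify that script as $\mi{script\_idp\_pif}$ running in a document loaded from $\mi{idp}$. If the command was $\tXMLHTTPRequest$, the same-origin restriction forces $d$'s origin to be $\an{\mi{dom},\https}$ (HTTP is excluded since IdPs only accept SSL), and Lemma~\ref{lemma:https-document-origin-general} together with $b$'s immutable key mapping shows that $d$'s script was extracted from an HTTPS response created by $\mi{idp}$; being honest, $\mi{idp}$ serves only $\mi{script\_idp\_pif}$ and $\mi{script\_idp\_ad}$, and inspection of Algorithms~\ref{alg:scriptidppif-pidp} and~\ref{alg:scriptidpad-pidp} shows that only $\mi{script\_idp\_pif}$ ever issues a request to $\str{/certreq}$ (Line~\ref{line:send-req-uc-pidp}), with body $\an{\mi{email},\mi{pubkey}}$ and $\mi{pubkey}\in s'.\str{pubkeys}$. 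From that instance I would trace $\str{pubkeys}$ backwards: it is populated only in Line~\ref{line:populate-pubkeys-pidp}, by appending $\proj{2}{m}$ of a postMessage $m$ received from the PIF's parent window with sender origin $\an{\mapDomain(\fAP{LPO}),\https}$ and $\proj{1}{m}\in s'.\str{genkeypairnonces}$. By Lemma~\ref{lemma:https-document-origin-general} again ($\fAP{LPO}$ honest) that parent document runs $\mi{script\_lpo\_cif}$ or $\mi{script\_lpo\_ld}$, and in both scripts a postMessage answering a $\str{genKeyPair}$ request carries $\pub(s'.\str{key})$ with $s'.\str{key}$ just drawn from the fresh nonce set the browser hands to the script, a subset of $N^b$ (Line~\ref{line:send-pm-with-nonce-cif-key-pidp} of Algorithm~\ref{alg:scriptlpocif-pidp} and its counterpart in Algorithm~\ref{alg:scriptlpold-pidp}). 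Since $\pub$ is a free symbol, $\pub(k_u)\equiv\pub(s'.\str{key})$ gives $k_u\equiv s'.\str{key}\in N^b$.

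The remaining case, and the main obstacle, is a $\tForm$ command, which unlike an XHR may target an origin different from $d$'s, so $d$ need not be at $\mi{idp}$'s origin. No honest script issues a $\tForm$ $\mPost$ of shape $\an{i,\pub(k)}$ to an IdP (one checks $\mi{script\_rp\_index}$, $\mi{script\_lpo\_cif}$, $\mi{script\_lpo\_ld}$, $\mi{script\_idp\_ad}$, and $\mi{script\_idp\_pif}$, which only POSTs to its own origin), so the only remaining possibility is $\Rasp$ in a document at an attacker-controlled or corrupted origin. I would rule this out by arguing that a UC obtained through such a cross-origin form POST is useless: the $\str{/certreq}$ response sent by $\mi{idp}$ has the bare signature $\mi{uc}$ as its body (Line~\ref{line:usage-of-signkey-pidp} of Algorithm~\ref{alg:idp-pidp}), so the browser installs it into a document whose script component is $\proj{1}{\mi{uc}}=\notdef\notin\mathbb{S}$; hence no script in $b$ can ever run, let alone read $\mi{uc}$ from, this document, and $b$ (honest, or else close-corrupted after its $\mi{windows}$ have been cleared, and not fully corrupted up to $s_j$ by the standing Condition~A assumption) never emits it. Since signatures are deterministic, every $\str{/certreq}$ request producing exactly this $\mi{uc}$ has body $\an{i,\pub(k_u)}$; if any of them went through $\mi{script\_idp\_pif}$ we conclude as in the previous paragraph, and if they were all such $\Rasp$ form POSTs then $\mi{uc}$ is derivable by no party and in particular cannot occur inside $c\subseteq\mi{req}_\text{cap}$, contradicting the choice of $c$. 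This closes the case analysis and, with it, the lemma.
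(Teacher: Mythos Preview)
Your argument is correct and, for the core XHR case, matches the paper's proof almost exactly: from Lemma~\ref{lemma:req-uc-is-from-b-pidp} you know $b$ emitted $\mi{req}_\text{uc}$, the same-origin restriction on XHRs plus Lemma~\ref{lemma:https-document-origin-general} pins the initiating document to $\mi{idp}$'s HTTPS origin, so the script is $\mi{script\_idp\_pif}$ (since $\mi{script\_idp\_ad}$ never posts to $\str{/certreq}$), and from there you trace $\str{pubkeys}$ back through the $\str{genKeyPair}$ postMessage to a fresh nonce chosen inside $\mi{script\_lpo\_cif}$ or $\mi{script\_lpo\_ld}$. That is precisely the paper's chain of reasoning.

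Where you go beyond the paper is in the cross-origin $\tForm$ case. The paper disposes of the question ``which script initiated $\mi{req}_\text{uc}$'' by writing ``with the same arguments as in (*)'', but (*) is an argument about the \emph{secret} being bound to the IdP's origin, whereas $\str{/certreq}$ carries no secret and has no Origin-header check---only the httpOnly session cookie, which the browser attaches regardless of the initiating document. So (*) does not literally apply, and a CSRF-style form POST from $\Rasp$ is not excluded by the paper's text. Your treatment of this case---observing that the $\str{/certreq}$ response body is the raw $\mi{uc}$, so the resulting document has $\mi{script}=\pi_1(\mi{uc})=\notdef$ and the UC is simply not stored anywhere in $b$'s state, hence cannot reach $\mi{req}_\text{cap}$---is a clean way to close this gap. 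In short: same route, but you plug a hole the paper hand-waves over.
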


\begin{proof}
  First of all, we know that for $\mi{idp}$ to generate $\mi{uc}$,
  there must be a processing step in $\rho$ of the form (described in
  Lemma~\ref{lemma:req-uc-is-from-b-pidp}):
  \begin{align}
    s_{\beta-1} \xrightarrow[\mi{idp} \rightarrow
    \{(x{:}a_{\mi{idp}}{:}\mi{resp}_\text{uc})\}]{(a_{\mi{idp}}{:}x{:}\mi{req}_\text{uc})
      \rightarrow \mi{idp}} s_{\beta}
  \end{align}
  (for some addresses $x$, $a_{\mi{idp}}$ with $s_{\beta} < s_j$,
  where $\mi{resp}_\text{uc}$ is an encrypted HTTP response with the
  body $\an{\mi{uc}}$). For the request $\mi{req}_\text{uc}$, the
  method must be $\mPost$ and the path component must be
  $\str{/certreq}$.

  With Lemma~\ref{lemma:req-uc-is-from-b-pidp} we know that
  $\mi{req}_\text{uc}$ was emitted by $b$, which is honest at this
  point in the run. With the same arguments as in (*) we can see that
  either $\mi{script\_idp\_pif}$ or the script $\mi{script\_idp\_ad}$
  initiated $\mi{req}_\text{uc}$.

  For $\mi{script\_idp\_ad}$ it is easy to see that this script never
  sends a POST request to $\mi{idp}$.

  The script $\mi{script\_idp\_pif}$ can only send a POST request to
  $\str{/certreq}$ in Line~\ref{line:send-req-uc-pidp} of
  Algorithm~\ref{alg:scriptidppif-pidp}. In this case, the public key
  is chosen from the subterm $\str{pubkeys}$ of the script's
  state. This subterm is only populated in
  Line~\ref{line:populate-pubkeys-pidp} of
  Algorithm~\ref{alg:scriptidppif-pidp}. It can only be populated by a
  postMessage $\mi{pm}$ from an immediate parent window and from the
  origin $\an{\mapDomain(\fAP{LPO}), \https}$ (given how a browser
  checks and transmits postMessages, see
  Line~\ref{line:append-pm-to-scriptinput-condition}f.\ of
  Algorithm~\ref{alg:runscript}). Further, the message in $\mi{pm}$
  must be of the form $\an{n, \pub(k_u)}$ where $n$ is a nonce that
  was freshly chosen for a $\an{\str{genKeyPair}, n}$ postMessage in
  Line~\ref{line:start-genkeypair-pidp} of
  Algorithm~\ref{alg:scriptidppif-pidp}.

  Given that $b$'s keymapping assigns the private key of LPO to the
  domain of LPO and with Lemma~\ref{lemma:https-script-origin-general}
  we see that the only scripts that can send such a postMessage are
  $\mi{script\_lpo\_cif}$ and $\mi{script\_lpo\_ld}$.

  In the script $\mi{script\_lpo\_cif}$
  (Algorithm~\ref{alg:scriptlpocif-pidp}), postMessages of the form of
  $\mi{pm}$ can only be sent in
  Line~\ref{line:send-pm-with-nonce-cif-key-pidp} (the message sent in
  Line~\ref{line:send-pm-with-nonce-cif-email-pidp} would not carry
  the correct nonce for a response to a $\str{genKeyPair}$ message).
  
  The same holds true for the script $\mi{script\_lpo\_ld}$
  (Algorithm~\ref{alg:scriptlpocif-pidp}).

  Therefore, the key $k_u$ is a nonce that was chosen from the
  browser's nonces. \qed
\end{proof}

\begin{lemma}\label{lemma:ku-does-not-leak-from-b-pidp}
   $k_u$ does not leak from $b$.
\end{lemma}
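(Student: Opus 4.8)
The plan is to trace every channel through which the value $k_u$ could leave the browser $b$, using the preceding lemmas to localize where $b$ first introduced $k_u$ into its state. By Lemma~\ref{lemma:k-is-from-b-pidp} we have $k_u \in N^b$, and, from the proof of that lemma, $k_u$ was chosen in the $\str{genKeyPair}$ branch of either $\mi{script\_lpo\_cif}$ (Algorithm~\ref{alg:scriptlpocif-pidp}) or $\mi{script\_lpo\_ld}$ (Algorithm~\ref{alg:scriptlpold-pidp}), running in a document of $b$ whose origin is $\mathsf{origin}_\LPO = \an{\mapDomain(\fAP{LPO}),\https}$. Since $\fAP{LPO}$ is honest and $b$'s key mapping binds $\mapDomain(\fAP{LPO})$ to $\fAP{LPO}$'s SSL key, Lemma~\ref{lemma:https-document-origin-general} implies that every document with origin $\mathsf{origin}_\LPO$ in $b$'s state carries one of the two scripts that $\fAP{LPO}$ actually serves, i.e.\ $\mi{script\_lpo\_cif}$ or $\mi{script\_lpo\_ld}$.

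First I would show that, as long as $b$ is not fully corrupted, neither $b$ nor those two scripts emit a term from which $k_u$ is derivable. On the script side, inspecting Algorithms~\ref{alg:scriptlpocif-pidp} and~\ref{alg:scriptlpold-pidp} shows that the subterm $s'.\str{key}$ is used only to build the public key $\pub(s'.\str{key})$ returned to the PIF in a postMessage and the identity assertions $\sig{\cdot}{s'.\str{key}}$ that are later wrapped into CAPs and sent out; in particular it is never written to cookies, localStorage, or sessionStorage, and never placed verbatim into any postMessage body or XHR payload. By the equational theory, $k_u$ cannot be derived from $\pub(k_u)$ nor from any $\sig{x}{k_u}$, so none of these outputs leak $k_u$. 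On the browser side, $k_u$ occurs in $b$'s state only inside the $\str{key}$ component of the scriptstate of some $\mathsf{origin}_\LPO$-document and in the bookkeeping list $\comp{s}{nonces}$; the scriptstate of a document is delivered only to the script of that very document in $\mathsf{RUNSCRIPT}$, while other scripts receive only the cleaned window tree (function $\mathsf{Clean}$), in which a non-same-origin document is reduced to a limited document holding just its nonce and subwindow list.

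It then remains to treat corruption of $b$. Full corruption of $b$ is excluded throughout the prefix $s_0,\dots,s_j$ by the standing assumption of the proof of Condition~\textbf{(A)}. If $b$ becomes close-corrupted in some step within this prefix, Algorithm~\ref{alg:browsermain} wipes $\comp{s}{windows}$ (hence all scriptstates, including the one holding $k_u$) and $\comp{s}{sessionStorage}$ before the state is handed to the attacker, and it removes the used nonces — in particular $k_u$ — from the set $N^{\text{clean}}$ from which the corrupted browser may derive its outputs; since $k_u$ never entered localStorage or a persisting cookie, the close-corrupted $b$ cannot derive $k_u$ and acts as an attacker that does not know it. Combining the cases, $k_u$ is never derivably contained in a message emitted by $b$, i.e.\ $k_u$ does not leak from $b$.

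The main obstacle is the exhaustive case distinction over the branches of $\mi{script\_lpo\_cif}$ and $\mi{script\_lpo\_ld}$: one must check every use of $s'.\str{key}$ to confirm it escapes only under $\pub(\cdot)$ or $\sig{\cdot}{\cdot}$, and, crucially, that it is never stored in localStorage — which, unlike sessionStorage and the window tree, survives a close-corruption and would therefore leak the key to the attacker.
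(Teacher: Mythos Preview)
Your proposal is correct and follows essentially the same approach as the paper: localize $k_u$ to the $\str{key}$ subterm of the scriptstate of $\mi{script\_lpo\_cif}$ or $\mi{script\_lpo\_ld}$, observe that it escapes the script only under $\pub(\cdot)$ or $\sig{\cdot}{\cdot}$, and note that no other script can run in the LPO origin so the honest browser never hands that scriptstate to anyone else. Your write-up is in fact more explicit than the paper's four-sentence proof---you spell out the equational-theory argument for why $\pub(k_u)$ and $\sig{x}{k_u}$ do not reveal $k_u$, and you separately treat the close-corruption case (the paper's proof simply says ``the (honest) browser $b$ does not leak the script's state'' and leaves it at that).
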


\begin{proof}
  As we have seen above, the key $k_u$ was chosen either in the script
  $\mi{script\_lpo\_cif}$ or in the script $\mi{script\_lpo\_ld}$
  running in the honest browser $b$.
  
  In both scripts, any nonce that is chosen from the script's
  $\mi{nonces}$ will not be given to the script (as part of
  $\mi{nonces}$) by the browser again, thus, the nonce was chosen
  freshly. Further, the nonce is stored in the subterm $\str{key}$ of
  the script's state and (besides the derivation of the public key) is
  only used to sign IAs.

  There are no other scripts running in the origin of
  $\an{\mapDomain(\fAP{LPO}), \https}$. The (honest) browser $b$ does
  not leak the script's state. Therefore, $k_u$ does not leak from
  $b$. \qed
\end{proof}

With Lemma~\ref{lemma:req-uc-is-from-b-pidp},
\ref{lemma:k-is-from-b-pidp}, and
\ref{lemma:ku-does-not-leak-from-b-pidp}, we can see that only $b$
knows $k_u$ and the attacker cannot know $k_u$. Therefore, only $b$
can create the $\mi{ia} = \sig{\an{d_r, \https}}{k_u}$. As $k_u$ is
only accessible to scripts with the origin $\an{\mapDomain(\fAP{LPO}),
  \https}$, only the script $\mi{script\_lpo\_cif}$ or the script
$\mi{script\_lpo\_ld}$ can create $\sig{\an{d_r, \https}}{k_u}$. In
both scripts, after creation, $\mi{ia}$ is sent in postMessage only to
scripts that have the origin for which $\mi{ia}$ was created
($=\an{d_r, \https}$). With
Lemma~\ref{lemma:https-script-origin-general} and the definition of
relying parties (see Algorithm~\ref{alg:rp-pidp}) we see, that the
only potential receiver is $\mi{script\_rp\_index}$.

After receiving this $\str{response}$ postMessage,
$\mi{script\_rp\_index}$ stores the UC and the IA in the subterm
called $\str{cap}$ of its scriptstate (see
Algorithm~\ref{alg:scriptrpindex-pidp},
Line~\ref{line:set-cap-pidp}). After doing so, this subterm is read
only in Line~\ref{line:use-cap-1-pidp} (where only the identity is
extracted) and in Line~\ref{line:use-cap-2-pidp}. There, the $\mi{ia}$
is sent to $r$ (in the encrypted request $\mi{req}_\text{cap}$).

The RP $r$, which is not corrupted, and the browser $b$ do not leak
$\mi{ia}$. After receiving $\mi{ia}$, $r$ sends the newly created
service token $\an{n,i}$ to $b$, which ignores it (see
Algorithm~\ref{alg:scriptrpindex-pidp}
Line~\ref{line:ignore-service-token-pidp}f.). Therefore, $b$ and $r$ do
not leak $\an{n,i}$.

Therefore, the attacker cannot know $\an{n,i}$ in $S_j$, i.e.,
$\an{n,i} \not\in d_{N^\fAP{attacker}}(S_j(\fAP{attacker}))$. This is
a contradiction to our assumption.  \qed

\subsection{Condition B}

Similar to before, we assume that Condition B does not hold and lead
this to a contradiction. We therefore make the following assumption:
There is a run $\rho$ of $\bidwebsystem$, some state $s_j = (S_j,
E_j)$ in $\rho$, some $r\in \fAP{RP}$ that is honest in $S_j$, some RP
service token of the form $\an{n,i}$ recorded in $r$ in the state
$S_j(r)$, the request corresponding to $\an{n,i}$ was sent by some
$b\in \fAP{B}$ which is honest in $S_j$, and $b$ does not own $i$.

By definition of RPs, for $\an{n,i}$ there exists a
corresponding HTTPS request received by $r$, which we call
$\mi{req}_\text{cap}$, and a corresponding response
$\mi{resp}_\text{cap}$. The request must contain a valid CAP $c$ and
must have been sent by some atomic process $p$ to $r$. The response
must contain $\an{n,i}$ and it must be encrypted by some symmetric
encryption key $k$ sent in $\mi{req}_\text{cap}$.

In particular, it follows that the request and the response must be of
the following form, where $d_r \in \mathsf{dom}(r)$ is the domain of
$r$, $n_\text{cap}, k \in \nonces$ are some nonces, $\mi{path}$,
$\mi{params} \in \terms$, $c$ is some valid CAP, and $\mi{sts}$ is the
Strict-Transport-Security header (as in the definition of RP's
relation):

\begin{align}
  \label{eq:proofreqcapB-pidp} \mi{req}_\text{cap} &=
  \mathsf{enc}_\mathsf{a}(\langle \hreq{ nonce=n_\text{cap}, method=\mPost,
      xhost=d_r, path=\mi{path}, parameters=\mi{params}, headers=[\str{Origin}:
      \an{d_r, \https}], xbody=c},\nonumber\\ &\hspace{4.5em}k\rangle, \pub(\mapKey(d_r)))\\
  \label{eq:proofrespcaB-pidp} \mi{resp}_\text{cap} &=
  \ehrespWithVariable{\hresp{ nonce=n_\text{cap}, status=200,
      headers=\an{\mi{sts}}, xbody=\an{n,i}}}{k}
\end{align}
Moreover, there must exist a processing step of the following form,
where $m \leq j$, $a_r \in \mapAddresstoAP(r)$, and $x$ is some
address:
\[ s_{m-1} \xrightarrow[r \rightarrow
\{(x{:}a_r{:}\mi{resp}_\text{cap})\}]{(a_r{:}x{:}\mi{req}_\text{cap})
  \rightarrow r} s_{m}\enspace . \]

From the assumption and the definition of RPs it follows that $c$ is
of the following form:
\begin{align*}
  c &= \an{\mi{uc}, \mi{ia}}\\
  &\equiv \an{\sig{\an{i, \pub(k_u)}}{k_\text{sign}}, \sig{\an{d_r,
        \https}}{k_u}}
\end{align*}
where $k_u$ and $k_\text{sign}$ are some private keys. When we write
$i = \an{i_\text{name}, i_\text{domain}}$, we have that:
\begin{align*}
  c \equiv \an{\sig{\an{\an{i_\text{name}, i_\text{domain}},
        \pub(k_u)}}{k_\text{sign}}, \sig{\an{d_r, \https}}{k_u}}\ .
\end{align*}

With Lemma~\ref{lemma:https-script-origin-general} we see that this
request was initiated by a script that $b$ extracted from an HTTPS
response by $r$. The only script that $r$ sends in its responses is
$\mi{script\_rp\_index}$.

In this script (Algorithm~\ref{app:scriptrpindex-pidp}), the
only place where a request is initiated is in
Line~\ref{line:send-req-uc-pidp}. We can see that the cap $c$ is taken
from the script's state, i.e., $s'.\mi{cap} \equiv c$ before the
execution of Line~\ref{line:send-req-uc-pidp} must hold. Initially,
this term is empty, therefore the value must have been set during the
prior execution of the script. This happens in
Line~\ref{line:set-cap-pidp} and in Line~\ref{line:set-cap-2-pidp} of
the algorithm. For both lines to be executed, there must arrive a
postMessage at $\mi{script\_rp\_index}$ (either a $\mathtt{login}$ or
a $\mi{response}$ postMessage) from the origin of LPO. 

With Lemma~\ref{lemma:https-document-origin-general},
Lemma~\ref{lemma:https-script-origin-general}, and the definition of
the web browser, we can see that the message must indeed come from one
of LPO's scripts, that is, either $\mi{script\_lpo\_ld}$ or
$\mi{script\_lpo\_cif}$. Before we proceed by showing that both
scripts never send a UC for an identity that is not owned by browser
$b$ to the script $\mi{script\_rp\_index}$ (and later to $r$), we
first proof the following lemma:

\begin{lemma}\label{app:lemma-value-of-semail}
  The value of $s'.\str{email}$ in $\mi{script\_lpo\_ld}$ is always
  either one of the browser's identities or empty.
\end{lemma}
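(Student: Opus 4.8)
\textbf{Proof proposal for Lemma~\ref{app:lemma-value-of-semail}.}
The plan is to trace every write to the subterm $s'.\str{email}$ in the algorithm for $\mi{script\_lpo\_ld}$ (Algorithm~\ref{alg:scriptlpold-pidp}) and to show that each such write stores either an element of the browser's identity list $\mi{ids}$ or the empty sequence $\an{}$; once all writes are accounted for, the invariant follows because the initial value is $\bot$ (which is excluded from the claim once the script has actually set $\str{email}$, or can be subsumed by ``$\bot$ means unset''; we will phrase the lemma as: whenever $s'.\str{email}\neq\bot$ it is one of $b$'s identities). First I would observe that, by inspection of Algorithm~\ref{alg:scriptlpold-pidp}, the only lines that assign to $s'.\str{email}$ are Line~\ref{line:set-semail-ld-pidp-1} and Line~\ref{line:set-semail-ld-pidp-2}, both inside the $\str{receiveContext}$ case. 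In Line~\ref{line:set-semail-ld-pidp-1} the value is chosen non-deterministically directly from $\mi{ids}$, and $\mi{ids}$ is exactly the list $\an{\mi{ID}^b}$ of identities owned by $b$ (by the definition of the browser's initial state in Section~\ref{sec:browsers-pidp} and the fact that the $\mi{ids}$ subterm of the browser state is never modified), so this case is immediate.

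The substantive case is Line~\ref{line:set-semail-ld-pidp-2}, where $s'.\str{email}$ is set to $\mi{localStorage}[\mi{idpnonce}]$ for $\mi{idpnonce} = \mi{sessionStorage}[\str{idpnonce}]$. Here I would argue as follows. The localStorage and sessionStorage under the origin $\an{\domLPO,\https}$ are only ever written by the two LPO scripts $\mi{script\_lpo\_cif}$ and $\mi{script\_lpo\_ld}$: by Lemma~\ref{lemma:https-document-origin-general} and the fact that $b$'s keymapping assigns LPO's public key to LPO's domain, any document in $b$ with origin $\an{\domLPO,\https}$ was delivered by the honest $\fAP{LPO}$, hence runs one of these two scripts; and an honest browser only hands the LPO-origin web storage to scripts running under that origin. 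Therefore it suffices to check all writes to these storage areas in both scripts. In $\mi{script\_lpo\_ld}$, the only write that puts a value under a nonce key in the localStorage while recording that same nonce under $\str{idpnonce}$ in the sessionStorage occurs in the $\str{raiseProvisioningFailure}$ branch, Lines~\ref{write-localstorage-idpnonce-ld-pidp}--\ref{line:write-sessionstorage-idpnonce-ld-pidp}, where the stored value is precisely $s'.\str{email}$; by the induction hypothesis this is already one of $b$'s identities (or $\an{}$). In $\mi{script\_lpo\_cif}$ the localStorage is only modified under the key $\str{siteInfo}$ (Lines~\ref{line:remove-siteinfo-cif-pidp} and~\ref{line:set-semail-cif-pidp} only read it, and there is no write of a nonce-keyed entry), and the sessionStorage is not written at all, so $\mi{script\_lpo\_cif}$ never creates an $\str{idpnonce}$ entry or a matching localStorage entry. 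Hence whenever Line~\ref{line:set-semail-ld-pidp-2} reads a non-empty value, that value was written by a previous run of $\mi{script\_lpo\_ld}$ as its own $s'.\str{email}$, which by induction is one of $b$'s identities; if no such entry exists the guard in Line~\ref{line:check-idpnonce-ld-pidp} routes control to Line~\ref{line:set-semail-ld-pidp-1} instead.

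This gives an induction over the sequence of activations of $\mi{script\_lpo\_ld}$ in a fixed document of $b$ (more precisely, over processing steps of the run): the base case is the initial state $\mi{initState_{ld}}$ with $s'.\str{email}=\bot$; the inductive step considers an activation and, by the case analysis above, shows that any new value of $s'.\str{email}$ lies in $\mi{ID}^b\cup\{\an{}\}$, using the induction hypothesis both for the value that may already be in $\str{email}$ and, via the storage argument, for values read back out of the localStorage that were written by earlier activations. I expect the main obstacle to be making the ``web storage under LPO's origin is only written by LPO's scripts'' step fully rigorous: one has to invoke Lemma~\ref{lemma:https-document-origin-general} to pin down the origin of every document that can touch that storage, confirm from the browser model (Algorithm~\ref{alg:runscript}) that scripts only receive and write back the localStorage/sessionStorage slice for their own origin, and then genuinely enumerate the (few) storage writes in $\mi{script\_lpo\_cif}$ and $\mi{script\_lpo\_ld}$ — a bookkeeping exercise, but the place where an overlooked write would break the argument. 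Everything else is a direct reading of the two algorithms.
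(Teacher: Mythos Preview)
Your proposal is correct and follows essentially the same approach as the paper: induction over script activations, with the two write sites for $s'.\str{email}$ handled exactly as the paper does, and the storage argument reduced to enumerating the writes performed by $\mi{script\_lpo\_cif}$ and $\mi{script\_lpo\_ld}$ under LPO's origin. You are in fact slightly more careful than the paper in two places: you explicitly invoke Lemma~\ref{lemma:https-document-origin-general} to justify that only LPO's two scripts can run under LPO's origin (the paper just asserts this), and you flag the $\bot$-vs-$\an{}$ discrepancy in the initial scriptstate, which the paper glosses over by calling the initial value ``empty.''
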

\begin{proof}

  We show this by induction:

  \emph{Base case:} The value of $s'.\str{email}$ is initially empty
  (see initial scriptstate).

  \emph{Induction step:} The value is set only in
  Lines~\ref{line:set-semail-ld-pidp-1}
  and~\ref{line:set-semail-ld-pidp-2}. In the first case, the identity
  is chosen non-deterministically from the browser's identities
  $\mi{ids}$, which are the identities that the browser owns (see
  Section~\ref{sec:browsers-pidp}). 

  In the second case, the value of $s'.\str{email}$ is taken from the
  localStorage, with the help of the key $\mi{idpnonce}$ that is taken
  from the sessionStorage. We can now show that what is retrieved from
  the localStorage is either empty or a previous value of
  $s'.\str{email}$:

  First, we show that the value of $\mi{idpnonce}$, taken from
  sessionStorage in Line~\ref{line:get-idpnonce-ld-pidp}, is always a
  nonce or empty: The browser's sessionStorage is separated by origins
  (and root windows), and therefore, only scripts under the origin of
  LPO have read or write access. Thus, the only two scripts that can
  possibly write the $\mi{idpnonce}$ value are $\mi{script\_lpo\_cif}$
  and $\mi{script\_lpo\_ld}$. The script $\mi{script\_lpo\_cif}$ does
  not write to sessionStorage. The script $\mi{script\_lpo\_ld}$ only
  writes to sessionStorage in
  Line~\ref{line:write-sessionstorage-idpnonce-ld-pidp}. It only
  writes a fresh nonce (chosen in
  Line~\ref{line:chose-idpnonce-ld-pidp}). Therefore, the value of
  $\mi{idpnonce}$ is always a nonce (or empty).

  As we are already in the second case of the if-statement in
  Line~\ref{line:check-idpnonce-ld-pidp} (we know that
  Line~\ref{line:set-semail-ld-pidp-2} was executed) $\mi{idpnonce}$
  cannot be empty and must be a nonce.

  Now, we can show that $\mi{localStorage}[\mi{idpnonce}]$ is either
  empty or a previous value of $s'.\str{email}$: The browser's
  localStorage is separated by origins, and therefore, only scripts
  under the origin of LPO have read or write access. As above, the
  only two scripts that can write values to the localStorage are
  $\mi{script\_lpo\_cif}$ and $\mi{script\_lpo\_ld}$. The script
  $\mi{script\_lpo\_cif}$ does not write to localStorage (it only
  removes subterms form localStorage in
  Line~\ref{line:remove-siteinfo-cif-pidp}). We can thus focus on
  $\mi{script\_lpo\_ld}$.

  There are two lines where this script writes to the localStorage:
  Lines~\ref{write-localstorage-siteinfo-ld-pidp}
  and~\ref{write-localstorage-idpnonce-ld-pidp}. We can safely ignore
  the first case, as it does not use a nonce as a key (but the fixed
  string $\str{siteInfo}$ instead). In the latter case, it writes a
  value of $s'.\str{email}$.

  This concludes the induction. \qed
\end{proof}

We can now show (for both scripts), that they never send a UC for an
identity that is not owned by the browser $b$:

\textbf{(I)} For $\mi{script\_lpo\_ld}$
(Algorithm~\ref{alg:scriptlpold-pidp}), it is easy to see that the UC
that is finally used to create a CAP for RP in
Line~\ref{line:use-uc-ld-pidp} is set in
Line~\ref{line:set-uc-ld-pidp}. There, the identity in the UC is
checked against the identity in $s'.\str{email}$ in the script's state
(and it is checked that $s'.\str{email}$ is not empty).

With Lemma~\ref{app:lemma-value-of-semail} and the observations above
we can conclude that in $\mi{script\_lpo\_ld}$, it is not possible
that a UC for an identity that the browser does not own is
accepted. Therefore, the UC that is sent to $\mi{script\_rp\_index}$
is issued for an identity of the browser $b$.

\textbf{(II)} For $\mi{script\_lpo\_cif}$
(Algorithm~\ref{alg:scriptlpocif-pidp}), it is easy to see that the UC
that is finally used in Line~\ref{line:use-uc-cif-pidp} is set in
Line~\ref{line:set-uc-cif-pidp}. There, the identity in the UC is
checked against the value of $s'.\str{email}$ (and, that
$s'.\str{email}$ is not empty). Initially, $s'.\str{email}$ is
empty. It is set only in Line~\ref{line:set-semail-cif-pidp}. There,
it is taken from the localStorage, using the key $\str{siteInfo}$. As
we have seen above, the only place where values are stored using this
key is in Line~\ref{write-localstorage-siteinfo-ld-pidp} of
Algorithm~\ref{alg:scriptlpold-pidp}. There, it is taken from the
script's $s'.\str{email}$, which, according to
Lemma~\ref{app:lemma-value-of-semail}, is either empty or one of the
browser's identities. Note that the value of $\str{siteInfo}$ is a
dictionary. The keys which are used inside of this dictionary are not
relevant here, but only the values.

Thus, in $\mi{script\_lpo\_cif}$, it is not possible that a UC for an
identity that the browser does not own is accepted. Therefore, the UC
that is sent to $\mi{script\_rp\_index}$ is issued for an identity of
the browser $b$.

With (I) and (II), we see that all UCs that are sent to
$\mi{script\_rp\_index}$ (and later to $r$) are issued for identities
of the browser $b$. This contradicts the assumption, which
proves that Condition B holds true.\qed

\newpage
\section{BrowserID Login Flow Overviews}
\input{figure-pidp-detailed}
\begin{figure}[p!]\centering
 \scriptsize{
 \begin{tikzpicture}[scale=0.9, every node/.style={scale=0.9}]

  \matrix [column sep={5.35pc,between origins}, row sep=1.8ex]
  {
 \node[anchor=base,fill=Gainsboro,rounded corners](lpo){LPO}; & \node[draw, anchor=base](rp-doc-init-1){RP-Doc}; & \node (cif-top){}; & \node (ld-top) {}; \\
 & \node(rp-doc-iframe-cif){}; & \node[draw,anchor=base](cif-iframe){CIF}; & \\
 \node(lpo-cif-init){}; & & \node(cif-init){}; & \\
 & \node(rp-doc-cif-rdy){}; & \node(cif-rdy){}; & \\ 
 & \node(rp-doc-cif-ld){}; & \node(cif-ld){}; & \\ 
 \node(lpo-cif-ctx){}; & & \node(cif-ctx){}; & \\
 & \node(rp-doc-logout){}; & \node(cif-logout){}; & \\
 & \node(rp-doc-dlg-run){}; & \node(cif-dlg-run){}; & \\
 & \node(rp-doc-iframe-dlg){}; & & \node[draw,anchor=base](dlg-iframe){LD}; \\ [0.5ex]
 \node(lpo-dlg-init){}; & & & \node(dlg-init){}; \\
 & \node(rp-doc-dlg-rdy){}; & & \node(dlg-rdy){}; \\
 & \node(rp-doc-dlg-req){}; & & \node(dlg-req){}; \\
 \node(lpo-dlg-ctx){}; & & & \node(dlg-ctx){}; \\ [1ex]
 \node(lpo-dlg-auth){}; & & & \node(dlg-auth){}; \\
 & & & \node(dlg-gen-key){}; \\ [2ex]
 \node(lpo-dlg-cert-1){}; & & & \node(dlg-cert-1){}; \\
 \node(lpo-dlg-cert-sign){}; & & & \\
 \node(lpo-dlg-cert-2){}; & & & \node(dlg-cert-2){}; \\
 & & & \node(dlg-gen-ia){}; \\
 & \node(rp-doc-dlg-res){}; & & \node(dlg-res){}; \\
 & \node(rp-doc-dlg-close){}; && \node[draw,fill=Gainsboro,anchor=base](dlg-close){/LD}; \\
 & \node(rp-doc-cif-liu){}; & \node(cif-liu){}; & \\
 & \node(rp-doc-cif-dlg-cpt){}; & \node(cif-dlg-cpt){}; & \\
 \node(lpo-cif-ctx2){}; & \node(rp-doc-vrfy){}; & \node(cif-ctx2){}; & \\
 \node(lpo-end){};& \node(rp-doc-end){}; & \node(cif-end){}; & \node(dlg-end){}; \\
  };

  \tikzstyle{xhrArrow} = [color=blue,decoration={markings, mark=at
    position 1 with {\arrow[color=blue]{triangle 45}}}, preaction
  = {decorate}]

  \draw [->,snake=snake,segment amplitude=0.2ex] (rp-doc-iframe-cif.40) to node [above=-2pt] { create} (cif-iframe);

  \draw [->] (cif-init.160) to node [above=-2pt]{ GET CIF } (lpo-cif-init.20);
  \draw [->] (lpo-cif-init.340) -- (cif-init.200);

  \draw [->,color=red,dashed] (cif-rdy) to node [above=-3pt]{ ready } (rp-doc-cif-rdy);

  \draw [->,color=red,dashed] (rp-doc-cif-ld) to node [above=-2pt]{ loaded } (cif-ld);

  \draw [->,color=blue,>=latex] (cif-ctx.160) to node [above=-2pt]{ GET session\_context } (lpo-cif-ctx.20);
  \draw [->,color=blue,>=latex] (lpo-cif-ctx.340) -- (cif-ctx.200);

  \draw [->,color=red,dashed] (cif-logout) to node [above=-3pt]{ logout } (rp-doc-logout);

  \draw [->,color=red,dashed] (rp-doc-dlg-run) to node [above=-3pt]{ dlgRun } (cif-dlg-run);

  \draw [->,snake=snake,segment amplitude=0.2ex] (rp-doc-iframe-dlg.40) to node [above=-2pt] { open } (dlg-iframe);

  \draw [->] (dlg-init.160) to node [above=-2pt]{ GET LD } (lpo-dlg-init.20);
  \draw [->] (lpo-dlg-init.340) -- (dlg-init.200);

  \draw [->,color=red,dashed] (dlg-rdy) to node [above=-3pt]{ ready } (rp-doc-dlg-rdy);

  \draw [->,color=red,dashed] (rp-doc-dlg-req) to node [above=-3pt]{ request} (dlg-req);

  \draw [->,color=blue,>=latex] (dlg-ctx.160) to node [above=-2pt]{ GET session\_context } (lpo-dlg-ctx.20);
  \draw [->,color=blue,>=latex] (lpo-dlg-ctx.340) -- (dlg-ctx.200);

  \draw [->,color=blue,>=latex] (dlg-auth.160) to node [above=-1.5pt]{ POST auth } (lpo-dlg-auth.20);
  \draw [->,color=blue,>=latex] (lpo-dlg-auth.340) -- (dlg-auth.200);

  \node [draw,fill=Gainsboro,rounded corners] at (dlg-gen-key) { gen. key pair };

  \draw [->,color=blue,>=latex] (dlg-cert-1) to node [above=-3pt]{ POST certreq} (lpo-dlg-cert-1);
  \node [draw,fill=White,rounded corners] (lpo-dlg-cert-sign-draw) at (lpo-dlg-cert-sign) { create UC };
  \draw [->,color=blue,>=latex] (lpo-dlg-cert-2) to node [above=-2pt]{UC} (dlg-cert-2);

  \node [draw,fill=Gainsboro,rounded corners] at (dlg-gen-ia) { gen. IA};

  \draw [->,color=red,dashed] (dlg-res) to node [above=-3pt]{ response} (rp-doc-dlg-res);

  \draw [->,snake=snake,segment amplitude=0.2ex] (rp-doc-dlg-close.40) to node [above=-2pt] { close} (dlg-close);

  \draw [->,color=red,dashed] (rp-doc-cif-liu) to node [above=-3pt]{ loggedInUser} (cif-liu);

  \draw [->,color=red,dashed] (rp-doc-cif-dlg-cpt) to node [above=-3pt]{ dlgCmplt} (cif-dlg-cpt);

  \draw [->,color=blue,>=latex] (cif-ctx2.160) to node [above=-2pt]{ GET session\_context } (lpo-cif-ctx2.20);
  \draw [->,color=blue,>=latex] (lpo-cif-ctx2.340) -- (cif-ctx2.200);

  \begin{pgfonlayer}{background}
   \node (rp-doc-a) [above of=rp-doc-init-1, node distance=2ex]{};
   \node (rp-doc-al) [left of=rp-doc-a, node distance=6ex]{};
   \node (dlg-end-b) [below of=dlg-end, node distance=2ex]{};
   \node (dlg-end-br) [right of=dlg-end-b, node distance=8ex]{};
   \filldraw [color=Gainsboro,rounded corners] (rp-doc-al) rectangle (dlg-end-br);

  \draw [color=gray] (rp-doc-init-1.270) -- (rp-doc-end);
  \draw [color=gray] (cif-iframe.270) -- (cif-end);
  \draw [color=gray] (dlg-iframe.270) -- (dlg-close);
  \draw [color=gray] (lpo.270) -- (lpo-end);

  \end{pgfonlayer}

\node at ($(cif-top)!0.9!(ld-top)$) {Browser};

 \end{tikzpicture}}

\raisebox{0.5ex}{\tikz{\draw [->] (0,0) -- (0.4,0);}}
   HTTPS messages, \raisebox{0.5ex}{\tikz{\draw
       [->,color=blue,>=latex] (0,0) -- (0.4,0);}} \xhrs (over HTTPS),
   \raisebox{0.5ex}{\tikz{\draw [->,color=red,dashed] (0,0) --
       (0.4,0);}} \pms, \raisebox{0.5ex}{\tikz{\draw
       [->,snake=snake,segment length=2ex,segment amplitude=0.2ex] (0,0) -- (0.4,0);}}
   browser commands

\caption{BrowserID secondary mode typical login flow
   overview. Similar abstraction level as in Figure~\ref{fig:browserid-lowlevel-pidp-detailed-1}.}
\label{fig:browserid-lowlevel-sidp}

\end{figure}

\end{document}
